\colorlet{mycolor}{blue}
\DeclareRobustCommand\onedot{\futurelet\@let@token\@onedot}
\def\@onedot{\ifx\@let@token.\else.\null\fi\xspace}
\def\ie{i.e.}
\newtheorem{theorem}{Theorem}
\newtheorem{lemma}{Lemma}
\newtheorem{definition}{Definition}
\newtheorem{remark}{Remark}
\newtheorem{assumption}{Assumption}
\newtheorem{proposition}{Proposition}
\newtheorem{example}{Example}
\DeclareMathOperator{\Log}{Log}
\DeclareMathOperator*{\argmin}{arg\,min}
\renewcommand{\cal}[1]{\mathcal{ #1 }}
\newcommand{\bb}[1]{\mathbb{ #1 }}
\newcommand{\grad}{\nabla}
\newcommand{\intersect}{\cap}
\newcommand{\R}{\bb{R}}
\DeclarePairedDelimiterX{\Set}[2]\{\}{%
\begin{document}

\title{Collision Avoidance for Convex Primitives via Differentiable Optimization Based High-Order Control Barrier Functions}

\author{Shiqing~Wei,~\IEEEmembership{Graduate~Student~Member,~IEEE,} Rooholla~Khorrambakht,~\IEEEmembership{Student~Member,~IEEE,}
Prashanth~Krishnamurthy,~\IEEEmembership{Member,~IEEE,}
Vinicius~Mariano~Gonçalves, and Farshad~Khorrami,~\IEEEmembership{Fellow,~IEEE}
\thanks{This work was supported in part by ARO grants W911NF-21-1-0155 and W911NF-22-1-0028, NSF CMMI grant 2208189, and by the New York University Abu Dhabi (NYUAD) Center for Artificial Intelligence and Robotics (CAIR), funded by Tamkeen under the NYUAD Research Institute Award CG010.} 
\thanks{Shiqing Wei, Rooholla Khorrambakht, Prashanth Krishnamurthy, and Farshad Khorrami are with Control/Robotics Research Laboratory, Electrical and Computer Engineering Department, Tandon School of Engineering, New York University, Brooklyn, NY 11201, USA. E-mail: \{shiqing.wei, rk4342, prashanth.krishnamurthy, khorrami\}@nyu.edu.}
\thanks{Vinicius Mariano Gonçalves is with the Escola de Engenharia, Universidade Federal de Minas Gerais (UFMG), Belo Horizonte MG, 31270-901, Brazil. E-mail: mariano@cpdee.ufmg.br.}
\thanks{Digital Object Identifier (DOI): xxxxx.}
}


\markboth{IEEE Journal}
{Wei \MakeLowercase{\textit{et al.}}: Collision Avoidance for Convex Primitives via Differentiable Optimization Based HOCBF}

\maketitle

\begin{abstract}
Ensuring the safety of dynamical systems is crucial, where collision avoidance is a primary concern. Recently, control barrier functions (CBFs) have emerged as an effective method to integrate safety constraints into control synthesis through optimization techniques. However, challenges persist when dealing with convex primitives and tasks requiring torque control, as well as the occurrence of unintended equilibria. This work addresses these challenges by introducing a high-order CBF (HOCBF) framework for collision avoidance among convex primitives. We transform nonconvex safety constraints into linear constraints by differentiable optimization and prove the high-order continuous differentiability. Then, we employ HOCBFs to accommodate torque control, enabling tasks involving forces or high dynamics. Additionally, we analyze the issue of spurious equilibria in high-order cases and propose a circulation mechanism to prevent the undesired equilibria on the boundary of the safe set. Finally, we validate our framework with three experiments on the Franka Research 3 robotic manipulator, demonstrating successful collision avoidance and the efficacy of the circulation mechanism.
\end{abstract}

\begin{IEEEkeywords}
Collision avoidance, quadratic programming, robot control.
\end{IEEEkeywords}

\IEEEpeerreviewmaketitle

\section{Introduction}
\subsection{Motivation}
The safety of dynamical systems has become increasingly important due to the growing demands in fields such as robotics, autonomous driving, and other safety-critical applications. While safety concerns differ across various applications, a common challenge is preventing collisions between the robot and the physical objects in the 3D world during task execution. Requiring a robot to stay out of a given region typically leads to a nonconvex constraint, which is known to be difficult to handle and challenging to use directly in convex optimization-based motion planning methods \cite{zhang2020optimization}. Many formulations resort to general nonlinear programming solvers or fall back to sampling-based planners. However, nonlinear solvers face a trade-off between the length of the preview horizon and computation time, and classical sampling-based planners \cite{kavraki1996probabilistic, lavalle2006planning} do not account for the dynamics of moving obstacles.

Control barrier functions (CBFs) \cite{ames2016control, xiao2021high, wei2024confidence} have attracted significant attention for enabling safety constraints to be imposed as linear inequalities in the control input for control-affine systems. These constraints can be further incorporated into an optimization problem, such as a quadratic programming (QP). As a result, given a nominal control, this QP formulation (also known as CBF-QP) efficiently computes a minimally invasive control from the given nominal control, ensuring that safety constraints are met. Under the assumption that the robot can sense its surroundings and formulate safety constraints, the system can quickly respond to dynamic or even unexpected scenes, a crucial capability for practical applications.

In this work, we model obstacles and robots (or their parts) as convex primitives and address collision avoidance for both velocity- and torque-controlled systems. The motivation for this work is twofold. First, existing CBF formulations for torque-controlled systems do not account for collisions between \textit{general convex primitives}. Most formulations (potentially high-order) rely on specific geometries such as point mass \cite{ohnishi2019barrier}, planes \cite{ohnishi2019barrier, murtaza2022safety}, spheres \cite{emam2021data, shi2023safety, koutras2023enforcing, wang2017safety, cavorsi2023multi}, high-order ellipsoids \cite{murtaza2022safety}, and capsules \cite{shi2023safety, koutras2023enforcing}, which can lead to over-conservatism and limited applicability. Some exceptions consider convex polygons and polytopes \cite{thirugnanam2022duality, wei2024diffocclusion}, or general convex forms \cite{dai2023safe, thirugnanam2023nonsmooth}, but are restricted to velocity control. To overcome these limitations, we propose a collision avoidance framework for a broad class of convex primitives by representing them with \textit{scaling functions} and differentiating through the \textit{minimal scaling factor} at which two primitives (one with a positive definite Hessian) intersect. This framework utilizes differentiable optimization to transform nonconvex safety constraints into linear constraints in a CBF-QP, enabling efficient solutions. We prove the high-order differentiability of the minimal scaling factor and formulate HOCBFs to support safety-critical torque control tasks.

Second, like first-order CBFs, HOCBFs are susceptible to classical spurious equilibria. Prior research has shown that first-order CBFs can introduce unintended equilibria \cite{reis2020control, mestres2022optimization, tan2024undesired, gonccalves2024control}. We show that similar issues arise in high-order cases and propose a circulation mechanism, inspired by \cite{gonccalves2024control}, to prevent undesired equilibria on the boundary of the safe set. Although perturbations (such as noise) may help dynamical systems escape from these equilibria to some extent, we have observed from real-world experiments on a robotic manipulator that a designated mechanism is necessary. Our mechanism adds an additional linear constraint to the CBF-QP, incurring only a negligible increase in the computational complexity. Unlike \cite{gonccalves2024control}, which addresses only single integrator systems, we generalize to high-order systems satisfying a regularity condition (e.g., fully actuated torque-controlled Lagrangian systems). Our circulation mechanism also applies to first-order cases, extending the results of \cite{gonccalves2024control}.

\subsection{Contributions}
The contributions of this work are as follows:
\begin{itemize}
    \item We develop a novel framework for collision avoidance between general convex primitives using HOCBFs. We represent convex primitives as scaling functions and prove that the minimal scaling factor, where two convex primitives (one with a positive definite Hessian) intersect, is $k$-times continuously differentiable if the scaling functions are $k+1$-times continuously differentiable\footnote{Code available at \url{https://github.com/shiqingw/DiffOpt-HOCBF-Pub}.}.
    \item We introduce a systematic approach to construct smooth scaling functions for convex primitives such as planes, polygons/polytopes, and ellipses/ellipsoids.
    \item We propose a HOCBF formulation based on the smoothness of the minimal scaling factor, identify spurious equilibria on the boundary of the safe set, and introduce a circulation mechanism to avoid them for fully actuated torque-controlled robotic systems.
    \item We demonstrate the effectiveness of our approach through three experiments on the Franka Research 3 robotic manipulator, showcasing successful collision avoidance and the effectiveness of the circulation mechanism.
\end{itemize}

This paper is organized as follows. Section~\ref{sec:related_work} presents related work. Section~\ref{sec:notations} summarizes the notations used in this paper. Section~\ref{sec:background} briefly recalls the HOCBF definition and related theoretical results. Section~\ref{sec:diff_opt} offers a detailed explanation of scaling functions and the conditions under which the minimal scaling factor is $k$-times continuously differentiable. Section~\ref{sec:scaling_funcs} outlines a systematic method for constructing smooth scaling functions for different convex primitives. Section~\ref{sec:hocbf} describes the HOCBF formulation and the circulation mechanism. Section~\ref{sec:experiments} presents three real-world examples that validate our approach. Finally, Section~\ref{sec:conclusion} provides concluding remarks.

\section{Related Work}\label{sec:related_work}
\subsection{Safe Planning and Control}
Classical sampling-based motion planning methods, such as the probabilistic roadmap (PRM) \cite{kavraki1996probabilistic} and rapidly-exploring random trees (RRT) \cite{lavalle2006planning}, generate a safe path by connecting independently and identically distributed (i.i.d.) random points in the configuration space. The works \cite{ahmad2022adaptive, manjunath2021safe} combine sampling-based planners with CBF techniques. Convex optimization-based approaches, such as those in \cite{liu2018convex, marcucci2023motion}, focus on identifying convex feasible sets within the safe configuration space or approximating the safe space as a union of convex regions. Other trajectory optimization methods, such as model predictive control (MPC) \cite{sathya2018embedded, zhang2020optimization}, cast collision avoidance as nonconvex constraints, which can be addressed using general nonlinear programming solvers. These optimization-based methods typically require discretized system dynamics. Alternatively, artificial potential fields (APF) \cite{kim1992real} and barrier functions \cite{tee2009barrier, ames2016control}, along with their variants, offer safe control strategies for continuous-time systems. See \cite{brunke2022safe} for a comprehensive review of safe planning and control methods in robot control and learning. 

\begin{figure}[t]
    \centering
    \begin{subfigure}[b]{0.24\textwidth}
        \centering
        \includegraphics[width=1.0\textwidth]{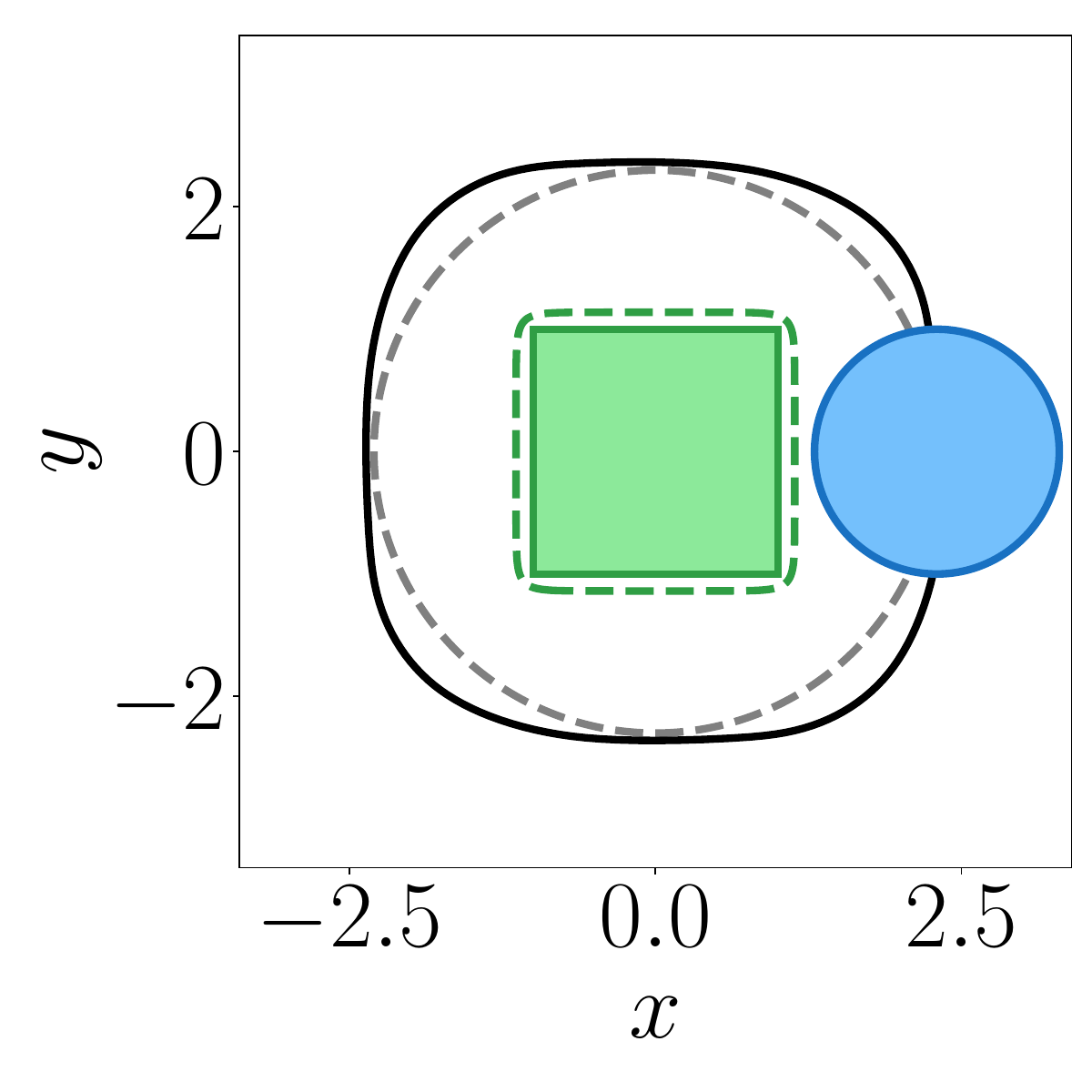}
    \end{subfigure}
    \begin{subfigure}[b]{0.24\textwidth}
        \centering
        \includegraphics[width=1.0\textwidth]{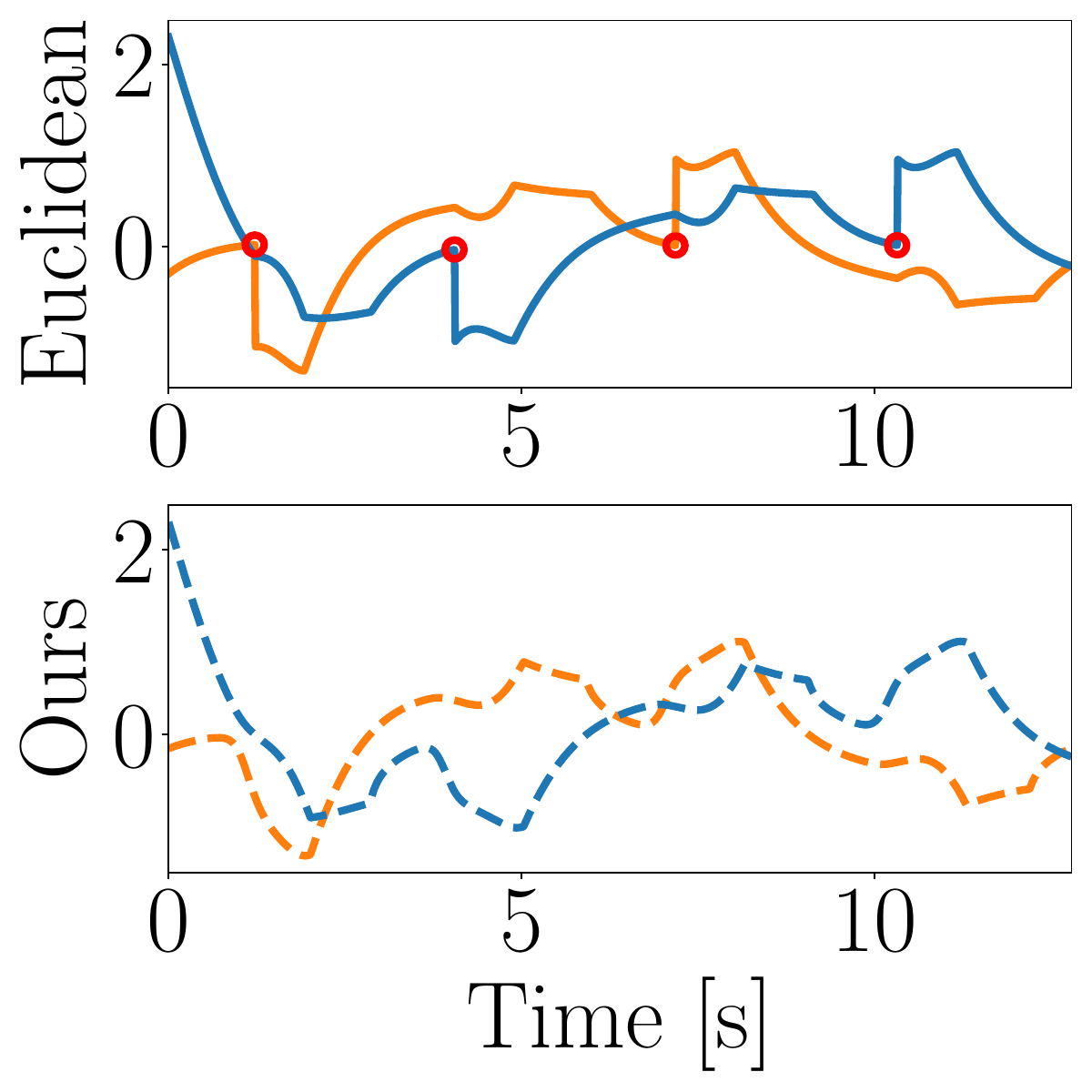}
    \end{subfigure}
    \caption{Nonsmoothness of Euclidean distance and resulting control discontinuities. Left: A circular robot of radius 1, governed by double integrator dynamics, tracks a reference trajectory (gray) while avoiding a square (green) of edge length 2 centered at $(0,0)$. The green dashed line shows our proposed smooth scaling function, and the black solid line shows the collision-free trajectory. Right: Controls based on Euclidean distance with HOCBF exhibit discontinuities (top), whereas controls using our metric are continuous (bottom).}
    \label{fig:nonsmoothness_euclidean}
\end{figure}

\subsection{Collision Avoidance for Convex Regions with CBFs}
CBFs have been widely applied in tasks such as single-robot and multi-robot collision avoidance \cite{ohnishi2019barrier, murtaza2022safety, thirugnanam2022duality, xiao2023barriernet, dai2023safe, shi2023safety, koutras2023enforcing, wei2024diffocclusion, wang2017safety, emam2021data, cavorsi2023multi, thirugnanam2023nonsmooth, long2024safe}. In existing studies, obstacles and robots are typically modeled using simple geometries such as point mass \cite{ohnishi2019barrier}, planes \cite{ohnishi2019barrier, murtaza2022safety}, spheres \cite{emam2021data, shi2023safety, koutras2023enforcing, wang2017safety, cavorsi2023multi}, high-order ellipsoids \cite{murtaza2022safety}, and capsules \cite{shi2023safety, koutras2023enforcing}. Some works have extended the analysis to collisions between convex polygons and polytopes \cite{thirugnanam2022duality, wei2024diffocclusion}. The work \cite{long2024safe} considers the collision between polygons and ellipses in 2D and proposes a first-order CBF. To handle more general convex shapes, researchers have employed differentiable optimization \cite{dai2023safe, wei2024diffocclusion, thirugnanam2023nonsmooth, tracy2023differentiable}, a class of optimization problems whose solutions are differentiable w.r.t. the configuration parameters. Based on the Karush–Kuhn–Tucker (KKT) conditions, these works differentiate through the minimal distance or scaling factor between two convex shapes. Continuous differentiability was established in \cite{dai2023safe, wei2024diffocclusion}, but only for first-order cases, limiting applicability to velocity control. In this work, we extend the analysis by proving the continuous differentiability of a minimal scaling factor between general convex shapes in higher-order cases, thereby enabling safe torque control. Although prior studies \cite{murtaza2022safety, dai2023safe, shi2023safety, koutras2023enforcing} applied CBFs to manipulator collision avoidance, our formulation applies broadly to tasks involving mobile robots or manipulators, and our circulation mechanism applies to any system satisfying a regularity condition (e.g., fully actuated Lagrangian systems). We remark that the classical exponential CBFs (ECBFs) \cite{nguyen2016exponential} and high-order CBFs (HOCBFs) \cite{xiao2021high} are general frameworks and do not formulate a smooth metric for collision avoidance. Moreover, we have introduced the circulation mechanism that reduces the possibility of getting stuck at spurious equilibria, which is not discussed in ECBFs or HOCBFs. We also note that the classical Euclidean distance can be nondifferentiable when multiple point pairs yield the same minimal distance \cite{gonccalves2024smooth}, and is not twice continuously differentiable, despite having a unique solution (see Fig.~\ref{fig:nonsmoothness_euclidean}).

\subsection{Avoiding Spurious Equilibria in CBF Formulation}
One limitation of CBFs is the possibility of causing the robot to enter deadlocks \cite{wang2017safety, cavorsi2023multi} or spurious equilibria \cite{reis2020control, mestres2022optimization, tan2024undesired, gonccalves2024control}, both referring to the situation where the robot is unable to continue moving despite having incomplete tasks. Deadlocks are more commonly seen in multi-robot systems where the safety constraints of different robots conflict with each other \cite{wang2017safety} or the overall team formation \cite{cavorsi2023multi}. Spurious equilibria, on the other hand, result from a more profound interplay between the CBF constraint and nominal control within the CBF-QP framework. These equilibria can sometimes be stable, causing the closed-loop system to converge toward them \cite{reis2020control}. Similar issues also exist in APF-based methods \cite{chang2004gyroscopic, gao2023non}, along with a discussion on the potential topological obstruction \cite{koditschek1987exact}. The studies \cite{reis2020control, mestres2022optimization, tan2024undesired} focus on the interaction between CBFs and control Lyapunov functions (CLFs). In \cite{reis2020control}, the authors rotate the level sets of the reference Lyapunov function using an additional virtual state to avoid the equilibria on the obstacle boundary. The work in \cite{tan2024undesired} leverages prior knowledge of a stable nominal controller to eliminate equilibria in the interior of the safe set. Unlike \cite{reis2020control, mestres2022optimization, tan2024undesired}, \cite{gonccalves2024control} considers general nominal control and introduces a circulation inequality as a constraint, forcing the system to explicitly navigate around obstacles under certain conditions. However, all these studies \cite{reis2020control, mestres2022optimization, tan2024undesired, gonccalves2024control} focus on first-order CBFs, which are not applicable when using torque control. In this work, we build upon the concept of ``circulation'' from previous studies \cite{chang2004gyroscopic, gao2023non, gonccalves2024control} and design a circulation mechanism that eliminates undesired equilibria on the boundary of the safe set for high-order cases. This is achieved by adding an extra linear constraint to the CBF-QP, which enhances the applicability of CBF approaches with minimal increase in complexity. 

\section{Notations} \label{sec:notations}
Let $\R$ be the set of real numbers, $\R_+$ positive real numbers, $\mathbb{Z}_+$ positive integers, and $\bb{S}_{++}^n$ symmetric positive definite $n$-by-$n$ matrices. $\lVert \cdot \rVert_p$ is the $\ell_p$ norm of a vector. $(\cdot)^\dagger$ is the pseudoinverse of a matrix. $I$ is the identity matrix with proper dimensions. A class $\cal{C}^k$ function is a function that has a continuous $k$-th order derivative. A class $\cal{K}$ function $\Gamma: [0, a) \rightarrow [0,\infty)$ is a continuous strictly increasing function with $\Gamma(0) = 0 $. For a $\cal{C}^1$ function $h: \R^n \rightarrow \R$, its gradient $\grad h$ is a column vector while the partial derivative $\frac{\partial h}{\partial x}$ is a row vector. For a $\cal{C}^1$ function $f: \R^n \rightarrow \R^m$, the partial derivative $\frac{\partial f}{\partial x}$ is a $m$-by-$n$ matrix. $L_f h (x)$ is the Lie derivative of a scalar function $h$ w.r.t. a vector field $f$ with $m=n$. $\Log_{\text{SO}(3)}$ is the SO(3) logarithm from the Lie group SO(3) to the Lie algebra $\mathfrak{so}(3)$. $\partial A$ denotes the boundary of a set $A$.

\section{Background}\label{sec:background}
In this section, we briefly recall the definition of HOCBFs and their theoretical properties. Consider the following control-affine system 
\begin{equation}\label{eq:affine_sys}
    \dot{x} = f(x) + g(x)u
\end{equation}
where $f: \R^{n_x} \rightarrow \R^{n_x}$ and $g: \R^{n_x} \rightarrow \R^{n_x \times n_u}$ are Lipschitz functions, $x \in \cal{X} \subset \R^{n_x}$ the states and $u \in \cal{U} \subset \R^{n_u}$ the control (where $\cal{X}$ is a compact subset of $\R^{n_x}$ and $\cal{U}$ is a compact set to model the control limits).  Let $h: \R^{n_x} \rightarrow \R$ be a function of relative degree $m$ for the system \eqref{eq:affine_sys}. Assume that 
$h$ is $\cal{C}^m$, and $f$ and $g$ are $\cal{C}^{m-1}$. Define
\begin{subequations}\label{eq:intermiate_barrier_funcs}
\begin{align}
&\psi_0 (x) = h(x),\\
&\psi_i (x) = \dot{\psi}_{i-1} (x) + \Gamma_i (\psi_{i-1} (x)), \ i \in \{1, ..., m-1\}, \\
&\psi_m (x,u) = \dot{\psi}_{m-1} (x) + \Gamma_m (\psi_{m-1} (x)),
\end{align}
\end{subequations}
where $\psi_i: \R^{n_x} \rightarrow \R$ for $i=1,..., m-1$, $\psi_m: \R^{n_x} \times \R^{n_u} \rightarrow \R$, and $\Gamma_i \in \cal{C}^{m-i}$ are class $\cal{K}$ functions. Define also a sequence of sets $C_i \subset \R^{n_x}$
\begin{equation}\label{eq:high_order_safe_sets}
    C_i = \{ x\in \R^{n_x} \mid \psi_{i-1} (x) \geq 0 \}, \quad i \in \{1, ..., m\}.
\end{equation}

\begin{definition}[HOCBFs \cite{xiao2021high}]
Let $\psi_i$ with $i \in \{0, 1, ..., m\}$ be defined by \eqref{eq:intermiate_barrier_funcs} and $C_i$ with $i \in \{1, ..., m\}$ by \eqref{eq:high_order_safe_sets}. A $\cal{C}^m$ function $h: \R^{n_x} \rightarrow \R$ is a HOCBF of relative degree $m$ for system \eqref{eq:affine_sys} if there exist functions $\Gamma_i$ that are of both class $\cal{C}^{m-i}$ and class $\cal{K}$ with $i \in \{1,..., m\}$ such that 
\begin{equation}
\begin{aligned}
    \sup_{u \in \cal{U}} \biggl[ L_f^m h (x) &+ L_g L_f^{m-1} h (x) u  \\
    &+ \sum_{i=0}^{m-1} L_f^i (\Gamma_{m-i} \circ \psi_{m-i-1}) (x) \biggr] \geq 0
\end{aligned}
\end{equation}
for all $x \in C_1 \intersect ... \intersect C_m$.
\end{definition}

Additionally, given the HOCBF $h$, define
\begin{equation}\label{eq:control_cbf}
\begin{aligned}
K_{\text{hocbf}}(x) = \biggl\{ u\in \cal{U} \mid L_f^m h (x) + L_g L_f^{m-1} h (x) u  \\
    + \sum_{i=0}^{m-1} L_f^i (\Gamma_{m-i} \circ \psi_{m-i-1}) (x)\geq 0 \biggr\}.
\end{aligned}
\end{equation}
Then, any Lipschitz continuous controller $u(t) \in  K_{\text{hocbf}}(x(t))$ renders $C_1 \intersect ... \intersect C_m$ forward-invariant for system \eqref{eq:affine_sys} if the initial condition $x_0 \in C_1 \intersect ... \intersect C_m$ \cite[Theorem 4]{xiao2021high}.

\section{Differentiable Optimization}\label{sec:diff_opt}
In this section, we consider a class of parameterized optimization problems and show that their solutions are $k$th-order continuously differentiable w.r.t. the underlying parameters.

\begin{definition}[Scaling functions \cite{wei2024diffocclusion}]\label{def:scaling_function}
    A class $\mathcal{C}^2$ function $\mathcal{F}_A: \R^{n_p} \times \R^{n_\theta} \rightarrow \R$ is a scaling function (with parameters $\theta \in \R^{n_\theta}$) for a closed set $A \subset \R^{n_p}$ with non-empty interior if $\mathcal{F}_A$ is convex in the first argument for each $\theta$, $A = \left\{ p \in \R^{n_p} \mid \mathcal{F}_A(p,\theta) \leq 1 \right\}$, and there exists $p \in A$ such that $F_A (p, \theta) < 1$ for each $\theta$.
\end{definition}

In Definition~\ref{def:scaling_function}, the scaling function $\cal{F}_A$ is an implicit representation of the set $A$ in the $n_p$-dimensional space. As a result of $A = \left\{ p \in \R^{n_p} \mid \mathcal{F}_A(p,\theta) \leq 1 \right\}$, we know that $\mathcal{F}_A(p, \theta) = 1$ for any $p \in \partial A$. In addition, $\mathcal{F}_A(p, \theta) < \infty$ for any $p \in \R^{n_p}$ as $\cal{F}_A$ is $\cal{C}^2$. For example, given a 2D (or 3D) ball of radius $r$, we can define its scaling function as $\mathcal{F}_A(p,\theta) = (p-\theta)^\top P (p-\theta)$ where $\theta$ is the center of the ball and $P = I/r^2$. The set $A$ depends on $\theta$, but we omit its dependency for brevity, since we mainly focus on the position and orientation of the robots and obstacles. It is important to note that the scaling functions do not always exist for any given set. However, for a given scaling function $\cal{F}_A$, the convexity of $\cal{F}_A$ (w.r.t. $p$) implies the convexity of the associated set $A$, as $A$ is the 1-sublevel set of $\cal{F}_A$.

Let $\cal{F}_A$ and $\cal{F}_B$ be the scaling functions of the sets $A$ and $B$ and assume that $A \intersect B = \varnothing$. Then, the minimal scaling factor such that the scaled version of $A$ intersects with $B$ can be found via the following optimization problem:

\begin{equation} \label{eq:opt}
    \begin{aligned}
    \alpha^\star(\theta) = \min_{p \in \R^{n_p}} \quad & \mathcal{F}_A(p, \theta) \\
    \textrm{s.t.} \quad & \mathcal{F}_B(p, \theta) \leq 1.  \\
    \end{aligned}
\end{equation}

\begin{figure}[t]
    \centering
    \includegraphics[width=0.47\textwidth]{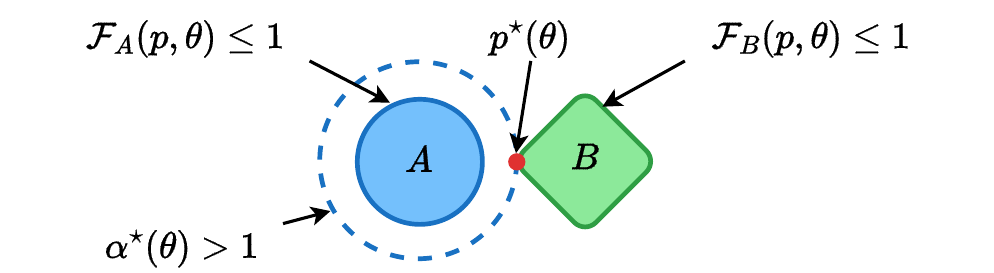}
    \caption{2D visualization of the optimization problem \eqref{eq:opt}. }
    \label{fig:diff_opt_2d}
\end{figure}

Problem \eqref{eq:opt} is strictly feasible because there exists $p \in B$ such that $F_B (p, \theta) < 1$ for each $\theta$ by Definition~\ref{def:scaling_function}. In addition, as the objective function of \eqref{eq:opt} is convex and bounded below by 1 on the feasible set (since $A \intersect B = \varnothing$), $\alpha^\star(\theta)$ is unique. The scaling functions $\mathcal{F}_A$ and $\mathcal{F}_B$ may have different parameters, but we concatenate their parameters and use a single parameter vector $\theta$ for notational simplicity. When \eqref{eq:opt} has a unique solution for $p$, \eqref{eq:opt} can be seen as a mapping from the parameters $\theta$ to the optimal solution $p^\star (\theta)$ and the optimal value $\alpha^\star (\theta)$. We visualize the 2D case of \eqref{eq:opt} in Fig.~\ref{fig:diff_opt_2d} as an illustrative example. In physical terms, $\alpha^\star (\theta)$ (represented by the blue dotted line in Fig.~\ref{fig:diff_opt_2d}) means how much we need to magnify $A$ until it touches $B$. Next, we would like to study the sensitivity (see, e.g., \cite[Section 5.6]{boyd2004convex}) of $\alpha^\star (\theta)$ w.r.t. $\theta$. The following lemma results from the first-order condition of a convex function.

\begin{lemma}\label{lemma:non_zero_grad}
    Given a scaling function $\mathcal{F}_A$, for any $p$ such that $\mathcal{F}_A(p, \theta) > 1$, $\frac{\partial \mathcal{F}_A}{\partial p}(p, \theta) \neq 0$.
\end{lemma}
\begin{proof}
    As $\mathcal{F}_A$ is convex and belongs to class $\cal{C}^2$, we have
    \begin{equation*}
        \mathcal{F}_A(p',\theta) \geq \mathcal{F}_A(p,\theta) + \frac{\partial \mathcal{F}_A}{\partial p}(p,\theta) (p'-p), \quad \forall p' \in \R^{n_p}.
    \end{equation*}
    If  $\frac{\partial \mathcal{F}_A}{\partial p}(p,\theta) = 0$, then $\mathcal{F}_A(p') \geq \mathcal{F}_A(p)$ for any $p' \in \R^{n_p}$, \ie, $p$ is a global minimizer of $\mathcal{F}_A$. However, this is impossible because $A = \{p \in \R^{n_p} \mid \mathcal{F}_A(p,\theta) \leq 1 \}$ is non-empty. Therefore, $\frac{\partial \mathcal{F}_A}{\partial p}(p,\theta) \neq 0$ must hold.
\end{proof}

The Lagrangian function of \eqref{eq:opt} is
\begin{equation}
    L(p, \theta, \lambda) = \mathcal{F}_A(p,\theta) + \lambda (\mathcal{F}_B(p,\theta) - 1).
\end{equation}
Since \eqref{eq:opt} is a strictly feasible convex problem (i.e., Slater's condition holds), the KKT conditions are necessary. Therefore, any pair of optimal primal variable $p^\star$ and dual variable $\lambda^\star$ satisfy the following KKT conditions (we omit the dependence of $\alpha^\star$, $p^\star$, and $\lambda^\star$ on $\theta$ for notational convenience):
\begin{subequations}
\begin{align}
    \frac{\partial \mathcal{F}_A}{\partial p} (p^\star, \theta) + \lambda^\star \frac{\partial \mathcal{F}_B}{\partial p} (p^\star, \theta) &= 0, \label{eq:kkt_p}\\
    \lambda^\star (\mathcal{F}_B(p^\star, \theta) - 1) &= 0, \label{eq:kkt_B}\\
    \lambda^\star &\geq 0. \label{eq:kkt_dual_pos}
\end{align}
\end{subequations}

\begin{lemma}\label{lemma:KKT}
    Let $\mathcal{F}_A$ and $\mathcal{F}_B$ be two scaling functions associated with the sets $A$ and $B$, respectively. If $A \intersect B = \varnothing$, then any pair of optimal primal variable $p^\star$ and dual variable $\lambda^\star$ of \eqref{eq:opt} satisfy: $\mathcal{F}_A (p^\star, \theta) > 1$, $\lambda^\star>0$, $\frac{\partial \mathcal{F}_B}{\partial p} (p^\star, \theta) \neq 0$, and $\mathcal{F}_B(p^\star, \theta) = 1$.
\end{lemma}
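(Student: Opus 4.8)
The plan is to establish the four assertions sequentially, relying only on feasibility, the disjointness hypothesis $A \intersect B = \varnothing$, Lemma~\ref{lemma:non_zero_grad}, and the three KKT conditions \eqref{eq:kkt_p}--\eqref{eq:kkt_dual_pos}, which are necessary here because \eqref{eq:opt} is strictly feasible (Slater's condition). The argument is a short chain of deductions rather than a computation.

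First I would prove $\mathcal{F}_A(p^\star, \theta) > 1$ directly from feasibility and disjointness. Since $p^\star$ is feasible for \eqref{eq:opt}, it satisfies $\mathcal{F}_B(p^\star, \theta) \leq 1$, hence $p^\star \in B$. If instead $\mathcal{F}_A(p^\star, \theta) \leq 1$ held, then by Definition~\ref{def:scaling_function} we would also have $p^\star \in A$, so $p^\star \in A \intersect B$, contradicting $A \intersect B = \varnothing$. Therefore $\mathcal{F}_A(p^\star, \theta) > 1$.

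Next I would invoke Lemma~\ref{lemma:non_zero_grad}: because $\mathcal{F}_A(p^\star, \theta) > 1$, the gradient $\frac{\partial \mathcal{F}_A}{\partial p}(p^\star, \theta)$ is nonzero. The key step is then to rewrite the stationarity condition \eqref{eq:kkt_p} as $\frac{\partial \mathcal{F}_A}{\partial p}(p^\star, \theta) = -\lambda^\star \frac{\partial \mathcal{F}_B}{\partial p}(p^\star, \theta)$. Since the left-hand side is nonzero, the right-hand side cannot vanish, which forces both $\lambda^\star \neq 0$ and $\frac{\partial \mathcal{F}_B}{\partial p}(p^\star, \theta) \neq 0$. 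Combined with dual feasibility \eqref{eq:kkt_dual_pos}, namely $\lambda^\star \geq 0$, this yields $\lambda^\star > 0$.

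Finally, I would apply complementary slackness \eqref{eq:kkt_B}: since $\lambda^\star > 0$, the factor $\mathcal{F}_B(p^\star, \theta) - 1$ must vanish, giving $\mathcal{F}_B(p^\star, \theta) = 1$. I do not anticipate a genuine obstacle, as the whole statement unravels from the KKT system once Lemma~\ref{lemma:non_zero_grad} is in hand. The only point requiring care is the \emph{strictness} in the first claim: one must exclude $\mathcal{F}_A(p^\star, \theta) = 1$ as well, not merely $\mathcal{F}_A(p^\star, \theta) < 1$. This follows because $A$ is the \emph{closed} sublevel set $\{\, p \mid \mathcal{F}_A(p,\theta) \leq 1 \,\}$, so equality still places $p^\star$ in $A$. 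Securing this strictness is essential, since it is precisely what makes Lemma~\ref{lemma:non_zero_grad} applicable and thereby drives the remaining three conclusions.
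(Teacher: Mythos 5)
Your proof is correct and follows essentially the same route as the paper's: establish $\mathcal{F}_A(p^\star,\theta) > 1$ from feasibility and disjointness, use Lemma~\ref{lemma:non_zero_grad} together with the stationarity condition \eqref{eq:kkt_p} to force $\lambda^\star > 0$ and $\frac{\partial \mathcal{F}_B}{\partial p}(p^\star,\theta) \neq 0$, and conclude $\mathcal{F}_B(p^\star,\theta) = 1$ from complementary slackness \eqref{eq:kkt_B}. Your write-up is in fact slightly more explicit than the paper's (it spells out the ``easy to see'' first claim and argues stationarity directly rather than by contradiction on $\lambda^\star = 0$), but the underlying argument is identical.
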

\begin{proof}
    First, it is easy to see $\mathcal{F}_A (p^\star, \theta) > 1$ as $A \intersect B = \varnothing$ and the problem is feasible. Next, we prove $\lambda^\star>0$. If $\lambda^\star = 0$, we get $\frac{\partial \mathcal{F}_A}{\partial p} (p^\star, \theta)=0$ from \eqref{eq:kkt_p}, which contradicts Lemma~\ref{lemma:non_zero_grad} as $\mathcal{F}_A (p^\star, \theta) > 1$. Therefore, $\lambda^\star>0$. As a result of $\lambda^\star>0$ and $\frac{\partial \mathcal{F}_A}{\partial p} (p^\star, \theta) \neq 0$, it follows that $\frac{\partial \mathcal{F}_B}{\partial p} (p^\star, \theta) \neq 0$ by \eqref{eq:kkt_p} and 
    $\mathcal{F}_B(p^\star, \theta) = 1$ by \eqref{eq:kkt_B}.
\end{proof}

\begin{lemma}\label{lemma:uniqueness}
Let $\mathcal{F}_A$ and $\mathcal{F}_B$ be two scaling functions associated with the sets $A$ and $B$, respectively. If $A \intersect B = \varnothing$ and one of $\mathcal{F}_A$ and $\mathcal{F}_B$ is strictly convex in $p$ for each $\theta$, the optimal primal variable $p^\star$ and dual variable $\lambda^\star$ are unique. 
\end{lemma}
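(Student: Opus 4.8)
The plan is to establish uniqueness of the primal variable $p^\star$ first, by splitting into two cases according to which of $\mathcal{F}_A$, $\mathcal{F}_B$ is strictly convex in $p$, and then to deduce uniqueness of the dual variable $\lambda^\star$ from the stationarity condition \eqref{eq:kkt_p}. In the first case, where $\mathcal{F}_A$ is strictly convex, the objective of \eqref{eq:opt} is strictly convex in $p$ while the feasible set $\Set{p \in \R^{n_p}}{\mathcal{F}_B(p,\theta)\le 1}$ is convex, so a standard argument applies: if $p_1^\star \ne p_2^\star$ were both optimal, their midpoint would be feasible (by convexity of the feasible set) with strictly smaller objective value (by strict convexity of $\mathcal{F}_A$), contradicting optimality. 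Hence $p^\star$ is unique in this case.

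The second case, where only $\mathcal{F}_B$ is strictly convex, is the one I expect to be the main obstacle, since the objective $\mathcal{F}_A$ need not be strictly convex and the previous argument no longer closes directly. I would argue by contradiction: suppose $p_1^\star \ne p_2^\star$ are both optimal with common value $\alpha^\star$. By Lemma~\ref{lemma:KKT} both satisfy $\mathcal{F}_B(p_i^\star,\theta)=1$, so they lie on $\partial B$. Let $\bar{p}$ denote their midpoint. Convexity of $\mathcal{F}_A$ gives $\mathcal{F}_A(\bar{p},\theta) \le \alpha^\star$, whereas strict convexity of $\mathcal{F}_B$ gives $\mathcal{F}_B(\bar{p},\theta) < 1$, so $\bar{p}$ lies in the interior of the feasible set. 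Because $\bar{p}$ is feasible, optimality forces $\mathcal{F}_A(\bar{p},\theta)=\alpha^\star$, making $\bar{p}$ an \emph{interior} minimizer; it is therefore an unconstrained local (hence global) minimizer of the convex function $\mathcal{F}_A$, which yields $\frac{\partial \mathcal{F}_A}{\partial p}(\bar{p},\theta)=0$. But $\mathcal{F}_A(\bar{p},\theta)=\alpha^\star>1$, so Lemma~\ref{lemma:non_zero_grad} gives $\frac{\partial \mathcal{F}_A}{\partial p}(\bar{p},\theta)\ne 0$, a contradiction. Thus $p^\star$ is unique in this case as well. The key leverage here is that strict convexity of $\mathcal{F}_B$ pushes the candidate second solution off the active boundary and into the interior, where Lemma~\ref{lemma:non_zero_grad} can be invoked.

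Finally, with $p^\star$ now known to be unique, I would read $\lambda^\star$ off the stationarity condition \eqref{eq:kkt_p}. By Lemma~\ref{lemma:KKT} we have $\frac{\partial \mathcal{F}_B}{\partial p}(p^\star,\theta)\ne 0$, so at least one component of this gradient is nonzero; the corresponding scalar equation in \eqref{eq:kkt_p} pins down $\lambda^\star$ uniquely as the ratio of the matching components of $-\frac{\partial \mathcal{F}_A}{\partial p}(p^\star,\theta)$ and $\frac{\partial \mathcal{F}_B}{\partial p}(p^\star,\theta)$. Since $p^\star$ is unique, these gradients are fixed, and hence $\lambda^\star$ is unique, completing the proof.
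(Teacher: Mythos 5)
Your proof is correct and takes essentially the same route as the paper: the same two-case split, the same midpoint construction in which strict convexity of $\mathcal{F}_B$ places the midpoint in the interior of the feasible set, and the same appeal to Lemma~\ref{lemma:non_zero_grad}, followed by reading $\lambda^\star$ off \eqref{eq:kkt_p}. The only cosmetic difference is the direction of the final contradiction—the paper uses the nonzero gradient at the midpoint to produce a strictly better feasible point nearby, whereas you use interior optimality to force a zero gradient and then contradict Lemma~\ref{lemma:non_zero_grad}—but these are the same argument.
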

\begin{proof}
    The proof is provided in Appendix~\ref{sec:proof_uniqueness}.
\end{proof}

The following theorem establishes the first-order continuous differentiability of $\alpha^\star$ w.r.t. $\theta$.

\begin{theorem}\label{thm:cd_alpha}
    Let $\mathcal{F}_A$ and $\mathcal{F}_B$ be two scaling functions associated with the sets $A$ and $B$, respectively. If $A \intersect B = \varnothing$ and one of $\mathcal{F}_A$ and $\mathcal{F}_B$ has a positive definite Hessian in $p$ for each $\theta$, then the optimal value $\alpha^\star$ of problem \eqref{eq:opt} is $\mathcal{C}^1$ in $\theta$.
\end{theorem}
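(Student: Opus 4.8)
The plan is to apply the implicit function theorem (IFT) to the KKT system, treating the active constraint as an equality. By Lemma~\ref{lemma:KKT} the dual variable satisfies $\lambda^\star>0$ and the constraint is active, $\mathcal{F}_B(p^\star,\theta)=1$, so strict complementarity holds and I may replace the inequality by the equality $\mathcal{F}_B(p,\theta)-1=0$. Fix an arbitrary $\theta_0$ and collect the stationarity equation \eqref{eq:kkt_p} together with this active-constraint equation into a single map
\begin{equation*}
G(p,\lambda,\theta) = \begin{pmatrix} \frac{\partial \mathcal{F}_A}{\partial p}(p,\theta)^\top + \lambda\,\frac{\partial \mathcal{F}_B}{\partial p}(p,\theta)^\top \\ \mathcal{F}_B(p,\theta) - 1 \end{pmatrix},
\end{equation*}
a system of $n_p+1$ equations in the $n_p+1$ unknowns $(p,\lambda)$. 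Since $\mathcal{F}_A$ and $\mathcal{F}_B$ are $\mathcal{C}^2$, the map $G$ is $\mathcal{C}^1$, and $G\big(p^\star(\theta_0),\lambda^\star(\theta_0),\theta_0\big)=0$.

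The crux is to show that the Jacobian of $G$ with respect to $(p,\lambda)$, namely the KKT (bordered Hessian) matrix
\begin{equation*}
M = \begin{pmatrix} H & b \\ b^\top & 0 \end{pmatrix},
\end{equation*}
evaluated at the optimum, is nonsingular, where $H = \frac{\partial^2 \mathcal{F}_A}{\partial p^2} + \lambda^\star \frac{\partial^2 \mathcal{F}_B}{\partial p^2}$ and $b = \frac{\partial \mathcal{F}_B}{\partial p}(p^\star,\theta_0)^\top$. I would first argue $H\succ 0$: both Hessians are positive semidefinite by convexity, and since $\lambda^\star>0$ and one of the two Hessians is positive definite by hypothesis, the sum $H$ is positive definite in either case. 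I would then establish invertibility of $M$ by a Schur-complement argument: if $Hv+bw=0$ and $b^\top v=0$, then $v=-w\,H^{-1}b$, so $w\,(b^\top H^{-1} b)=0$; because $H\succ 0$ and $b\neq 0$ (Lemma~\ref{lemma:KKT}), the scalar $b^\top H^{-1} b>0$, forcing $w=0$ and hence $v=0$. Thus $\ker M=\{0\}$ and $M$ is nonsingular.

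With $G$ of class $\mathcal{C}^1$ and its $(p,\lambda)$-Jacobian invertible at the optimum, the IFT yields a neighborhood of $\theta_0$ on which there exist unique $\mathcal{C}^1$ maps $\theta\mapsto (p(\theta),\lambda(\theta))$ solving $G=0$ and agreeing with $(p^\star,\lambda^\star)$ at $\theta_0$. By continuity, $\lambda(\theta)>0$ and $\mathcal{F}_B(p(\theta),\theta)=1$ near $\theta_0$, so $(p(\theta),\lambda(\theta))$ is a KKT pair of the convex problem \eqref{eq:opt}; since the KKT conditions are sufficient for convex programs and the minimizer is unique by Lemma~\ref{lemma:uniqueness} (a positive definite Hessian being strictly convex), this branch coincides with the true optimizer $(p^\star(\theta),\lambda^\star(\theta))$. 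Finally $\alpha^\star(\theta)=\mathcal{F}_A(p^\star(\theta),\theta)$ is a composition of $\mathcal{C}^1$ maps and is therefore $\mathcal{C}^1$; as $\theta_0$ was arbitrary, this holds globally. I expect the main obstacle to be the nonsingularity of $M$, for which the positive definiteness of the Lagrangian Hessian $H$ (combining convexity, $\lambda^\star>0$, and the positive-definite-Hessian hypothesis) is the decisive ingredient.
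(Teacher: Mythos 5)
Your proposal is correct and takes essentially the same approach as the paper: both differentiate the active KKT system (stationarity plus $\mathcal{F}_B(p,\theta)=1$), establish invertibility of the bordered KKT matrix from the positive definite Lagrangian Hessian (convexity, $\lambda^\star>0$, and the positive-definite-Hessian hypothesis) together with $\frac{\partial \mathcal{F}_B}{\partial p}(p^\star,\theta)\neq 0$ via a Schur-complement argument, and then invoke the implicit function theorem. Your closing step—checking that the local IFT branch is indeed a KKT pair and hence, by KKT sufficiency for convex programs and Lemma~\ref{lemma:uniqueness}, coincides with the true optimizer—is left implicit in the paper's proof, so your write-up is if anything slightly more complete.
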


\begin{proof}
    We know from Lemma~\ref{lemma:uniqueness} that the optimal primal variable $p^\star$ and dual variable $\lambda^\star$ are unique. Differentiating \eqref{eq:kkt_p} and $\mathcal{F}_B(p^\star, \theta) = 1$ w.r.t. $\theta$ gives
    \begin{equation}\label{eq:implicit_eq}
        \underbrace{\begin{bmatrix}
            M & c^\top \\ c & 0
        \end{bmatrix}}_{N}
        \begin{bmatrix}
            \frac{\partial p^\star}{\partial \theta} \\ \frac{\partial \lambda^\star}{\partial \theta}
        \end{bmatrix}=
        \underbrace{\begin{bmatrix}
            \Omega_1 \\ \Omega_2
        \end{bmatrix}}_{\Omega}
    \end{equation}
    where
    \begin{subequations}
    \begin{align}
        M &= \frac{\partial^2 \mathcal{F}_A}{\partial p^2}(p^\star, \theta) + \lambda^\star \frac{\partial^2 \mathcal{F}_B}{\partial p^2}(p^\star, \theta), \label{eq:invert_M}\\
        c &= \frac{\partial \mathcal{F}_B}{\partial p}(p^\star, \theta),\label{eq:invert_c}\\
        \Omega_1 &= - \frac{\partial^2 \mathcal{F}_A}{\partial \theta \partial p}(p^\star, \theta) - \lambda^\star \frac{\partial^2 \mathcal{F}_B}{\partial \theta \partial p}(p^\star, \theta), \label{eq:invert_omega_1} \\
        \Omega_2 &= - \frac{\partial \mathcal{F}_B}{\partial \theta}(p^\star, \theta) \label{eq:invert_omega_2}.
    \end{align}
    \end{subequations}
    Since $\mathcal{F}_A$ and $\mathcal{F}_B$ are both convex in $p$, one of them has a positive definite Hessian in $p$, and $\lambda^\star > 0$, the matrix $M$ in \eqref{eq:invert_M} is positive definite. Furthermore, the vector $c$ in \eqref{eq:invert_c} is nonzero from Lemma~\ref{lemma:KKT}. Therefore, by Schur complement, the matrix $N$ defined in \eqref{eq:implicit_eq} is invertible. By the implicit function theorem, $p^\star$ and $\lambda^\star$ are continuously differentiable in $\theta$, and their derivatives are given by 
    \begin{equation}\label{eq:gradient_implicit}
        \begin{bmatrix}
            \frac{\partial p^\star}{\partial \theta} \\ \frac{\partial \lambda^\star}{\partial \theta}
        \end{bmatrix} = N^{-1} \Omega.
    \end{equation}
    Finally, $\alpha^\star = \mathcal{F}_A(p^\star,\theta)$ is continuously differentiable in $\theta$ because $\mathcal{F}_A$ is $\cal{C}^2$, and its derivative is given by \begin{equation}\label{eq:alpha_gradient}
        \frac{\partial \alpha^\star}{\partial \theta}(\theta) = \frac{\partial \mathcal{F}_A}{\partial p}(p^\star, \theta) \frac{\partial p^\star}{\partial \theta}(\theta) + \frac{\partial \mathcal{F}_A}{\partial \theta}(p^\star, \theta).
    \end{equation}
\end{proof}

The next theorem proves the higher-order smoothness of the optimal value $\alpha^\star$ if the scaling functions are smooth enough.

\begin{theorem}\label{thm:high_order_cd_alpha}
    Under the same conditions of Theorem~\ref{thm:cd_alpha}, if the scaling functions $\mathcal{F}_A$ and $\mathcal{F}_B$ are $\mathcal{C}^{k+1}$ ($k \geq 1$) in $p$ and $\theta$, then $\alpha^\star$ is $\mathcal{C}^k$ in $\theta$.
\end{theorem}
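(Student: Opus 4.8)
The plan is to reuse the implicit-function machinery from the proof of Theorem~\ref{thm:cd_alpha}, but to invoke the $\mathcal{C}^k$ version of the implicit function theorem rather than only its $\mathcal{C}^1$ version. First I would collect the stationarity condition \eqref{eq:kkt_p} together with the active-constraint identity $\mathcal{F}_B(p^\star,\theta)=1$ (both valid by Lemma~\ref{lemma:KKT}) into a single residual map
$$ G(p,\lambda,\theta) = \begin{bmatrix} \frac{\partial \mathcal{F}_A}{\partial p}(p,\theta)^\top + \lambda\,\frac{\partial \mathcal{F}_B}{\partial p}(p,\theta)^\top \\[2pt] \mathcal{F}_B(p,\theta)-1 \end{bmatrix}, $$
so that the optimizer and multiplier are characterized by $G(p^\star,\lambda^\star,\theta)=0$. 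The Jacobian of $G$ with respect to $(p,\lambda)$ is precisely the matrix $N$ of \eqref{eq:implicit_eq}, which was already shown to be invertible for every $\theta$ in the proof of Theorem~\ref{thm:cd_alpha} (by the Schur complement, using that $M$ is positive definite and $c \neq 0$). Thus the nondegeneracy hypothesis of the implicit function theorem holds uniformly.

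The key bookkeeping step is the differentiability order of $G$. Since $\mathcal{F}_A$ and $\mathcal{F}_B$ are $\mathcal{C}^{k+1}$ in $(p,\theta)$, their first partial derivatives $\frac{\partial \mathcal{F}_A}{\partial p}$ and $\frac{\partial \mathcal{F}_B}{\partial p}$ are $\mathcal{C}^{k}$, and $\mathcal{F}_B$ itself is $\mathcal{C}^{k+1}\subset\mathcal{C}^k$; hence $G$ is jointly $\mathcal{C}^k$ in $(p,\lambda,\theta)$. Applying the $\mathcal{C}^k$ implicit function theorem at each $\theta$ then yields local solution maps of class $\mathcal{C}^k$, and by the uniqueness guaranteed by Lemma~\ref{lemma:uniqueness} these local maps coincide with the global optimizer and multiplier. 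Therefore $p^\star$ and $\lambda^\star$ are $\mathcal{C}^k$ in $\theta$. Finally I would conclude by composition: $\alpha^\star(\theta) = \mathcal{F}_A(p^\star(\theta),\theta)$ is the composition of the $\mathcal{C}^{k+1}$ function $\mathcal{F}_A$ with the $\mathcal{C}^k$ map $\theta \mapsto (p^\star(\theta),\theta)$, hence $\mathcal{C}^k$.

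I do not expect a genuine obstacle here; the only point requiring care is the index accounting, namely that the KKT map $G$ loses exactly one order of smoothness relative to the scaling functions (because stationarity involves their first $p$-derivatives), which is precisely why the $\mathcal{C}^{k+1}$ hypothesis delivers $\mathcal{C}^k$ and no less. An equivalent route, if one prefers not to cite the higher-order implicit function theorem, is an explicit induction on $j = 1,\dots,k$: assuming $p^\star,\lambda^\star \in \mathcal{C}^j$ with $j \le k-1$, the right-hand side $N^{-1}\Omega$ of \eqref{eq:gradient_implicit} is a composition of the derivative blocks of $\mathcal{F}_A,\mathcal{F}_B$ (which are at worst $\mathcal{C}^{k-1}$) with the $\mathcal{C}^j$ maps $p^\star,\lambda^\star$, together with the smooth operation of inverting the invertible matrix $N$; this makes $N^{-1}\Omega$ of class $\mathcal{C}^j$, upgrading $p^\star,\lambda^\star$ to $\mathcal{C}^{j+1}$. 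Bootstrapping from the base case $j=1$ (Theorem~\ref{thm:cd_alpha}) up to $j=k$ gives the claim, and composition again transfers the regularity to $\alpha^\star$.
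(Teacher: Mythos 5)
Your proposal is correct, but your primary route differs from the paper's. The paper proves Theorem~\ref{thm:high_order_cd_alpha} by an explicit induction on the derivative order: starting from $X^{(1)}=N^{-1}\Omega$ in \eqref{eq:gradient_implicit}, it repeatedly differentiates using $\frac{\partial N^{-1}}{\partial \theta_i}=-N^{-1}\frac{\partial N}{\partial \theta_i}N^{-1}$ and tracks that each derivative $X^{(j)}$ and $\alpha^{\star(j)}$ is a smooth function of $N^{-1}$, lower-order derivatives of $N$, $\Omega$, and $\mathcal{F}_A$ --- exactly the bootstrap you sketch as your ``alternative route.'' Your main argument instead packages the stationarity condition \eqref{eq:kkt_p} and the active constraint $\mathcal{F}_B(p^\star,\theta)=1$ into a residual map and invokes the $\mathcal{C}^k$ implicit function theorem once, which is cleaner and shorter; your index accounting (first $p$-derivatives of $\mathcal{C}^{k+1}$ functions are $\mathcal{C}^k$, hence the residual is $\mathcal{C}^k$, hence the solution branch is $\mathcal{C}^k$) is exactly right, and the Jacobian of your residual in $(p,\lambda)$ is indeed the matrix $N$ of \eqref{eq:implicit_eq}. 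Two remarks. First, your gluing step deserves one more line: the local branch produced by the implicit function theorem is a zero of the residual, not a priori an optimizer; you should note that near $\theta_0$ the branch has $\lambda>0$ by continuity (Lemma~\ref{lemma:KKT}), hence satisfies the full KKT conditions, which are sufficient for optimality in this convex problem, and only then does Lemma~\ref{lemma:uniqueness} identify it with $(p^\star,\lambda^\star)$ --- the paper's own Theorem~\ref{thm:cd_alpha} proof elides the same point, so this is a matter of completeness rather than a gap. Second, what the paper's constructive induction buys is practical: it yields explicit recursive formulas for the higher derivatives (used in Appendix~\ref{sec:cal_of_hessian} to compute $\frac{\partial^2\alpha^\star}{\partial\theta^2}$ for the HOCBF), whereas your appeal to the higher-order implicit function theorem establishes regularity without producing the computational recipe.
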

\begin{proof}
    The proof is provided in Appendix~\ref{sec:proof_high_order_cd_alpha}.
\end{proof}

Theorems~\ref{thm:cd_alpha} and \ref{thm:high_order_cd_alpha} establish the first-order and higher-order sensitivities of the optimal value $\alpha^\star$ of \eqref{eq:opt} contingent on the smoothness of the scaling functions. The constructive proofs directly illustrate the computation of the derivatives to the desired order. The first-order local differentiability of solutions to general parametric programs has been studied in works such as \cite{jittorntrum1984solution, robinson1980strongly}. In contrast, Theorem~\ref{thm:cd_alpha} of this work proves global continuous differentiability under a more specific setting, and Theorem~\ref{thm:high_order_cd_alpha} further establishes high-order continuous differentiability of the minimal scaling factor when the scaling functions are sufficiently smooth.

\begin{remark}
    Existing works (e.g., \cite{wei2024diffocclusion, dai2023safe, tracy2023differentiable}) employ an optimization problem of (or similar to) the form
    \begin{equation} \label{eq:opt_old}
        \begin{aligned}
        \alpha^\star (\theta) = \min_{p \in \R^{n_p}, \alpha \in \R} \quad & \alpha \\
        \textrm{s.t.} \quad & \mathcal{F}_A (p, \theta) \leq \alpha, \quad \mathcal{F}_B (p, \theta) \leq \alpha,  \\
        \end{aligned}
    \end{equation}
    and only involve first-order derivatives of $\alpha^\star$. Compared with \eqref{eq:opt_old}, our formulation \eqref{eq:opt} requires one fewer decision variable and one fewer constraint, and the metrics provided by \eqref{eq:opt} and \eqref{eq:opt_old} are different. Moreover, \eqref{eq:opt} preserves continuous differentiability under similar conditions and admits closed-form solutions in some cases. Specifically, for an ellipsoid-ellipsoid pair, the analytical solution is given in Section~\ref{sec:john_ellipsoid}, and for an ellipsoid-plane pair, \eqref{eq:opt} reduces to a QP with one inequality constraint, whose solution can be given in analytical form. Note that the works \cite{wei2024diffocclusion, dai2023safe, tracy2023differentiable} employ velocity control, while this work deals with torque control (see Section~\ref{sec:experiments}), requiring a more efficient differentiable optimization formulation.
\end{remark}

\section{Construction of Scaling Functions}\label{sec:scaling_funcs}
This section presents a systematic method for constructing scaling functions for convex primitives (see Table~\ref{tab:scaling_functions} for the proposed scaling functions). The scaling functions are first defined in the body frame and then mapped to the world frame. Let $o_b^w \in \R^{n_p}$ be the vector describing the origin of the body frame w.r.t. the world frame and $R_b^w \in \R^{n_p \times n_p}$ the rotation matrix of the body frame w.r.t. the world frame. Then, the relationship between the same vector expressed in the world frame $\prescript{w}{}{p} \in \R^{n_p}$ and in the body frame $\prescript{b}{}{p} \in \R^{n_p}$ is
\begin{equation}\label{eq:pos_b_to_w}
    \prescript{w}{}{p} = R_b^w \prescript{b}{}{p} + o_b^w.
\end{equation}
Given a set $A$, the scaling function is first defined in the body frame, e.g., $\prescript{b}{}{\cal{F}}_A (\prescript{b}{}{p})$, and then mapped to the world frame by
\begin{equation}\label{eq:scaling_fun_from_b_to_w}
    \cal{F}_A (p, \theta) = \prescript{b}{}{\cal{F}}_A (R_b^{w \top} (p - o_b^w))
\end{equation}
where we drop the prescript $\prescript{w}{}{(\cdot)}$ when it does not cause confusions.
The parameters $\theta$ in this case represent the orientation and translation of the set $A$. In the 2D case ($n_p = 2$), $R_b^w$ can be represented by a single rotation angle $\beta \in \R$ and therefore
\begin{equation}\label{eq:params_2d}
    \theta = [o_b^{w \top}, \beta]^\top \in \R^3.
\end{equation}
In the 3D case ($n_p = 3$), $R_b^w$ can be represented by a unit quaternion $\xi \in \R^4$ and we have 
\begin{equation}\label{eq:params_3d}
    \theta = [o_b^{w \top}, \xi^\top]^\top \in \R^7.
\end{equation}

\begin{table*}[t]
    \centering
    \caption{Proposed scaling functions for different shapes in 2D and 3D and their convexity.}
    \label{tab:scaling_functions}
    \begin{tabular}{cccc}
        \toprule
        Shape & Dimension & Proposed scaling function in the body frame & Convexity\\
        \midrule
        Planes & 2D or 3D & $\prescript{b}{}{\cal{F}}_H (\prescript{b}{}{p}) = a^\top \prescript{b}{}{p} + b $ & Convex \\
        Polygons/Polytopes & 2D or 3D & $\prescript{b}{}{\cal{F}_{P_r}} (\prescript{b}{}{p}) = \frac{1}{\kappa} \ln \left( \frac{1}{N} \sum_{i=1}^N e^ { \kappa \left(a_i^\top \prescript{b}{}{p} + b_i \right) } \right)+1, \ \kappa > 0$ & Convex \\
        Ellipses/Ellipsoids & 2D or 3D & $\prescript{b}{}{\cal{F}}_{\cal{E}} (\prescript{b}{}{p}) = (\prescript{b}{}{p} - \mu)^\top P (\prescript{b}{}{p} - \mu), \ P \in \mathbb{S}_{++}^{n_p}$ & Strongly convex \\
        \bottomrule
    \end{tabular}
\end{table*}

\subsection{Convex Polygons or Polytopes}
\begin{figure}[t]
 \centering
 \begin{subfigure}[b]{0.14\textwidth}
     \centering
     \includegraphics[width=\textwidth]{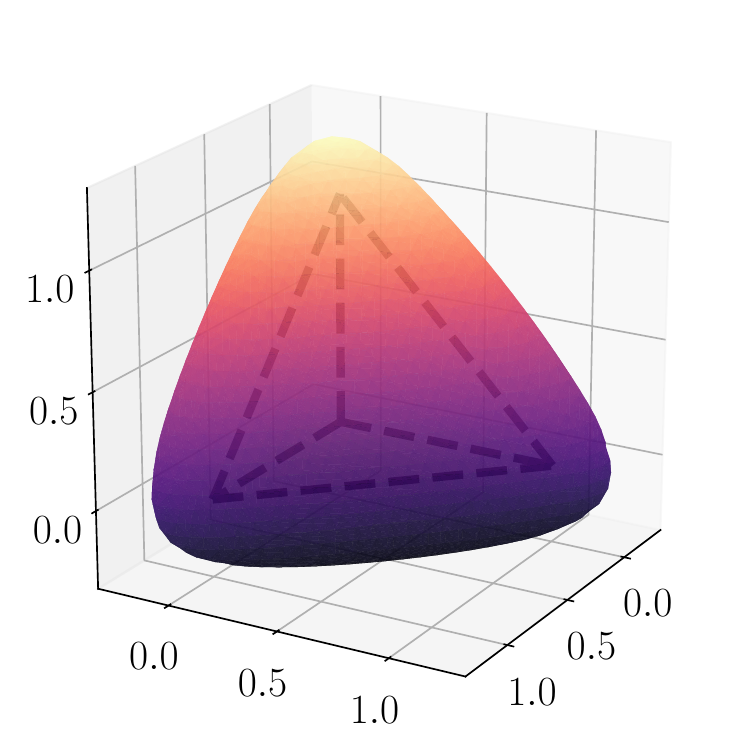}
     \caption{$\kappa=4$.}
     \label{fig:tetrahedron_keq4}
 \end{subfigure}
 \begin{subfigure}[b]{0.14\textwidth}
     \centering
     \includegraphics[width=\textwidth]{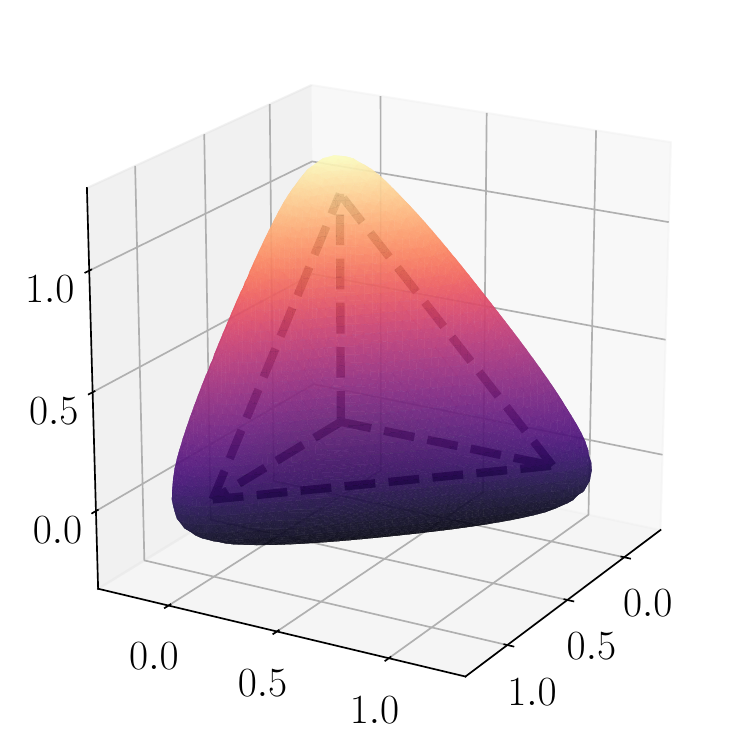}
     \caption{$\kappa=6$.}
     \label{fig:tetrahedron_keq6}
 \end{subfigure}
 \begin{subfigure}[b]{0.14\textwidth}
     \centering
     \includegraphics[width=\textwidth]{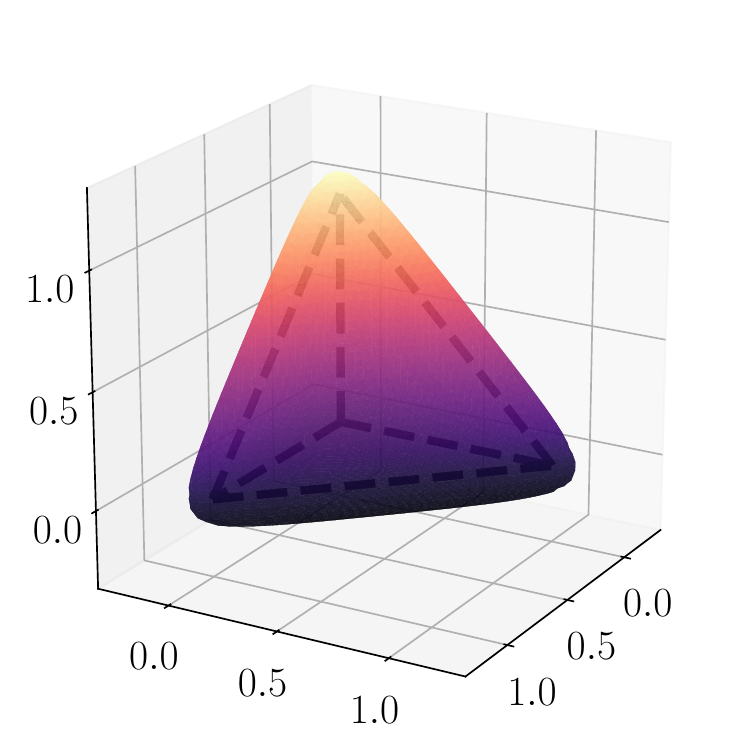}
     \caption{$\kappa=10$.}
     \label{fig:tetrahedron_keq10}
 \end{subfigure}
 \begin{subfigure}[b]{0.14\textwidth}
     \centering
     \includegraphics[width=\textwidth]{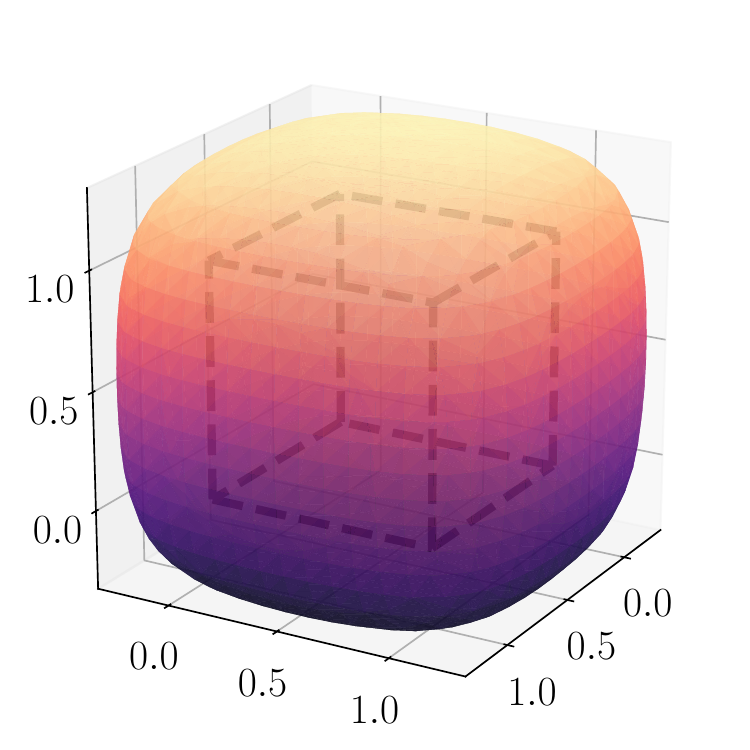}
     \caption{$\kappa=4$.}
     \label{fig:cube_keq4}
 \end{subfigure}
 \begin{subfigure}[b]{0.14\textwidth}
     \centering
     \includegraphics[width=\textwidth]{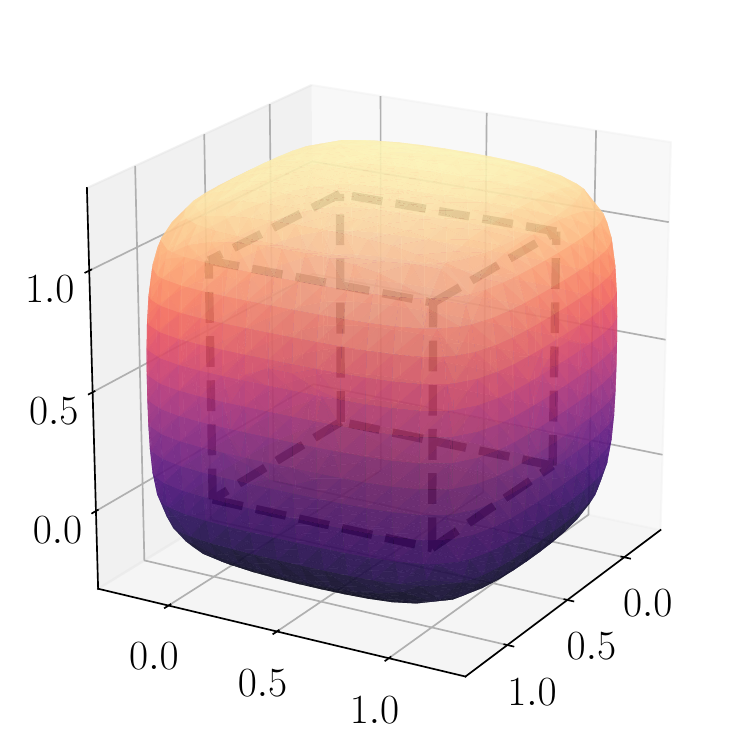}
     \caption{$\kappa=6$.}
     \label{fig:cube_keq6}
 \end{subfigure}
 \begin{subfigure}[b]{0.14\textwidth}
     \centering
     \includegraphics[width=\textwidth]{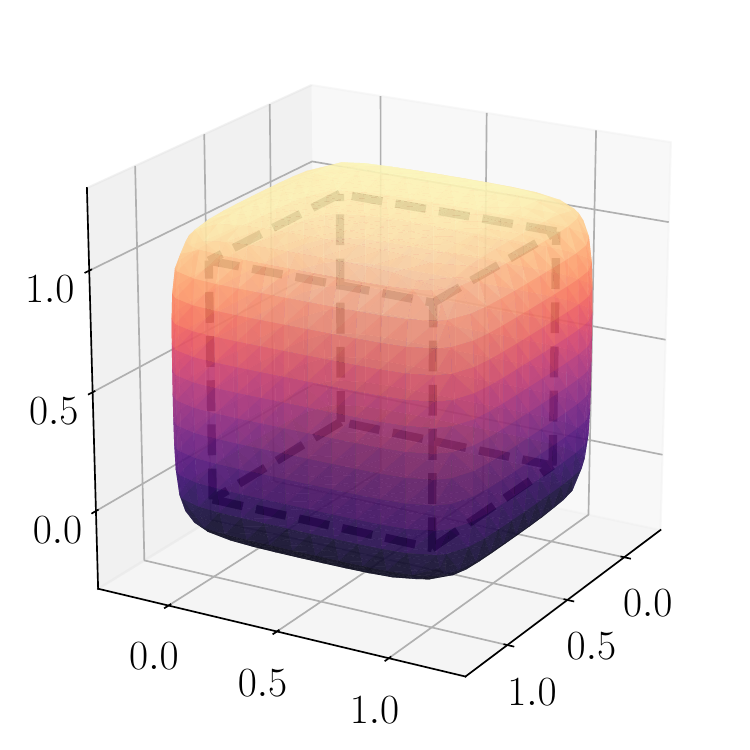}
     \caption{$\kappa=10$.}
     \label{fig:cube_keq10}
 \end{subfigure}
\caption{\subref{fig:tetrahedron_keq4}-\subref{fig:tetrahedron_keq10}: Padded tetrahedrons with $\kappa = 4$, $6$, and $10$. \subref{fig:cube_keq4}-\subref{fig:cube_keq10}: Padded cubes with $\kappa = 4$, $6$, and $10$.}
\label{fig:padded_shapes}
\end{figure}

Consider a convex polytope (or polygon) $P$ defined by a set of inequalities $P = \{ \prescript{b}{}{p} \in \R^{n_p} \mid a_i^\top \prescript{b}{}{p} + b_i \leq 0, \ i = 1, ..., N \}$ with $a_i \in \R^{n_p}$, $b_i \in \R$, and $n_p = 2$ or $3$. We propose the following relaxation of $P$ with a tunable accuracy
\begin{equation}
    \prescript{b}{}{\cal{F}_{P_r}} (\prescript{b}{}{p}) = \frac{1}{\kappa} \ln \left( \frac{1}{N} \sum_{i=1}^N e^ { \kappa \left(a_i^\top \prescript{b}{}{p} + b_i \right) } \right)+1
\end{equation}
where $\kappa > 0$ is a tunable scalar. The function $\prescript{b}{}{\cal{F}_{P_r}}$ is convex in $p$ because it is the composition of the standard log-sum-exp function and an affine function. It is easy to see that $P \subset P_r = \{ \prescript{b}{}{p} \in \R^{n_p} \mid \prescript{b}{}{\cal{F}_{P_r}} (\prescript{b}{}{p}) \leq 1 \}$ because any $\prescript{b}{}{p} \in P$ will satisfy $a_i^\top \prescript{b}{}{p} + b_i \leq 0$ and thus $\prescript{b}{}{\cal{F}_{P_r}} (\prescript{b}{}{p}) \leq 1$. In the world frame, the scaling function is
\begin{equation}\label{eq:log_sum_exp}
    \cal{F}_{P_r} (p,\theta) = \frac{1}{\kappa} \ln \left( \frac{1}{N} \sum_{i=1}^N e^ {\kappa  (a_i'^\top p + b_i')} \right) + 1
\end{equation}
where $a_i ' = R_b^w a_i$ and $b_i' = b_i - a_i^\top R_b^{w \top} o_b^w$. The tunable accuracy of padded shape $P_r$ is visualized in Fig.~\ref{fig:padded_shapes} for different values of $\kappa$. See also \cite{gonccalves2024smooth} for some other smooth distance techniques. We provide additional details on how the numerical stability can be improved in the evaluation of \eqref{eq:log_sum_exp} and its derivatives in Appendix~\ref{sec:log_sum_exp_suggestions}.

\subsection{Minimum Volume Ellipsoids}\label{sec:john_ellipsoid}
The minimum volume ellipsoid (or ellipse in 2D) containing a given set is called the Löwner-John ellipsoid of that set \cite[Section 8.4]{boyd2004convex}, and can be used as an efficient method to find scaling functions. Consider a set of points $C = \{ \prescript{b}{}{p}_1,..., \prescript{b}{}{p}_N \}$ expressed in the body frame. A general ellipsoid in the body frame can be characterized as $\cal{E} = \{ \prescript{b}{}{p} \in \R^{n_p} \mid \lVert D \prescript{b}{}{p} + d \rVert_2 \leq 1 \}$ with $D \in \mathbb{S}_{++}^{n_p}$ and $d \in \R^{n_p}$. Then, the minimum volume ellipsoid containing the set $C$ can be found by solving

\begin{equation} \label{eq:min_vol_ellipsoid}
    \begin{aligned}
    \min_{D \in \mathbb{S}_{++}^{n_p}, d \in \R^{n_p}} \quad & \ln \det D^{-1} \\
    \textrm{s.t.} \quad & \lVert D \prescript{b}{}{p}_i + d \rVert_2 \leq 1, \ i = 1,...,N. \\
    \end{aligned}
\end{equation}

One can construct the scaling function by
\begin{equation}
    \prescript{b}{}{\cal{F}}_{\cal{E}} (\prescript{b}{}{p}) = (\prescript{b}{}{p} - \mu')^\top P' (\prescript{b}{}{p} - \mu')
\end{equation}
where $P' = D^2$ and $\mu' = - D^{-1}d$. In the world frame, the scaling function is
\begin{equation}
    \cal{F}_\cal{E} (p,\theta) = (p-\mu)^\top P (p-\mu)
\end{equation}
where $P = R_b^w P' R_b^{w \top}$ and $\mu = o_b^w + R_b^w \mu'$. When both scaling functions take the above form, a closed-form solution to \eqref{eq:opt} can be found directly.

\begin{theorem}[Adapted from \cite{rimon1997obstacle}]\label{thm:rimon_solution}
    Let $\mathcal{F}_A(p,\theta) = (p-\mu_A)^\top P_A (p-\mu_A)$ and $\mathcal{F}_B(p,\theta) = (p-\mu_B)^\top P_B (p-\mu_B)$ with $P_A, P_B \in \mathbb{S}_{++}^{n_p}$. Let $L_A$ and $L_B$ be the Cholesky decomposition of $P_A$ and $P_B$ such that $P_A = L_A L_A^\top$ and $P_B = L_B L_B^\top$. Then, the optimal solution $p^\star$ to \eqref{eq:opt} is
    \begin{equation}
        p^\star = \mu_a + \lambda_{\min} (M) L_A^{- \top} [\lambda_{\min} (M) I - \Tilde{P}_B]^{-1} \bar{\mu}_B
    \end{equation}
    where $\lambda_{\min} (M)$ is the minimal eigenvalue of 
    \begin{equation}
        M = \begin{bmatrix}
        \Tilde{P}_B & -I \\
        -\Tilde{\mu}_B \Tilde{\mu}_B^\top & \Tilde{P}_B
        \end{bmatrix}
    \end{equation}
    with $\Tilde{P}_B = \bar{L}_B^{-1} \bar{L}_B^{-\top}$, $\Tilde{\mu}_B = \bar{L}_B^{-1} \bar{\mu}_B$, $\bar{P}_B = L_A^{-1} P_B L_A^{- \top }$, $\bar{\mu}_B = L_A^{\top}(\mu_B-\mu_A)$, and $\bar{L}_B$ is the Cholesky decomposition of $\bar{P}_B$ such that $\bar{P}_B = \bar{L}_B \bar{L}_B^{\top}$.
\end{theorem}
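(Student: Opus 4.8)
The plan is to characterize the unique minimizer through the KKT system and then reduce the search for the Lagrange multiplier to a linear eigenvalue problem. First I would invoke the structural results already established: since $P_A, P_B \in \mathbb{S}_{++}^{n_p}$, both scaling functions are strongly convex, so Lemma~\ref{lemma:uniqueness} guarantees that $p^\star$ and $\lambda^\star$ are unique, and Lemma~\ref{lemma:KKT} guarantees $\lambda^\star > 0$ with the constraint active, i.e. $(p^\star - \mu_B)^\top P_B (p^\star - \mu_B) = 1$. Writing the stationarity condition \eqref{eq:kkt_p} explicitly gives $P_A(p^\star - \mu_A) + \lambda^\star P_B(p^\star - \mu_B) = 0$, which determines $p^\star = (P_A + \lambda^\star P_B)^{-1}(P_A \mu_A + \lambda^\star P_B \mu_B)$ as a function of the scalar $\lambda^\star$ alone. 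The whole problem thus collapses to pinning down $\lambda^\star$.

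Next I would diagonalize the geometry with the Cholesky factor $L_A$. Substituting $q = L_A^\top(p - \mu_A)$ turns $\mathcal{F}_A$ into $\lVert q \rVert_2^2$ and, using $\bar P_B = L_A^{-1} P_B L_A^{-\top}$ and $\bar\mu_B = L_A^\top(\mu_B - \mu_A)$, turns $\mathcal{F}_B$ into $(q - \bar\mu_B)^\top \bar P_B (q - \bar\mu_B)$, so the stationarity condition becomes $q + \lambda^\star \bar P_B (q - \bar\mu_B) = 0$. A second Cholesky factorization $\bar P_B = \bar L_B \bar L_B^\top$ together with the change $r = \bar L_B^\top(\bar\mu_B - q)$ then collapses the active constraint to $\lVert r \rVert_2^2 = 1$, while the stationarity condition reads $(\tilde P_B + \lambda^\star I)\, r = \tilde\mu_B$, with $\tilde P_B$ and $\tilde\mu_B$ exactly as in the statement. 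Eliminating $r = (\tilde P_B + \lambda^\star I)^{-1}\tilde\mu_B$ produces the scalar secular equation $\tilde\mu_B^\top (\tilde P_B + \lambda^\star I)^{-2}\tilde\mu_B = 1$, whose positive root is the multiplier we seek.

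The core of the argument is to recast this rational equation as a linear eigenvalue problem. Computing $\det(M - \eta I)$ by Schur complement (the block $\tilde P_B - \eta I$ is generically invertible) yields $\det(M - \eta I) = \det\!\big((\tilde P_B - \eta I)^2 - \tilde\mu_B \tilde\mu_B^\top\big)$, and the matrix determinant lemma reduces the right-hand side to $\det\!\big((\tilde P_B - \eta I)^2\big)\,\big(1 - \tilde\mu_B^\top (\tilde P_B - \eta I)^{-2}\tilde\mu_B\big)$. Hence $\eta$ is an eigenvalue of $M$ precisely when $\tilde\mu_B^\top (\tilde P_B - \eta I)^{-2}\tilde\mu_B = 1$, which matches the secular equation under the identification $\eta = -\lambda^\star$. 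Substituting the recovered $\lambda^\star$ back through $r$, $q$, and finally $p = \mu_A + L_A^{-\top} q$ then reproduces the claimed closed-form expression for $p^\star$.

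The main obstacle is the eigenvalue selection: the secular equation generically admits several real roots (the matrix $M$ carries $2n_p$ eigenvalues, one for each point where the stationarity locus meets the constraint ellipsoid), and only one corresponds to the global minimizer rather than to a spurious stationary point on the far side of $B$. Following the analysis of \cite{rimon1997obstacle}, I would resolve this by tracking the sign of $\tilde P_B + \lambda^\star I$ needed for $r$ to be well defined and for $p^\star$ to minimize the convex problem \eqref{eq:opt}, and by showing that this singles out the extremal eigenvalue $\lambda_{\min}(M)$ (equivalently $\eta = -\lambda^\star$ with $\lambda^\star > 0$). Verifying that this eigenvalue is real and that the reconstructed point is both feasible and optimal, which ties back to the uniqueness already established, completes the proof.
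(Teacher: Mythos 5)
Your overall route is the right one---and it is effectively the only proof on offer, since the paper never proves Theorem~\ref{thm:rimon_solution}: it defers to \cite{rimon1997obstacle} and merely remarks that the symmetric square roots there are replaced by Cholesky factors. Your reductions are correct as far as they go: with $q = L_A^\top(p-\mu_A)$ and $r = \bar{L}_B^\top(\bar{\mu}_B - q)$ one indeed obtains $(\Tilde{P}_B + \lambda^\star I)\, r = \Tilde{\mu}_B$ with $\lVert r \rVert_2 = 1$ under exactly the paper's definitions of $\Tilde{P}_B$ and $\Tilde{\mu}_B$; the Schur-complement/matrix-determinant-lemma identity $\det(M-\eta I) = \det\bigl((\Tilde{P}_B-\eta I)^2\bigr)\bigl(1 - \Tilde{\mu}_B^\top(\Tilde{P}_B-\eta I)^{-2}\Tilde{\mu}_B\bigr)$ is valid; and the root selection ($\lambda^\star > 0$ by Lemma~\ref{lemma:KKT}, so $\eta = -\lambda^\star$ lies below all eigenvalues of $\Tilde{P}_B \succ 0$, where the secular function has exactly one root, all other real eigenvalues of $M$ being larger) correctly singles out $\lambda_{\min}(M) = -\lambda^\star$.

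The genuine gap is your unverified final sentence: back-substitution does \emph{not} reproduce the stated formula. Carrying it out, with $\eta = \lambda_{\min}(M) = -\lambda^\star$ and using $\bar{L}_B \Tilde{P}_B \bar{L}_B^\top = I$,
\begin{align*}
q^\star &= \bar{\mu}_B - \bar{L}_B^{-\top}\bigl(\Tilde{P}_B + \lambda^\star I\bigr)^{-1}\bar{L}_B^{-1}\bar{\mu}_B \\
&= \bar{\mu}_B - \bigl(I + \lambda^\star \bar{P}_B\bigr)^{-1}\bar{\mu}_B \;=\; \eta\,\bigl[\eta I - \bar{P}_B^{-1}\bigr]^{-1}\bar{\mu}_B,
\end{align*}
so $p^\star = \mu_A + \eta\, L_A^{-\top}\bigl[\eta I - \bar{P}_B^{-1}\bigr]^{-1}\bar{\mu}_B$: the resolvent that appears is that of $\bar{P}_B^{-1} = \bar{L}_B^{-\top}\bar{L}_B^{-1}$, \emph{not} of $\Tilde{P}_B = \bar{L}_B^{-1}\bar{L}_B^{-\top}$ as in the statement. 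These two matrices are similar but unequal whenever the Cholesky factor $\bar{L}_B$ does not commute with its transpose; they coincide for the symmetric factor $\bar{P}_B^{1/2}$ used in \cite{rimon1997obstacle}, which is precisely what was lost in the Cholesky adaptation. The mismatch is not cosmetic: for $P_A = I$, $\mu_A = 0$, $P_B = \left[\begin{smallmatrix} 2 & 1 \\ 1 & 2 \end{smallmatrix}\right]$, $\mu_B = (3,0)^\top$, one finds $\lambda_{\min}(M) \approx -1.676$; the corrected formula gives $p^\star \approx (2.191,\, 0.312)^\top$, which satisfies the KKT system with $\mathcal{F}_B(p^\star)=1$, whereas the expression in the theorem gives $\approx (2.347,\, 0.270)^\top$, for which $\mathcal{F}_B \approx 0.65$---not even a boundary point of $B$. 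So a careful completion of your argument proves a formula different from the one stated: you must either finish with the corrected resolvent $\bigl[\lambda_{\min}(M) I - \bar{P}_B^{-1}\bigr]^{-1}\bar{\mu}_B$ (thereby exposing an error in the theorem as written), or restrict $\bar{L}_B$ to the symmetric square root, under which the stated expression is recovered. As written, your last step cannot be completed.
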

\begin{remark}
   The result presented in Theorem~\ref{thm:rimon_solution} has a few refinements compared with that in \cite[Section 3.1]{rimon1997obstacle}. We use the Cholesky decomposition instead of the eigendecomposition ($P_A = P_A^{1/2} P_A^{1/2}$ and $P_B = P_B^{1/2} P_B^{1/2}$), because the Cholesky decomposition is usually computationally cheaper and numerically more stable. Furthermore, operations such as $L^{-1}v$ (or $L^{-\top}v$) are efficiently done by forward (or backward) substitution for a lower triangular matrix $L$ and a vector $v$. 
\end{remark}

\section{HOCBFs and Circulation Mechanism to Avoid Spurious Equilibria}\label{sec:hocbf}
In this section, we will first formulate differentiable optimization-based HOCBFs (see Section~\ref{sec:dohocbf}) and then introduce a circulation mechanism to mitigate the issue of spurious equilibria (see Section~\ref{sec:circ_mechanism}).

\subsection{Differentiable Optimization Based HOCBFs}\label{sec:dohocbf}
Consider a robotic system (e.g., a robotic manipulator) described by the Lagrangian dynamics \cite{siciliano2008robotics}
\begin{equation}\label{eq:lagrange_mechanics}
    M(q) \ddot{q} + \underbrace{C(q, \dot{q}) \dot{q} + F_f(\dot{q}) + G(q)}_{\sigma(q, \dot{q})} = u
\end{equation}
where $q \in \R^{n_q}$ are the generalized coordinates, $u \in \R^{n_q}$ are the input torques, $M(q) \in \mathbb{S}^{n_q}_{++}$ is the inertia matrix, $C(q, \dot{q}) \in \R^{n_q \times n_q}$ accounts for the centrifugal and Coriolis effects, $F_f(\dot{q}) \in \R^{n_q}$ is the friction vector (with $F_f(0) = 0$), and $G(q) \in \R^{n_q}$ is the gravity vector. Then, we have 
\begin{equation}
    \frac{d}{dt} \underbrace{\begin{bmatrix}
        q \\ \dot{q}
    \end{bmatrix}}_{x} = \underbrace{\begin{bmatrix}
        \dot{q} \\ M^{-1}(q) \sigma (q, \dot{q})
    \end{bmatrix}}_{f(x)} + \underbrace{\begin{bmatrix}
        0 \\ M^{-1}(q)
    \end{bmatrix}}_{g(x)} u.
\end{equation}

We model a robot (or its part) as a set $A$ with the scaling function $\cal{F}_A$ and an obstacle as $B$ with $\cal{F}_B$. The parameter vector $\theta$ encodes the orientation and translation of the robot body frame (see \eqref{eq:params_2d} and \eqref{eq:params_3d} for its definition in 2D and 3D, respectively). In 3D, $\theta = [o_b^{w \top}, \xi^\top]^\top \in \R^7$, and $\theta$ is a function of $x = [q^\top, \dot{q}^\top]^\top$ by the direct kinematic equation, and we omit this dependence for brevity. Then, 
\begin{subequations}
\begin{align}
    \dot{o}_b^w &= v, \\ 
    \dot{\xi} &= \frac{1}{2} \underbrace{\begin{bmatrix}
        \xi_w & \xi_z & -\xi_y \\
        -\xi_z & \xi_w & \xi_x \\
        \xi_y & -\xi_x & \xi_w \\
        -\xi_x & -\xi_y & -\xi_z 
    \end{bmatrix}}_{Q(\xi)} \omega,
\end{align}
\end{subequations}
and the linear and angular velocities $v, \omega \in \R^3$ are given by
\begin{equation} \label{eq:v_Jq}
    \begin{bmatrix}
        v \\ \omega 
    \end{bmatrix} = J(q) \dot{q}
\end{equation}
where $J(q) \in \R^{6 \times n_q}$ is the geometric Jacobian matrix. Define the HOCBF $h: \R^{n_x} \rightarrow \R$ by
\begin{equation}\label{eq:cbf}
    h(x) = \alpha^\star (\theta(x)) - \alpha_0
\end{equation}
with $\alpha_0 > 1$. As the system \eqref{eq:lagrange_mechanics} is torque controlled, the relative degree of $h$ is $m=2$. Then, 
\begin{subequations}
\begin{align}
    \dot{h}(x) &= \frac{\partial \alpha^\star}{\partial \theta} \dot{\theta} = \frac{\partial \alpha^\star}{\partial \theta} \underbrace{\begin{bmatrix}
        I & 0 \\ 0 & \frac{1}{2} Q(\xi)
    \end{bmatrix}}_{T(\xi)} J(q) \dot{q} \label{eq:dh_first_order}, \\
    \ddot{h}(x) &= (T(\xi) J(q) \dot{q})^\top \frac{\partial^2 \alpha^*}{\partial \theta^2} T(\xi) J(q) \dot{q}  \notag \\
    +& \frac{\partial \alpha^\star}{\partial \theta} [\dot{T}(\xi) J(q) \dot{q} + T(\xi) \dot{J}(q) \dot{q} + T(\xi) J(q) \ddot{q}],
\end{align}
\end{subequations}
where $\frac{\partial^2 \alpha^*}{\partial \theta^2}$ can be calculated following Appendix~\ref{sec:cal_of_hessian}. The coefficient before the control input $u$ in $\ddot{h}(x)$ is 
\begin{equation}\label{eq:actuation_matrix}
    L_g L_f^{m-1} h (x) = \frac{\partial \alpha^\star}{\partial \theta} T(\xi) J(q) M^{-1}(q).
\end{equation}

\begin{proposition}
Under the conditions of Theorem \ref{thm:high_order_cd_alpha}, if $h(x) \geq 0$ and the geometric Jacobian matrix $J(q)$ has full row rank, then the vector $L_g L_f^{m-1} h (x)$ defined in \eqref{eq:actuation_matrix} is nonzero.
\end{proposition}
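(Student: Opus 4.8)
The plan is to peel the factors off $L_g L_f^{m-1} h(x) = \frac{\partial \alpha^\star}{\partial \theta} T(\xi) J(q) M^{-1}(q)$ from the right, exploiting invertibility and rank to reduce the claim to a single nonvanishing gradient. Since $M(q) \in \mathbb{S}^{n_q}_{++}$, the matrix $M^{-1}(q)$ is invertible, so the product is nonzero if and only if the row vector $\frac{\partial \alpha^\star}{\partial \theta} T(\xi) J(q) \in \R^{1 \times n_q}$ is nonzero. Because $J(q)$ has full row rank, its left null space is trivial; hence for any row vector $v$ one has $v J(q) = 0 \Rightarrow v = 0$, and it suffices to show $v \defeq \frac{\partial \alpha^\star}{\partial \theta} T(\xi) \neq 0$. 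I would then use the block-diagonal structure $T(\xi) = \mathrm{diag}(I, \tfrac{1}{2} Q(\xi))$: its leading block is the identity acting on the translation coordinates, so the first $n_p$ components of $v$ coincide exactly with $\frac{\partial \alpha^\star}{\partial o_b^w}$. Thus the entire statement reduces to showing that the gradient of $\alpha^\star$ with respect to the robot position, $\frac{\partial \alpha^\star}{\partial o_b^w}$, is nonzero. Notably, the rotation block $\tfrac{1}{2} Q(\xi)$ and its rank deficiency never need to be analyzed, since the translation block alone certifies $v \neq 0$.

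To evaluate this translation gradient, I would invoke the envelope (sensitivity) property of the optimal value. Since only the robot body $A$ moves with $\theta$ while the obstacle $B$ is fixed in the world frame, $\mathcal{F}_B$ does not depend on $o_b^w$; applying the sensitivity result to the Lagrangian of \eqref{eq:opt} together with the optimality conditions \eqref{eq:kkt_p}--\eqref{eq:kkt_B} then yields $\frac{\partial \alpha^\star}{\partial o_b^w} = \frac{\partial \mathcal{F}_A}{\partial o_b^w}(p^\star, \theta)$. The key structural observation is that $\mathcal{F}_A(p, \theta)$ depends on $p$ and $o_b^w$ only through the combination $R_b^{w \top}(p - o_b^w)$ by \eqref{eq:scaling_fun_from_b_to_w}, i.e. $\mathcal{F}_A$ is invariant under joint translation of the query point and the body frame. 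Differentiating this composition gives the identity $\frac{\partial \mathcal{F}_A}{\partial o_b^w}(p^\star, \theta) = -\frac{\partial \mathcal{F}_A}{\partial p}(p^\star, \theta)$, and therefore $\frac{\partial \alpha^\star}{\partial o_b^w} = -\frac{\partial \mathcal{F}_A}{\partial p}(p^\star, \theta)$.

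Finally, the hypothesis $h(x) \geq 0$ gives $\alpha^\star = \mathcal{F}_A(p^\star, \theta) \geq \alpha_0 > 1$, so $p^\star$ lies strictly outside $A$. Lemma~\ref{lemma:non_zero_grad} then guarantees $\frac{\partial \mathcal{F}_A}{\partial p}(p^\star, \theta) \neq 0$, whence $\frac{\partial \alpha^\star}{\partial o_b^w} \neq 0$, and this propagates back up the chain to yield $L_g L_f^{m-1} h(x) \neq 0$. I expect the main obstacle to be the middle step: justifying cleanly that the identity block of $T(\xi)$ transmits $\frac{\partial \alpha^\star}{\partial o_b^w}$ without cancellation and that this partial derivative equals $-\frac{\partial \mathcal{F}_A}{\partial p}$ via the envelope theorem and translation invariance, since that is where the pose parameterization and the optimality conditions interact. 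The remaining reductions (invertibility of $M^{-1}$, triviality of the left null space of $J$) are routine.
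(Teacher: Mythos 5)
Your proof is correct, and while it shares the paper's outer skeleton (invertibility of $M^{-1}(q)$ and the trivial left null space of the full-row-rank $J(q)$ reduce everything to showing $\frac{\partial \alpha^\star}{\partial \theta} T(\xi) \neq 0$), it resolves that key step by a genuinely different and in fact sharper argument. The paper argues by contradiction at an informal, geometric level: if $\frac{\partial \alpha^\star}{\partial \theta} T(\xi) = 0$ then no infinitesimal translation or rotation changes $\alpha^\star$, which is declared impossible because $A \cap B = \varnothing$ and one scaling function is strictly convex; no explicit certificate of a direction that changes $\alpha^\star$ is produced. You instead exhibit such a certificate: using the identity block of $T(\xi)$, the envelope theorem (valid here because Theorem~\ref{thm:cd_alpha} gives differentiability of $p^\star$ and $\lambda^\star$, and Lemma~\ref{lemma:KKT} gives activity of the constraint), and the frame-transformation structure \eqref{eq:scaling_fun_from_b_to_w}, you obtain the closed-form identity $\frac{\partial \alpha^\star}{\partial o_b^w} = \frac{\partial \mathcal{F}_A}{\partial o_b^w}(p^\star,\theta) = -\frac{\partial \mathcal{F}_A}{\partial p}(p^\star,\theta)$, which is nonzero by Lemma~\ref{lemma:non_zero_grad} since $\mathcal{F}_A(p^\star,\theta) = \alpha^\star > 1$. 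This buys three things: rigor, since the paper's claim that ``there always exists some infinitesimal translation or rotation that will change the value of $\alpha^\star$'' is plausible but never actually derived, whereas your formula proves it; economy, since you never need to analyze the quaternion block $\tfrac{1}{2}Q(\xi)$ nor invoke strict convexity in this step (positive definiteness of a Hessian is still needed upstream for differentiability of $\alpha^\star$ itself); and interpretation, since combining your identity with \eqref{eq:kkt_p} shows the positional gradient equals $\lambda^\star \frac{\partial \mathcal{F}_B}{\partial p}(p^\star,\theta)$, i.e., it points along the obstacle's contact normal at $p^\star$. The one assumption your route leans on---that $\mathcal{F}_B$ does not depend on $\theta$ because the obstacle is fixed in the world frame---is exactly the setup of Section~\ref{sec:dohocbf}, and you state it explicitly, so it is not a gap.
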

\begin{proof}
When $h(x) \geq 0$, we have $\alpha^\star (\theta) \geq \alpha_0 > 1$ and $A \intersect B = \varnothing$. From \eqref{eq:v_Jq} and \eqref{eq:dh_first_order}, we see that $\dot{h}(x) = \dot{\alpha}^\star (\theta) = \frac{\partial \alpha^\star}{\partial \theta} T(\xi) [v^\top, \omega^\top]^\top$. If $\frac{\partial \alpha^\star}{\partial \theta} T(\xi) = 0$, then no infinitesimal translation or rotation can change the value of $\alpha^\star$. This is impossible as $A \intersect B = \varnothing$ and one of $\cal{F}_A$ and $\cal{F}_B$ has a positive definite Hessian in $p$ (hence strictly convex), which means that there always exists some infinitesimal translation or rotation that will change the value of $\alpha^\star$. As $J(q)$ has full row rank and $M(q)$ is nonsingular, $L_g L_f^{m-1} h (x) \neq 0$.  
\end{proof}

Let the HOCBF-QP of this work be 
\begin{equation} \label{eq:hocbf_qp}
    \begin{aligned}
    \pi (x) = \argmin_{u \in \cal{U}} \quad & \lVert u-u_n(x) \rVert_2^2 \\
    \textrm{s.t.} \quad & a(x)^\top u \geq b(x),  \\
    \end{aligned}
\end{equation}
where $u_n$ is a nominal controller, $a(x) = (L_g L_f^{m-1} h (x))^\top$ and $b(x) = -L_f^{m} h (x) - \sum_{i=0}^{m-1} L_f^i (\Gamma_{m-i} \circ \psi_{m-i-1}) (x)$ with $m=2$. Assuming that \eqref{eq:hocbf_qp} is feasible, if the initial condition $x_0 \in C_1 \intersect C_2$ (see \eqref{eq:high_order_safe_sets} for definitions of $C_1$ and $C_2$), then $h(x) \geq 0$ holds forward in time by \eqref{eq:control_cbf}, i.e., the collision between $A$ and $B$ is avoided. The solution of \eqref{eq:hocbf_qp} is unique (if it exists) because the objective function is strongly convex.

\subsection{Circulation Mechanism to Avoid Spurious Equilibria}\label{sec:circ_mechanism}
Existing works (such as \cite{reis2020control, tan2024undesired, gonccalves2024control}) point out that CBF-based QPs may introduce spurious equilibria in cases where the CBF is of relative degree one. We demonstrate that these equilibria also exist in high-order cases and propose a circulation mechanism to avoid them for systems such as \eqref{eq:lagrange_mechanics}. 

\begin{figure}[t]
 \centering
 \begin{subfigure}[b]{0.24\textwidth}
     \centering
     \includegraphics[width=\textwidth]{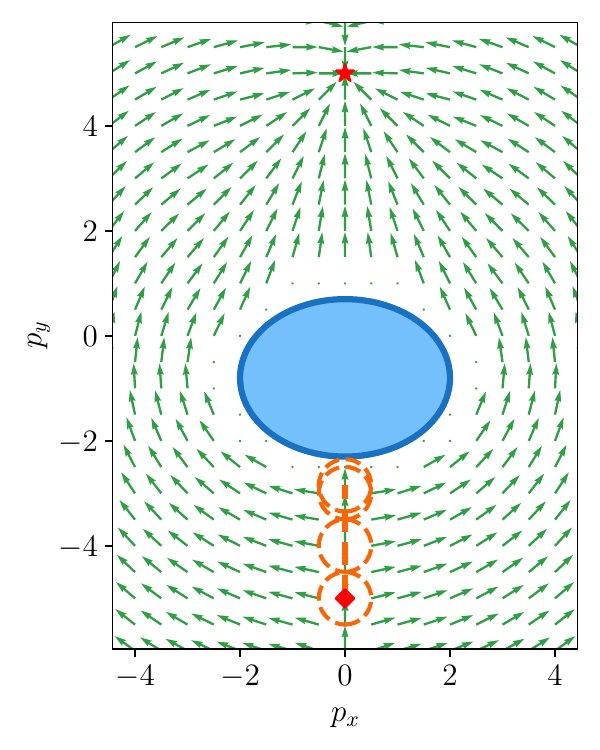}
     \caption{Without circulation.}
     \label{fig:wo_circulation}
 \end{subfigure}
  \begin{subfigure}[b]{0.24\textwidth}
     \centering
     \includegraphics[width=\textwidth]{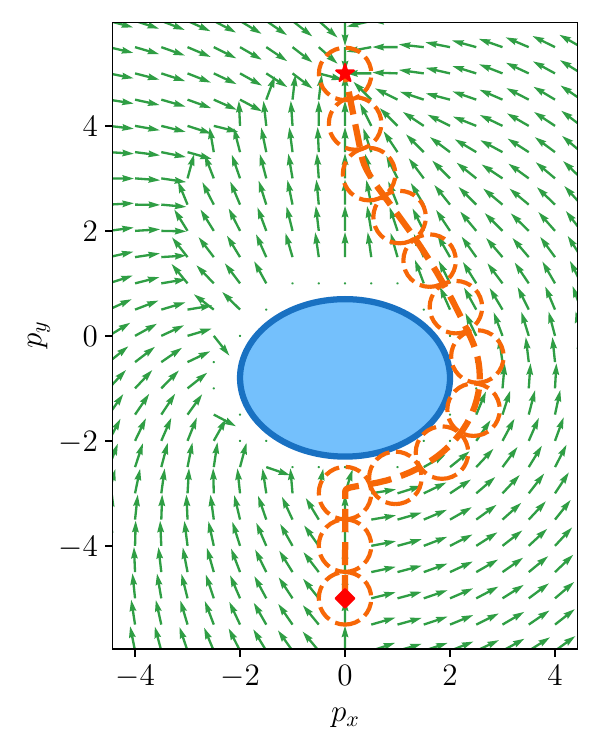}
     \caption{With circulation.}
     \label{fig:w_circulation}
 \end{subfigure}
\caption{A 2D navigation example with and without the circulation mechanism. The elliptical obstacle is colored in blue, and the robot trajectory is colored in orange. The green arrows represent the control $\pi$ (given by \eqref{eq:hocbf_qp}) when the robot starts from different positions.}
\label{fig:circulation_2d}
\end{figure}

\begin{example}\label{eg:2d_integrator}
Consider a 2D double integrator system navigating on a plane with dynamics
\begin{equation}
    \frac{d}{dt} \underbrace{\begin{bmatrix}
        p \\
        v
    \end{bmatrix}}_{x} = \underbrace{\begin{bmatrix}
        v \\
        0
    \end{bmatrix}}_{f(x)} + 
    \underbrace{\begin{bmatrix}
        0 \\ 
        I
    \end{bmatrix}}_{g(x)} u,
\end{equation}
where $p = [p_x, p_y]^\top$, $v = [v_x, v_y]^\top$, $u = [u_x, u_y]^\top$. The robot is modeled by a ball (with a radius of $0.5$) initially centered at $(0.0,-5.0)$, and the obstacle is modeled by an ellipse (with a semi-minor axis of $1.5$ and a semi-major axis of $2.0$) centered at $(0.0,-0.8)$. To reach the goal position $p_g = [0.0, 5.0]^\top$, the following feedback control is chosen as the nominal control
\begin{equation}
    u_n = - K_p (p-p_g) - K_d v
\end{equation}
with $K_p, K_d > 0$. As in Fig.~\ref{fig:wo_circulation}, the safe control $\pi$ given by \eqref{eq:hocbf_qp} leads to an equilibrium $x_e = [0.0, -2.8, 0.0, 0.0]$ due to the symmetry in this particular setting. 
\end{example}

The closed-loop system equilibrium $(x_e, u_e)$ of \eqref{eq:affine_sys} satisfies 
\begin{equation}\label{eq:equilibrium}
    0 = f(x_e) + g(x_e) u_e.
\end{equation}
We note that the set of equilibria might not be discrete. For torque-controlled robotic systems as in \eqref{eq:lagrange_mechanics}, any static configuration with only gravity compensation can be an equilibrium, which will be made clear in later analysis. The following proposition describes in detail when the solution to \eqref{eq:hocbf_qp} leads to an equilibrium for general control-affine systems \eqref{eq:affine_sys} with a HOCBF of relative degree $m$.

\begin{proposition}\label{prop:equilibrium_on_boundary}
    Assume that \eqref{eq:hocbf_qp} is feasible, $\pi(x) \notin \partial \cal{U}$, and $h$ is of relative degree $m$ for the system \eqref{eq:affine_sys}. If the system \eqref{eq:affine_sys} controlled by $\pi$ (given by \eqref{eq:hocbf_qp}) enters an equilibrium $(x_e, u_e)$ satisfying \eqref{eq:equilibrium}, only two cases can happen: (i) $u_n (x_e) = u_e$; (ii) $h(x_e) = 0$.
\end{proposition}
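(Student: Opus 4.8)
The plan is to read off the solution of the projection QP \eqref{eq:hocbf_qp} through its KKT conditions and then exploit the fact that at a fixed point every time derivative vanishes. Since the hypothesis $\pi(x_e)\notin\partial\cal{U}$ makes the control-limit constraints inactive, the only active-relevant constraint is the single linear inequality $a(x)^\top u \geq b(x)$, and feasibility of \eqref{eq:hocbf_qp} supplies the constraint qualification. Writing the Lagrangian $\lVert u-u_n\rVert_2^2 - \nu\,(a^\top u - b)$ with multiplier $\nu\geq 0$, stationarity gives the optimizer in the form $u_e = u_n(x_e) + \tfrac{\nu}{2}\,a(x_e)$, together with complementary slackness $\nu\,(a(x_e)^\top u_e - b(x_e)) = 0$.

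Next I would split on the multiplier $\nu$. If $\nu = 0$, stationarity immediately yields $u_e = u_n(x_e)$, which is case (i). If $\nu > 0$, complementary slackness forces the constraint to be active, $a(x_e)^\top u_e = b(x_e)$. Recalling the identity $a(x)^\top u - b(x) = \psi_m(x,u)$ (the left-hand side of the HOCBF inequality is exactly $\psi_m$), the active constraint reads $\psi_m(x_e,u_e)=0$, and I would show this forces $h(x_e)=0$.

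The remaining step pushes the equilibrium condition through the recursion \eqref{eq:intermiate_barrier_funcs}. Because $0 = f(x_e) + g(x_e)u_e$ by \eqref{eq:equilibrium}, every $\psi_i$ with $i\leq m-1$, being a function of $x$ alone, has vanishing derivative along the flow: $\dot\psi_i(x_e) = \tfrac{\partial \psi_i}{\partial x}(x_e)\,(f(x_e)+g(x_e)u_e) = 0$. Hence $\psi_m(x_e,u_e) = \dot\psi_{m-1}(x_e) + \Gamma_m(\psi_{m-1}(x_e)) = \Gamma_m(\psi_{m-1}(x_e))$, and $\psi_m(x_e,u_e)=0$ with $\Gamma_m$ class $\cal{K}$ (hence zero only at the origin) gives $\psi_{m-1}(x_e)=0$. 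A backward induction using $\psi_i(x_e) = \dot\psi_{i-1}(x_e) + \Gamma_i(\psi_{i-1}(x_e)) = \Gamma_i(\psi_{i-1}(x_e))$ together with the same class-$\cal{K}$ injectivity at zero then propagates the zero down the chain, yielding $\psi_{m-2}(x_e)=\dots=\psi_0(x_e)=h(x_e)=0$, i.e.\ case (ii).

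The hard part will be the bookkeeping around the KKT picture rather than the recursion: I must justify that the bound constraints are genuinely inactive (handled by $\pi(x_e)\notin\partial\cal{U}$), that feasibility gives a valid constraint qualification so the KKT conditions hold, and that the identity $a^\top u - b = \psi_m(x,u)$ is correctly aligned so the active constraint truly reads $\psi_m=0$. Once one observes that all time derivatives collapse at a fixed point and that class-$\cal{K}$ functions vanish only at the origin, the inductive descent from $\psi_{m-1}(x_e)=0$ to $h(x_e)=0$ is routine.
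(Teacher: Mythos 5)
Your proof is correct and follows essentially the same route as the paper's: a case split on whether the single CBF constraint is active (you phrase it via the KKT multiplier $\nu$, the paper via constraint activity, which coincide here up to a harmless degenerate case where both conclusions hold), followed by the observation that the equilibrium condition annihilates $\dot{\psi}_{m-1}(x_e)$ so the active constraint reads $\Gamma_m(\psi_{m-1}(x_e))=0$, and a backward class-$\mathcal{K}$ induction down to $h(x_e)=0$. The only cosmetic difference is that you make the KKT bookkeeping explicit, whereas the paper simply notes that an inactive constraint forces $\pi(x_e)=u_n(x_e)$.
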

\begin{proof}
    Assuming $\pi(x) \notin \partial \cal{U}$, we discuss whether the constraint $a(x)^\top u \geq b(x)$ is active (binding). If this constraint is not active, we have $u_e = \pi(x_e) = u_n (x_e)$, which proves the first case. If the constraint is active, we have $a(x_e)^\top u_e = b(x_e)$. Then, $0 = a(x_e)^\top u_e - b(x_e) = L_f^m h (x_e) + L_g L_f^{m-1} h (x_e) u_e + \sum_{i=0}^{m-1} L_f^i (\Gamma_{m-i} \circ \psi_{m-i-1}) (x_e) = \dot{\psi}_{m-1} (x_e) + \Gamma_m (\psi_{m-1} (x_e))$ by the construction in \eqref{eq:intermiate_barrier_funcs}. As $\dot{\psi}_{m-1} (x_e) = 0$ (by \eqref{eq:equilibrium}), we have $\Gamma_m (\psi_{m-1} (x_e))=0$ and thus $\psi_{m-1} (x_e) = 0$ as $\Gamma_m$ is a class $\cal{K}$ function. By recurrence, we obtain that $\psi_{m-1} (x_e) = ... = \psi_0 (x_e) = 0$. As $\psi_0 = h$, we have $h(x_e) = 0$, which proves the second case.
\end{proof}

Proposition~\ref{prop:equilibrium_on_boundary} shows that, when $\pi(x) \notin \partial \cal{U}$, the equilibria introduced by \eqref{eq:hocbf_qp} belong to the set $\partial C_0 = \{x \in \R^{n_x} \mid h(x) = 0\}$ with $C_0$ defined in \eqref{eq:high_order_safe_sets} unless commanded by the nominal control $u_n$ (i.e., case (i)). If we further assume that $a (x) \neq 0$, the solution to \eqref{eq:hocbf_qp} is 
\begin{equation}
    \pi(x) = u_n (x) - \frac{\min (0, a(x)^\top u_n (x) - b(x))}{a(x)^\top a (x)} a(x),
\end{equation}
which can be seen as a projection onto the hyperplane $H: a(x)^\top u = b(x)$. If the result of the projection is $\pi(x_e) = u_e$ at $x=x_e$, an equilibrium is created. To mitigate this issue, we propose a circulation mechanism. 

\begin{algorithm}[t]
\caption{Implementation of the circulation mechanism}
\label{alg:circulation_mechanism}
\begin{algorithmic}[1]
\REQUIRE System dynamics $f$ and $g$, HOCBF $h$, nominal controller $u_n$;

\STATE Compute the set of equilibria $\mathcal{X}_e$;

\STATE Define the projection operator onto $\mathcal{X}_e$: $P_{\cal{X}_e} (x) = \argmin \{ \lVert x - x' \rVert_2 \mid x' \in \cal{X}_e \}$;

\STATE Define $\zeta (x_e) = -g(x_e)^\dagger f(x_e)$ for all $x_e \in \mathcal{X}_e$;

\STATE Decide a skew-symmetric matrix $\Phi$ and a function $d$;

\FOR{$i=1,2,...$} 
\STATE Compute the CBF inequality: $a(x) = (L_g L_f^{m-1} h (x))^\top$ and $b(x) = -L_f^{m} h (x) - \sum_{i=0}^{m-1} L_f^i (\Gamma_{m-i} \circ \psi_{m-i-1}) (x)$;

\STATE Compute the circulation inequality: $c(x) = \Phi a(x)$, $\zeta(P_{\cal{X}_e} (x))$, and $d(h(x), \lVert x - P_{\cal{X}_e} (x) \rVert_2)$;

\STATE Solve \eqref{eq:chocbf_qp} to obtain $\pi'(x)$;
\ENDFOR
\end{algorithmic}
\end{algorithm}

\begin{assumption} \label{ass:circulation_ass}
We assume that system \eqref{eq:affine_sys} satisfies the following properties:
\begin{enumerate}
\item The set $\cal{X}_e = \{ x_e \in \R^{n_x} \mid \exists u_e \in \cal{U},  0 = f(x_e) + g(x_e) u_e \}$ is closed, convex, and known;
\item For $x_e \in \cal{X}_e$, $g(x_e)$ has full column rank.
\end{enumerate}
\end{assumption}

We remark that under the assumption $\cal{X}_e$ is a closed convex set, the projection (from $\R^{n_x}$ onto $\cal{X}_e$) $P_{\cal{X}_e} (x) = \argmin \{ \lVert x - x' \rVert_2 \mid x' \in \cal{X}_e \}$ is well defined and a continuous function in its argument \cite[Chapter 8]{boyd2004convex}. In addition, as $g(x_e)$ has full column rank and $f(x_e)$ is in the column space of $g(x_e)$, there exists one unique $u_e \in \R^{n_u}$ such that $0 = f(x_e) + g(x_e) u_e$, and $u_e$ is a continuous function of $x_e$, i.e., $u_e = \zeta (x_e) = -g(x_e)^\dagger f(x_e)$. We note that Assumption~\ref{ass:circulation_ass} requires pre-computing $\cal{X}_e$, and for torque-controlled robotic systems as in \eqref{eq:lagrange_mechanics}, Assumption~\ref{ass:circulation_ass} holds with $\cal{X}_e = \{[q^\top, 0]^\top \in \R^{2 n_q} \mid q_l \leq q \leq q_u \}$ (where $q_l, q_u \in \R^{n_q}$ are the lower and upper bounds of the joint angles, respectively) and $u_e = \zeta (x_e) = G(q)$ (i.e., gravity compensation) assuming sufficient control capacities, and thus $\lVert x - P_{\cal{X}_e} (x) \rVert_2) = \lVert \dot{q}\rVert_2$. Next, we introduce the following Circulating HOCBF-QP (CHOCBF-QP):
\begin{equation} \label{eq:chocbf_qp}
    \begin{aligned}
    \pi' (x) &= \argmin_{u \in \cal{U}} \quad \lVert u-u_n(x) \rVert_2^2 \\
    \textrm{s.t.} \quad & a(x)^\top u \geq b(x),  \\
    \quad & c(x)^\top [u - \zeta(P_{\cal{X}_e} (x))] \geq d(h(x), \lVert x - P_{\cal{X}_e} (x) \rVert_2)
    \end{aligned}
\end{equation}
where $c(x)$ is a vector orthogonal to $a(x)$ and continuous in $x$, and $d(\cdot, \cdot)$ is a continuous scalar function satisfying $d(0, 0) > 0$. The extra constraint is designed to encourage solutions tangential to the hyperplane $H$ when both $h(x)$ and $\lVert x - P_{\cal{X}_e} (x) \rVert_2$ are close to zero (see Fig.~\ref{fig:constraint_normal}) in the aim of preventing equilibria on the boundary of the safe set. 

In the following, we discuss how to design $c(x)$ and $d(\cdot, \cdot)$. Given the vector $a(x)$, a possible choice of $c(x)$ is $c(x) = \Phi a(x)$ where $\Phi$ is a nonsingular skew-symmetric matrix (which is only possible when $n_u$ is even). If $n_u$ is odd, one can add one virtual component to the control $u$ to avoid this issue. See \cite[Section V.B]{gonccalves2024control} for a more detailed explanation of constructing skew-symmetric matrices. $d(\cdot, \cdot)$ needs to satisfy $d(0,0) > 0$. Some options can be $d(h(x), \lVert x - P_{\cal{X}_e} (x) \rVert_2) = 1 - d_1 h(x) - d_2 \lVert x - P_{\cal{X}_e} (x) \rVert_2$ or $d(\phi(x), \lVert x - P_{\cal{X}_e} (x) \rVert_2) = d_1 (1-e^{d_2 (h(x) - d_3)}) + d_4 (e^{-(\lVert x - P_{\cal{X}_e} (x) \rVert_2 / d_5)^2}-1)$ with $d_1, ..., d_5 > 0$. The implementation of the circulation mechanism is given in Algorithm~\ref{alg:circulation_mechanism}.

As shown in Fig.~\ref{fig:w_circulation} in Example~\ref{eg:2d_integrator}, the robot can circulate around the obstacle under the control generated by \eqref{eq:chocbf_qp}. In this example, $\cal{X}_e = \{ [p^\top, 0] \in \R^4 \mid p \in \R^2\}$ for \eqref{eq:double_integrator_2d} (therefore $ \lVert x - P_{\cal{X}_e} (x) \rVert_2 = \lVert v \rVert_2$), and we used $c(x) = \begin{bmatrix} 0 & 1 \\ -1 & 0 \end{bmatrix} a(x)$ and $d(h(x), \lVert x - P_{\cal{X}_e} (x) \rVert_2) = 1 - h(x) - \lVert v \rVert_2$.

\begin{figure}[t]
 \centering
 \begin{subfigure}[b]{0.24\textwidth}
     \centering
     \includegraphics[width=\textwidth]{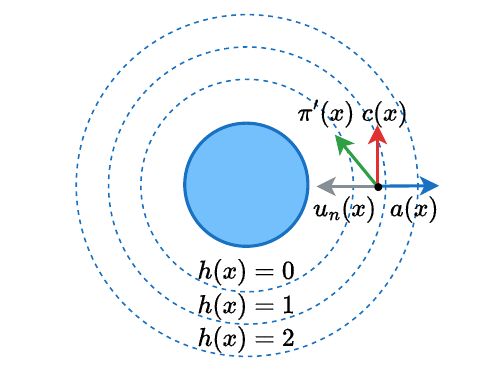}
     \caption{General case.}
     \label{fig:constraint_normal}
 \end{subfigure}
 \begin{subfigure}[b]{0.24\textwidth}
     \centering
     \includegraphics[width=\textwidth]{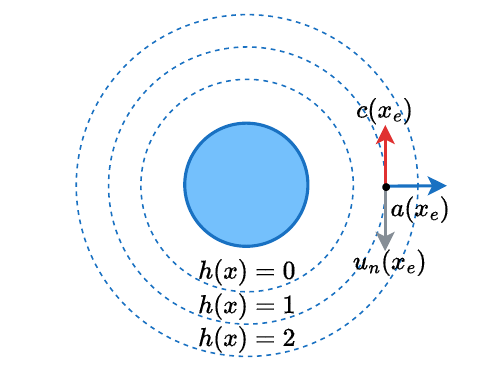}
     \caption{Equilibrium ($\pi'(x_e) = u_e$).}
     \label{fig:constraint_corner}
 \end{subfigure}
\caption{Visualization of the proposed circulation inequality. The vector $c(x)$ (in red) is orthogonal to the vector $a(x)$ (in blue). Let $d(h(x), \lVert x - P_{\cal{X}_e} (x) \rVert_2) = 1 - h(x)$ and $u_e = 0$. \subref{fig:constraint_normal}: $d(h(x), \lVert x - P_{\cal{X}_e} (x) \rVert_2) > 0$ and $\pi'(x)$ (in green) is no longer a projection of $u_n(x)$ (in gray) on $a(x)^\top u = b(x)$. \subref{fig:constraint_corner}: Equilibrium corresponding to the second case of Proposition~\ref{prop:chocbf_qp_equilirium} where $d(h(x_e), 0) = 0$ and $u_n(x_e) = \zeta (x_e) - \rho c(x_e)$ with $\rho \geq 0$.   }
\label{fig:constraint}
\end{figure}

\begin{proposition}\label{prop:chocbf_qp_equilirium}
    Assume that \eqref{eq:chocbf_qp} is feasible for each $x \in \cal{X}$, $\pi' (x) \notin \partial \cal{U}$, $h$ is of relative degree $m$ for the system \eqref{eq:affine_sys}, and system \eqref{eq:affine_sys} satisfies Assumption~\ref{ass:circulation_ass}. If the system \eqref{eq:affine_sys} controlled by $\pi'$ (given by \eqref{eq:chocbf_qp}) enters any equilibrium $(x_e, u_e)$ satisfying \eqref{eq:equilibrium}, only two cases can happen: (i) $u_n (x_e) = u_e$; (ii) $d(h(x_e), 0) = 0$ and $u_n (x_e) = \zeta (x_e) - \rho c(x_e)$ with $\rho \geq 0$.
\end{proposition}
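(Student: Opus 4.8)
The plan is to follow the same case analysis used in the proof of Proposition~\ref{prop:equilibrium_on_boundary}, but now carrying along the additional circulation constraint of \eqref{eq:chocbf_qp} via the KKT conditions of the QP. The first step is to record what the circulation data becomes at an equilibrium. Since $(x_e,u_e)$ satisfies \eqref{eq:equilibrium}, by definition $x_e \in \cal{X}_e$, so the projection collapses: $P_{\cal{X}_e}(x_e) = x_e$ and hence $\lVert x_e - P_{\cal{X}_e}(x_e)\rVert_2 = 0$. Moreover, by Assumption~\ref{ass:circulation_ass}, $g(x_e)$ has full column rank and $f(x_e)$ lies in its column space, so $u_e$ is the \emph{unique} solution of \eqref{eq:equilibrium} and equals $\zeta(x_e) = \zeta(P_{\cal{X}_e}(x_e))$. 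Consequently, evaluating the left-hand side of the circulation constraint at the applied control $u_e = \pi'(x_e)$ yields $c(x_e)^\top [u_e - \zeta(P_{\cal{X}_e}(x_e))] = 0$, so feasibility forces $d(h(x_e),0) \le 0$.

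The crux is then to show that the CBF constraint $a(x)^\top u \ge b(x)$ cannot be active at $x_e$. Suppose it were. Then exactly as in Proposition~\ref{prop:equilibrium_on_boundary}, writing $0 = a(x_e)^\top u_e - b(x_e) = \dot\psi_{m-1}(x_e) + \Gamma_m(\psi_{m-1}(x_e))$ and using $\dot\psi_{m-1}(x_e) = 0$ at an equilibrium, together with the recurrence \eqref{eq:intermiate_barrier_funcs} and the class-$\cal{K}$ property of each $\Gamma_i$, one descends to $h(x_e) = \psi_0(x_e) = 0$. But then $d(h(x_e),0) = d(0,0) > 0$, contradicting the feasibility bound $d(h(x_e),0) \le 0$ established above. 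Hence the CBF constraint must be strictly inactive, $a(x_e)^\top u_e > b(x_e)$; this is precisely how circulation excludes equilibria on $\{h = 0\}$, the case that Proposition~\ref{prop:equilibrium_on_boundary} still permitted.

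With the CBF constraint strictly inactive, its multiplier vanishes by complementary slackness, and since $\pi'(x_e)$ lies in the interior of $\cal{U}$ the control-bound constraints are inactive as well. The stationarity condition of the strongly convex, affinely constrained QP \eqref{eq:chocbf_qp} (for which KKT is necessary) then reduces to $u_e - u_n(x_e) = \rho\, c(x_e)$ for some $\rho \ge 0$, the scaled multiplier of the circulation constraint. If the circulation constraint is also inactive, then $\rho = 0$ and $u_n(x_e) = u_e$, giving case (i). If it is active, its two sides coincide; since the left-hand side is $0$ as shown, we get $d(h(x_e),0) = 0$, and the stationarity relation with $u_e = \zeta(x_e)$ gives $u_n(x_e) = \zeta(x_e) - \rho c(x_e)$, $\rho \ge 0$, which is case (ii).

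The main obstacle is conceptual rather than computational: it is the observation that at an equilibrium the circulation constraint degenerates, the projection term, the distance argument of $d$, and $\zeta(P_{\cal{X}_e}(x_e))$ all collapse so that the constraint's left-hand side is identically zero at $u_e$. Getting these collapses right relies essentially on Assumption~\ref{ass:circulation_ass} to guarantee $x_e \in \cal{X}_e$, uniqueness of $u_e$, and well-definedness of $\zeta$ and $P_{\cal{X}_e}$; this is what turns the positivity $d(0,0) > 0$ into a genuine contradiction and removes the boundary equilibria. The remaining KKT bookkeeping is routine.
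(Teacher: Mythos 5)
Your proposal is correct and follows essentially the same route as the paper's proof: both use Assumption~\ref{ass:circulation_ass} to collapse $P_{\cal{X}_e}(x_e)=x_e$ and $u_e=\zeta(x_e)$, rule out an active CBF constraint by combining the $h(x_e)=0$ argument of Proposition~\ref{prop:equilibrium_on_boundary} with the circulation constraint's feasibility bound $0 \geq d(0,0) > 0$, and then split on whether the circulation constraint is active, using the QP's KKT conditions. Your write-up is somewhat more explicit about the stationarity and complementary-slackness bookkeeping that the paper summarizes as ``by analyzing the KKT conditions,'' but the argument is the same.
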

\begin{proof}
Under Assumption~\ref{ass:circulation_ass}, we have $P_{\cal{X}_e} (x_e) = x_e$ and $u_e = \zeta (x_e)$. First, it is easy to see that the constraint $a(x)^\top u \geq b(x)$ will be inactive at $(x_e, u_e)$. If it is active, we have $h(x_e) = 0$ (as in Proposition~\ref{prop:equilibrium_on_boundary}), and the constraint $c(x)^\top [u - \zeta(P_{\cal{X}_e} (x))] \geq d(h(x), \lVert x - P_{\cal{X}_e} (x) \rVert_2) $ gives $0 \geq d(h(x_e), \lVert x_e - P_{\cal{X}_e} (x_e) \rVert_2 ) = d(0,0) > 0$, leading to a contradiction. If the constraint $c(x)^\top [u - \zeta(P_{\cal{X}_e} (x))] \geq d(h(x), \lVert x - P_{\cal{X}_e} (x) \rVert_2) $ is inactive at $(x_e, u_e)$, then both constraints are inactive, and we have $u_n (x_e) = u_e$. If it is active, we have $d(h(x_e), \lVert x_e - P_{\cal{X}_e} (x_e) \rVert_2)  = d(h(x_e), 0)= c(x_e)^\top (u_e-\zeta (x_e)) = 0$, and $u_n (x_e) = \zeta (x_e) - \rho c(x_e)$ with $\rho \geq 0$ by analyzing the KKT conditions.
\end{proof}

Proposition~\ref{prop:chocbf_qp_equilirium} characterizes the equilibria introduced by \eqref{eq:chocbf_qp}. To avoid the equilibria belonging to the second case (see Fig.~\ref{fig:constraint_corner}), we can fine-tune the $d$ function so that $d(h(x_e), 0) = 0$ and $u_n (x_e) = \zeta (x_e) - \rho c(x_e)$ do not hold at the same time. If $d(h(x), \lVert x - P_{\cal{X}_e} (x) \rVert_2) = 1 - d_1 h(x) - d_2 \lVert x - P_{\cal{X}_e} (x) \rVert_2$ with $d_1, d_2 > 0$, a practical approach would be to gradually increase (or decrease) $d_1$ to encourage the circulation constraint to be active earlier (or later). Although our proposed circulation mechanism cannot fully eliminate all spurious equilibria, it prevents equilibria on the boundary of the safe set and significantly enhances the applicability of CBF-QP approaches, with only a minimal increase in complexity. Unlike \cite{reis2020control, mestres2022optimization, tan2024undesired} that study equilibria arising from the interplay of CLFs and CBFs, our work considers a CBF with an arbitrary nominal controller. Because the spurious equilibria depend on both the controller (e.g., stabilizing, tracking) and the CBF, a full stability analysis of the equilibria becomes intractable. Finally, the next proposition establishes the continuity of the control $\pi'$ given by \eqref{eq:chocbf_qp}. 

\begin{proposition}\label{prop:continuity}
    Assume that \eqref{eq:chocbf_qp} is feasible for each $x \in \cal{X}$, $\pi' (x) \notin \partial \cal{U}$, and system \eqref{eq:affine_sys} satisfies Assumption~\ref{ass:circulation_ass}. If $a(x) \neq 0$, $c(x) \neq 0$, and $u_n$ is continuous in $x$, then $\pi'$ defined in \eqref{eq:chocbf_qp} is continuous in $x$. 
\end{proposition}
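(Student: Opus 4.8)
The plan is to reduce \eqref{eq:chocbf_qp} to an unconstrained-in-$\cal{U}$ projection and then exploit the orthogonality of $c(x)$ and $a(x)$ to obtain a closed-form expression for $\pi'(x)$, from which continuity follows by inspection. First I would use the hypothesis $\pi'(x) \notin \partial \cal{U}$, i.e., $\pi'(x) \in \interior \cal{U}$. Since the objective $\norm{u - u_n(x)}_2^2$ is strongly convex, the minimizer is unique, and because it lies in the interior of $\cal{U}$ the control-limit constraint is inactive; by a standard KKT argument the multiplier associated with $u \in \cal{U}$ vanishes, so $\pi'(x)$ coincides with the unique solution of the relaxed problem that drops $\cal{U}$ and keeps only the two linear inequalities $a(x)^\top u \geq b(x)$ and $c(x)^\top u \geq e(x)$, where $e(x) \defeq d(h(x), \norm{x - P_{\cal{X}_e}(x)}_2) + c(x)^\top \zeta(P_{\cal{X}_e}(x))$.

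Next I would solve the relaxed problem explicitly. Writing $u = u_n(x) + \delta$ and decomposing $\delta$ along the orthonormal directions $\hat a = a(x)/\norm{a(x)}_2$ and $\hat c = c(x)/\norm{c(x)}_2$ (orthogonal by construction, and well defined because $a(x) \neq 0$ and $c(x) \neq 0$) together with their orthogonal complement, the objective $\norm{\delta}_2^2$ separates into a sum of three independent nonnegative terms, while the two inequality constraints act on the $\hat a$- and $\hat c$-coordinates separately. Minimizing each term subject to its own one-sided constraint yields the decoupled solution
\begin{equation*}
\pi'(x) = u_n(x) + \max\bigl(0,\gamma_1(x)\bigr)\,\hat a + \max\bigl(0,\gamma_2(x)\bigr)\,\hat c,
\end{equation*}
with $\gamma_1(x) = (b(x) - a(x)^\top u_n(x))/\norm{a(x)}_2$ and $\gamma_2(x) = (e(x) - c(x)^\top u_n(x))/\norm{c(x)}_2$. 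The orthogonality $c(x)^\top a(x) = 0$ is precisely what makes this projection onto the intersection of two half-spaces separate into two independent one-dimensional projections; establishing this decoupling (and the identification of $\pi'$ with the relaxed solution) is the step I expect to carry the proof.

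Finally I would verify that every ingredient of the closed form is continuous wherever $a(x)\neq 0$ and $c(x)\neq 0$. The maps $a$ and $b$ are continuous, being built from Lie derivatives of $h$, which is at least $\cal{C}^1$ in $x$ by Theorem~\ref{thm:high_order_cd_alpha}; $c$ is continuous by hypothesis and $u_n$ by assumption, so $\hat a$, $\hat c$, and $\gamma_1$ are continuous where the normals do not vanish. For $\gamma_2$ I need $e$ continuous: $P_{\cal{X}_e}$ is continuous because $\cal{X}_e$ is closed and convex (Assumption~\ref{ass:circulation_ass} and the remark following it), $\zeta = -g(\cdot)^\dagger f(\cdot)$ is continuous since $g$ has full column rank on $\cal{X}_e$ (so its pseudoinverse varies continuously) and $f$ is continuous, $d$ is continuous by assumption, and $\norm{x - P_{\cal{X}_e}(x)}_2$ is continuous; composition and addition then make $e$ continuous. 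The most delicate continuity check is that of $\zeta \circ P_{\cal{X}_e}$, which relies on the full-column-rank condition in Assumption~\ref{ass:circulation_ass}. Since $\max(0,\cdot)$ is continuous, $\pi'$ is a finite sum of compositions of continuous functions and hence continuous in $x$.
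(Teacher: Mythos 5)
Your proof is correct, but it takes a genuinely different route from the paper's. The paper's proof is essentially a citation: uniqueness of the minimizer follows from strong convexity, the orthogonality and nonvanishing of $a(x)$ and $c(x)$ give linear independence, which verifies the regularity condition of \cite[Theorem~1]{hager1979lipschitz}, and that theorem delivers Lipschitz continuity of the QP solution in the problem data $u_n(x)$, $a(x)$, $b(x)$, $c(x)$, $d(h(x), \lVert x - P_{\cal{X}_e}(x)\rVert_2)$, $\zeta(P_{\cal{X}_e}(x))$; continuity in $x$ then follows because the data is continuous. You instead use the interiority hypothesis $\pi'(x) \notin \partial\cal{U}$ to identify $\pi'(x)$ with the minimizer of the relaxed problem without the $\cal{U}$ constraint, and then exploit the orthogonality $c(x)^\top a(x) = 0$ (which holds by construction in \eqref{eq:chocbf_qp}) to decouple the projection onto the intersection of two half-spaces into two independent one-dimensional projections, reading continuity off the resulting closed form; your verification of the continuity of each ingredient ($P_{\cal{X}_e}$ via closed convexity, $\zeta$ via the full-column-rank condition) matches what the paper leaves implicit. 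Both arguments are sound. Yours is self-contained, more elementary, and yields an explicit formula for the controller (plus local Lipschitz continuity wherever the data is Lipschitz); the paper's is shorter and more robust in that it needs only linear independence of $a$ and $c$ rather than orthogonality—so it would survive a redesign where $c$ is merely transversal to $a$, in which case your decoupling fails—and it treats the constraint $u \in \cal{U}$ inside the same framework rather than discarding it. Two minor points to tighten in your write-up: since $\cal{U}$ is only assumed compact (not polyhedral), the ``standard KKT argument'' for dropping $\cal{U}$ is cleaner as a normal-cone or segment argument (for $\pi'(x) \in \interior\,\cal{U}$, points on the segment toward the relaxed minimizer remain feasible for small steps, so strict convexity forces the two minimizers to coincide); and continuity of $a$ and $b$ for relative degree $m=2$ requires $h \in \cal{C}^2$, not merely $\cal{C}^1$, which is what Theorem~\ref{thm:high_order_cd_alpha} provides when the scaling functions are $\cal{C}^3$.
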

\begin{proof}
    Since the problem \eqref{eq:chocbf_qp} is assumed to be feasible and has a strongly convex objective function, the solution to \eqref{eq:chocbf_qp} is unique. As $a(x)$ and $c(x)$ are linearly independent, the regularity conditions of \cite[Theorem~1]{hager1979lipschitz} hold. As a result, $\pi'$ is Lipschitz continuous in the data $u_n (x), a(x), b(x), c(x)$, $d(h(x), \lVert x - P_{\cal{X}_e} (x) \rVert_2)$, and $\zeta (P_{\cal{X}_e} (x))$. As the data is continuous in $x$, $\pi'$ is therefore continuous in $x$.
\end{proof}

For simplicity, it is also desirable to use joint acceleration as the intermediate control $\Tilde{u}$, and we have $\Tilde{u}_e = 0$ in this case. Then, the actual input torque is calculated by $u = M(q) \Tilde{u} + \sigma (q, \dot{q})$ as in \eqref{eq:lagrange_mechanics}. If $K$ CBFs ($h_1, ..., h_K$) are employed, a practical way is to use the smooth minimum of the CBFs and then construct the circulation constraint based on the smooth minimum \cite{usevitch2021adversarial}. Define the smooth minimum operator
\begin{equation}\label{eq:smooth_min}
    \varphi (h_1, ..., h_K) = -\frac{1}{\eta} \ln \left ( 
    \frac{1}{K} \sum_{i=1}^K e^{-\eta h_i} \right ) 
\end{equation}
where $\eta > 0$ is a tunable parameter that controls how tightly $\varphi (h_1, ..., h_K)$ approximates $h_{\min} =   \min (h_1, ..., h_K)$. Then, we construct a new CBF $\phi (x) = \varphi (h_1 (x), ..., h_K (x)) - \varphi_0$ (with $\varphi_0 > 0$ to compensate for the approximation error) and build the circulation constraint w.r.t. $\phi$. In practice, we use the log-sum-exp trick in the numerical evaluation of $\varphi (h_1, ..., h_K)$, i.e., $\varphi (h_1, ..., h_K) =  -\frac{1}{\eta} \ln ( \frac{1}{K} \sum_{i=1}^K e^{-\eta (h_i - h_{\min})} ) + h_{\min}$, where $h_{\min}$ is subtracted from the exponents to improve numerical stability. 

Although our control design is primarily model-based, HOCBFs also render the safe set asymptotically stable \cite{tan2021high}, providing robustness against model mismatch and dynamics errors. We can also increase $\alpha_0$ to make the HOCBFs more conservative. Our theoretical guarantee relies on the feasibility of the online QP-based controller. In the case of multiple CBFs, we can use the smooth minimum as described previously to avoid competition among the CBF constraints. We can also introduce slack variables that transform hard constraints into soft constraints to mitigate potential infeasibility \cite{wang2017safety, lee2023hierarchical}. Our approach can be extended to time-varying CBFs by including time $t$ in the parameter $\theta$, provided the conditions of Theorems~\ref{thm:cd_alpha} and \ref{thm:high_order_cd_alpha} still hold. However, explicit time dependence adds terms to the CBF constraint \cite{xiao2021high}, invalidating the characterization of spurious equilibria in Proposition~\ref{prop:equilibrium_on_boundary}. Our results in Sections~\ref{sec:diff_opt} and \ref{sec:scaling_funcs} can also be used in the safe backstepping method proposed in \cite{taylor2022safe, cohen2024safety}, which is another line of research on safe control for high relative degree cases.

\section{Experiments}\label{sec:experiments}
We validate our method through three real-world experiments using the Franka Research 3 (FR3) robotic manipulator, which has seven revolute joints and a gripper. The computational engine is a PC with 64GB RAM and an Intel Core i9-13900K processor. We use OSQP \cite{stellato2020osqp} for QPs, SCS \cite{odonoghue2016conic} for exponential cone problems, and Pinocchio \cite{carpentier2019pinocchio} for frame pose calculations. In the first two experiments (Sections~\ref{sec:pick_and_place} and \ref{sec:whiteboard_cleaning}), the environment is assumed known. In the third (Section~\ref{sec:exp_flying_ball}), we use a Vicon motion capture system to track a moving obstacle. In the first two experiments, we demonstrate the effectiveness of the circulation mechanism and show that the lack of the circulation inequality leads to the robot getting stuck at a spurious equilibrium. The video of the experiments can be found at 
\url{https://youtu.be/uZmM3_wBjGY}.

\subsection{Pick and Place}
\label{sec:pick_and_place}
\begin{figure}[t]
    \centering
    \begin{subfigure}[b]{0.2\textwidth}
     \centering
     \includegraphics[width=\textwidth]{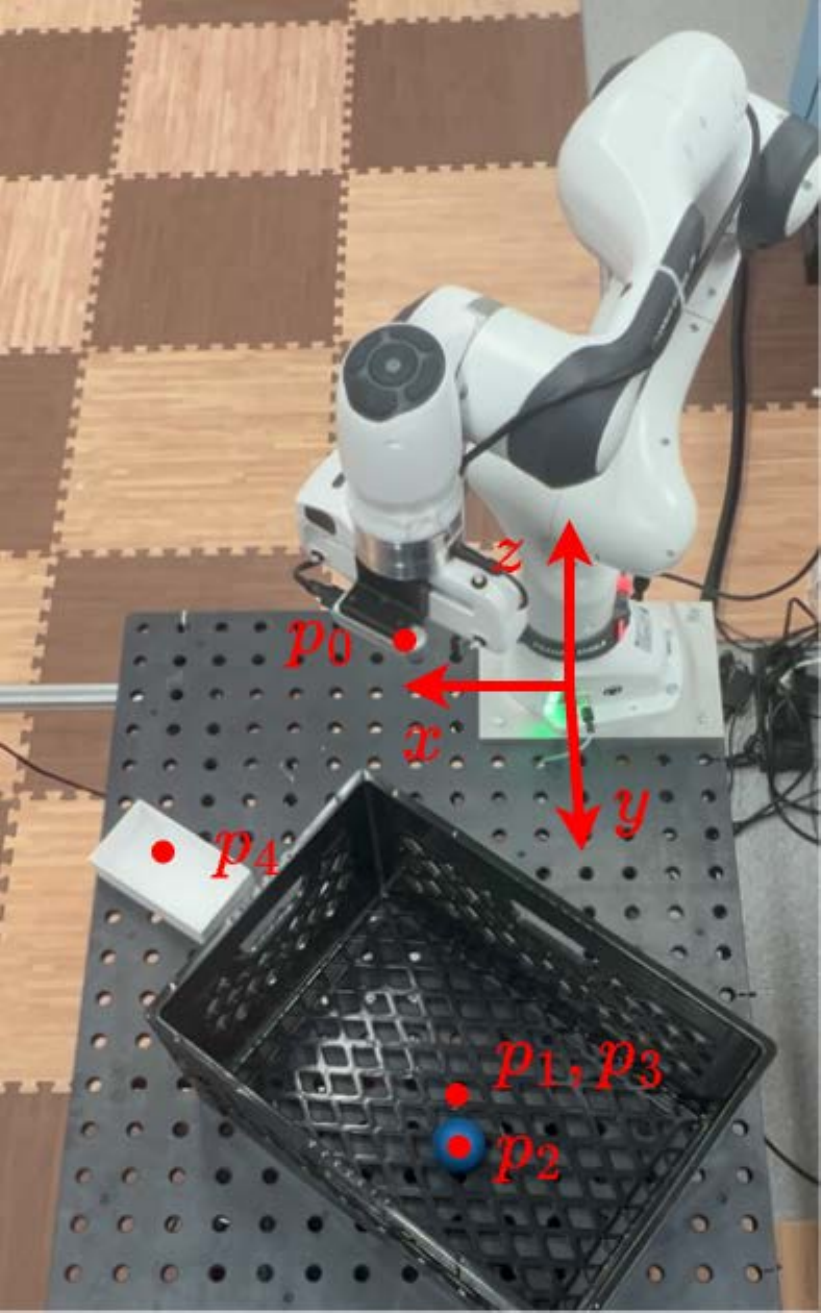}
     \caption{Overall setting.}
     \label{fig:exp1_task}
 \end{subfigure}
 \hspace{10pt}
 \begin{subfigure}[b]{0.2\textwidth}
     \centering
     \includegraphics[width=\textwidth]{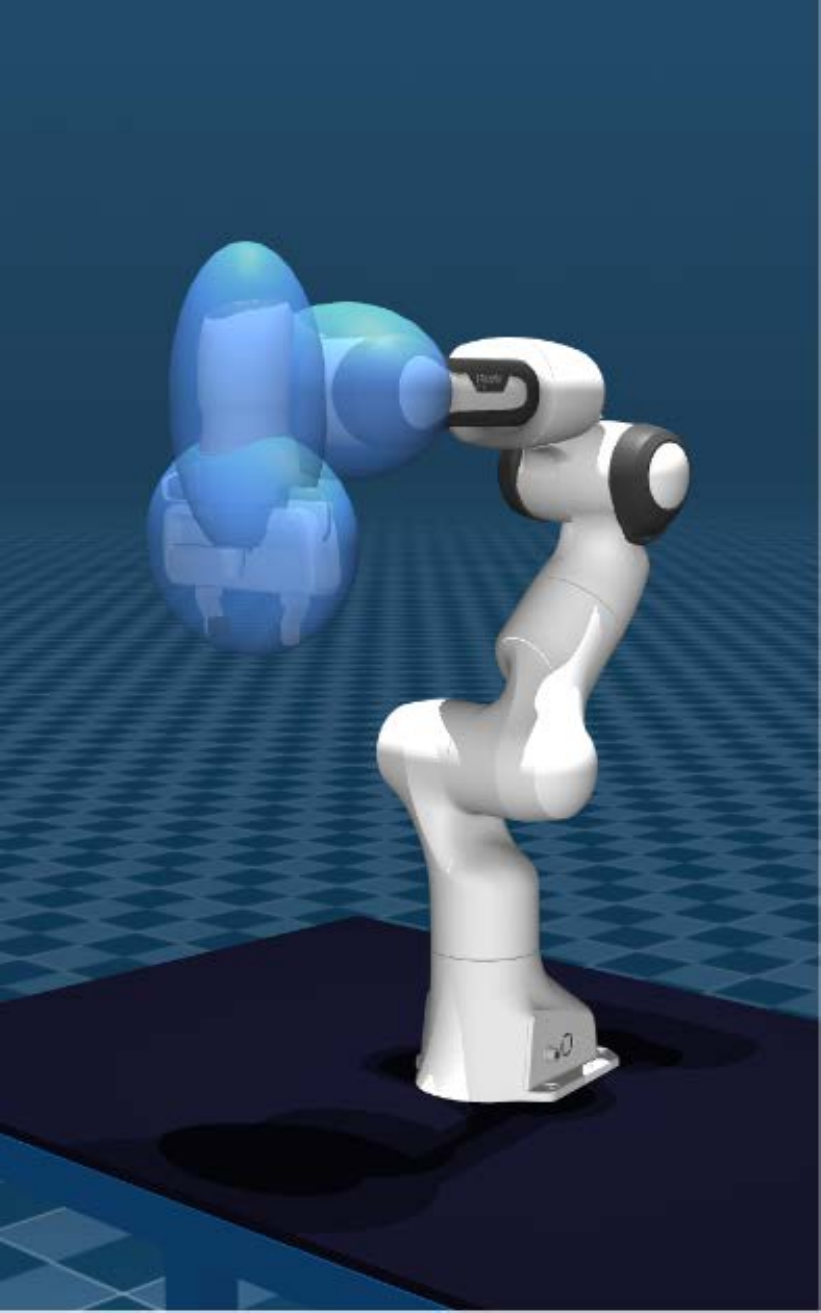}
     \caption{Bounding ellipsoids.}
     \label{fig:exp1_bounding_boxes}
 \end{subfigure}
 \caption{\subref{fig:exp1_task}: Overall setting of the pick-and-place task. Coordinates in \subref{fig:exp1_task}: $p_0 = (0.16, 0.26, 0.49)$, $p_1 = p_3 = (0.15, 0.62, 0.10)$, $p_2 = (0.15, 0.62, 0.02)$, $p_4 = (0.50, 0.25, 0.08)$. \subref{fig:exp1_bounding_boxes}: Bounding ellipsoids of the 6th and 7th links and the end-effector of the robotic manipulator. The simulation figure is generated using MuJoCo.}
 \label{fig:exp1}
\end{figure}

\begin{figure*}[t]
    \centering
    \begin{subfigure}[b]{0.16\textwidth}
        \centering
        \includegraphics[width=\textwidth]{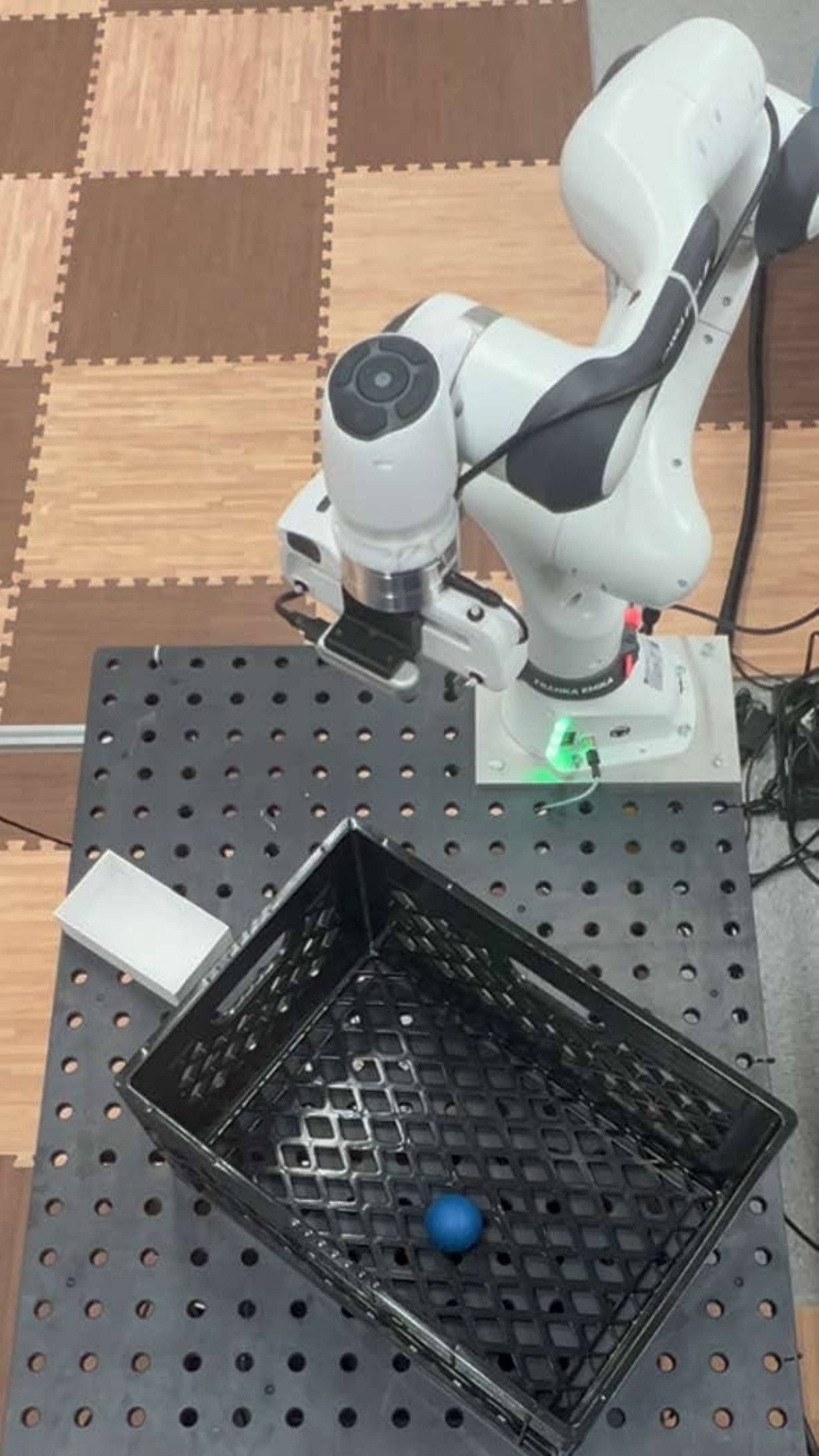}
        \caption{$t=0$~\si{s}.}
        \label{fig:exp1_with_circ_t_0}
    \end{subfigure}
    \begin{subfigure}[b]{0.16\textwidth}
        \centering
        \includegraphics[width=\textwidth]{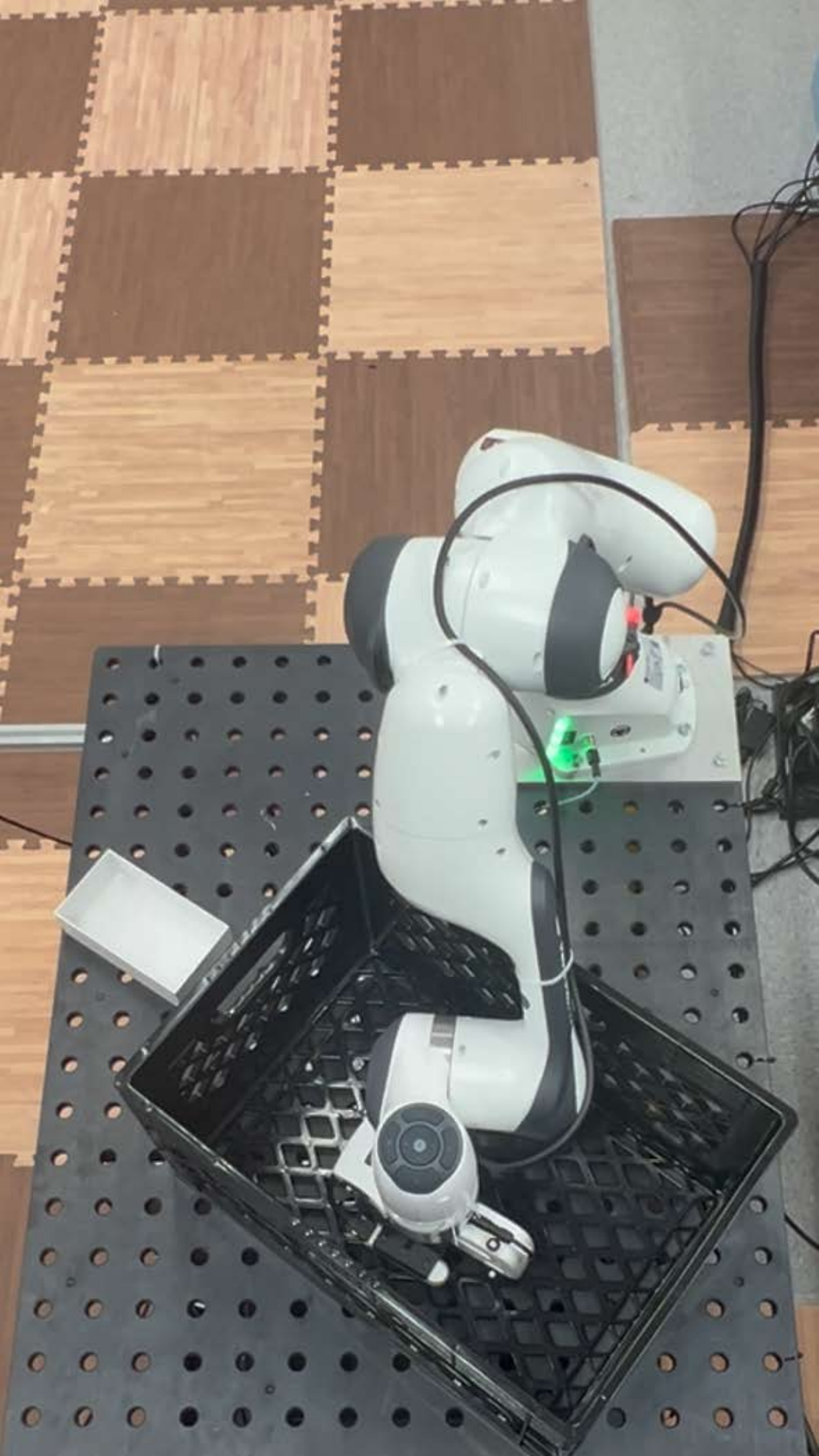}
        \caption{$t=12$~\si{s} (wo circ.).}
        \label{fig:exp1_no_circ_t_12}
    \end{subfigure}
    \begin{subfigure}[b]{0.16\textwidth}
        \centering
        \includegraphics[width=\textwidth]{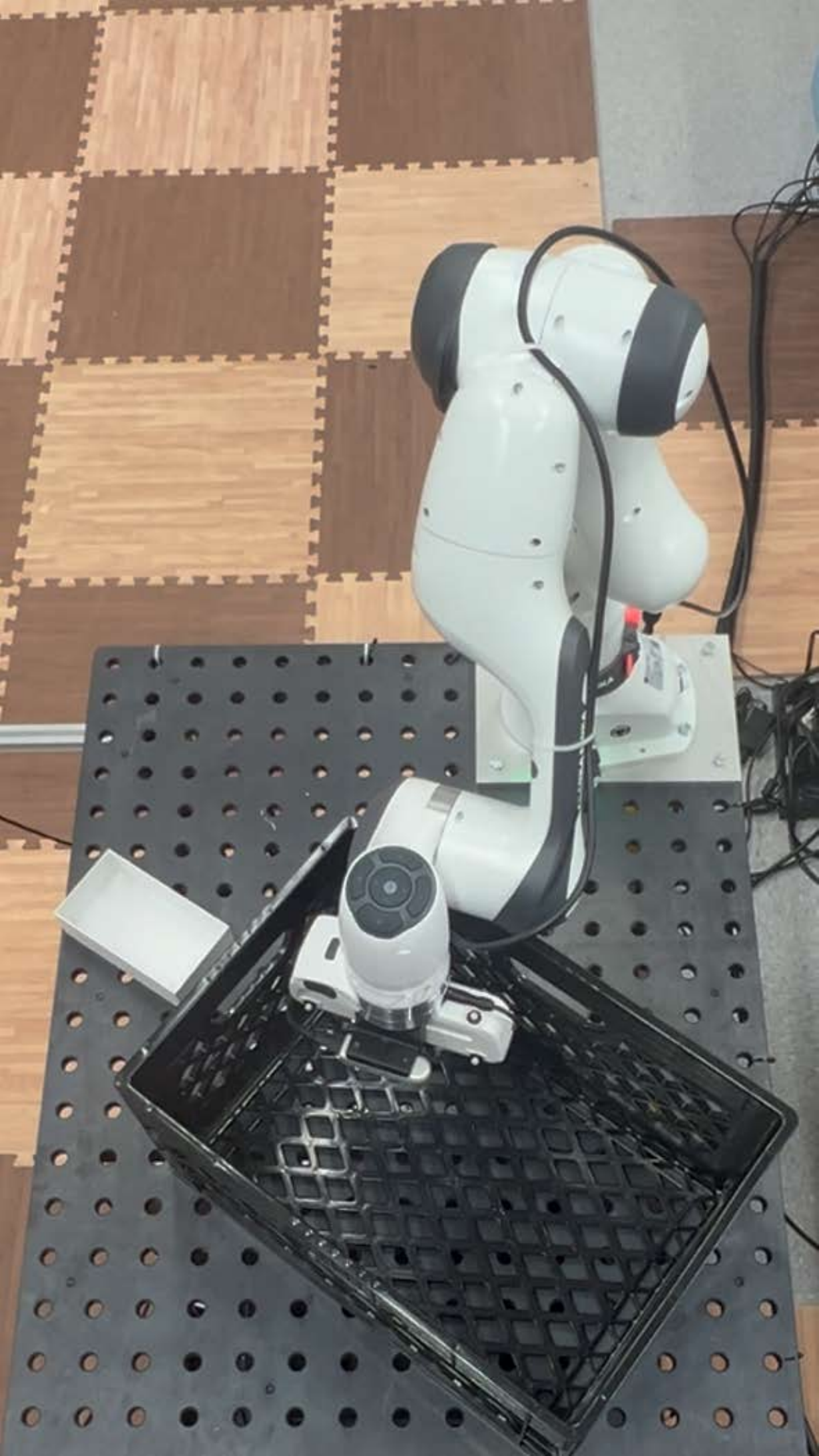}
        \caption{$t=24$~\si{s} (wo circ.).}
        \label{fig:exp1_no_circ_t_24}
    \end{subfigure}
    \begin{subfigure}[b]{0.16\textwidth}
        \centering
        \includegraphics[width=\textwidth]{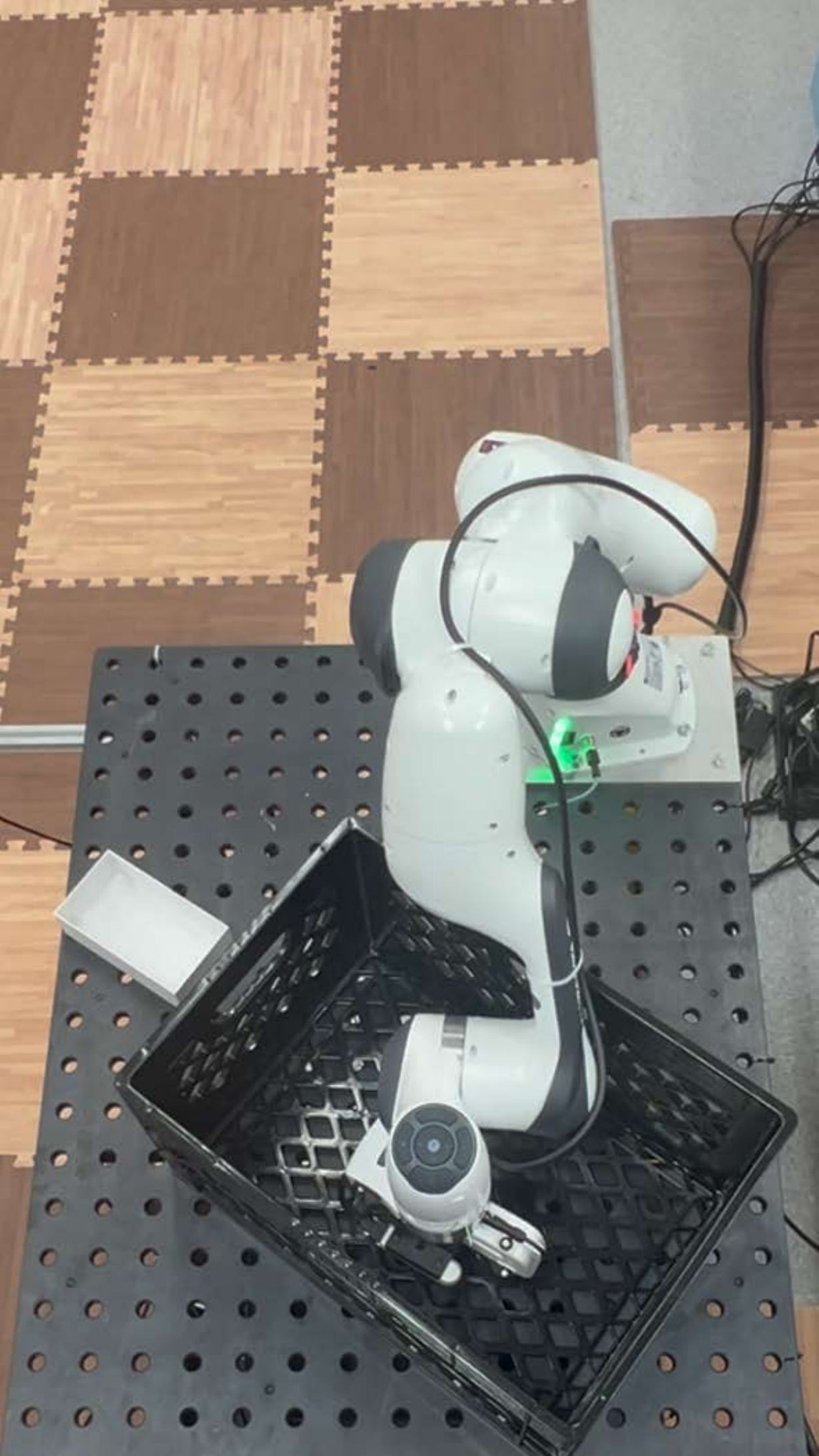}
        \caption{$t=12$~\si{s} (w circ.).}
        \label{fig:exp1_with_circ_t_12}
    \end{subfigure}
    \begin{subfigure}[b]{0.16\textwidth}
        \centering
        \includegraphics[width=\textwidth]{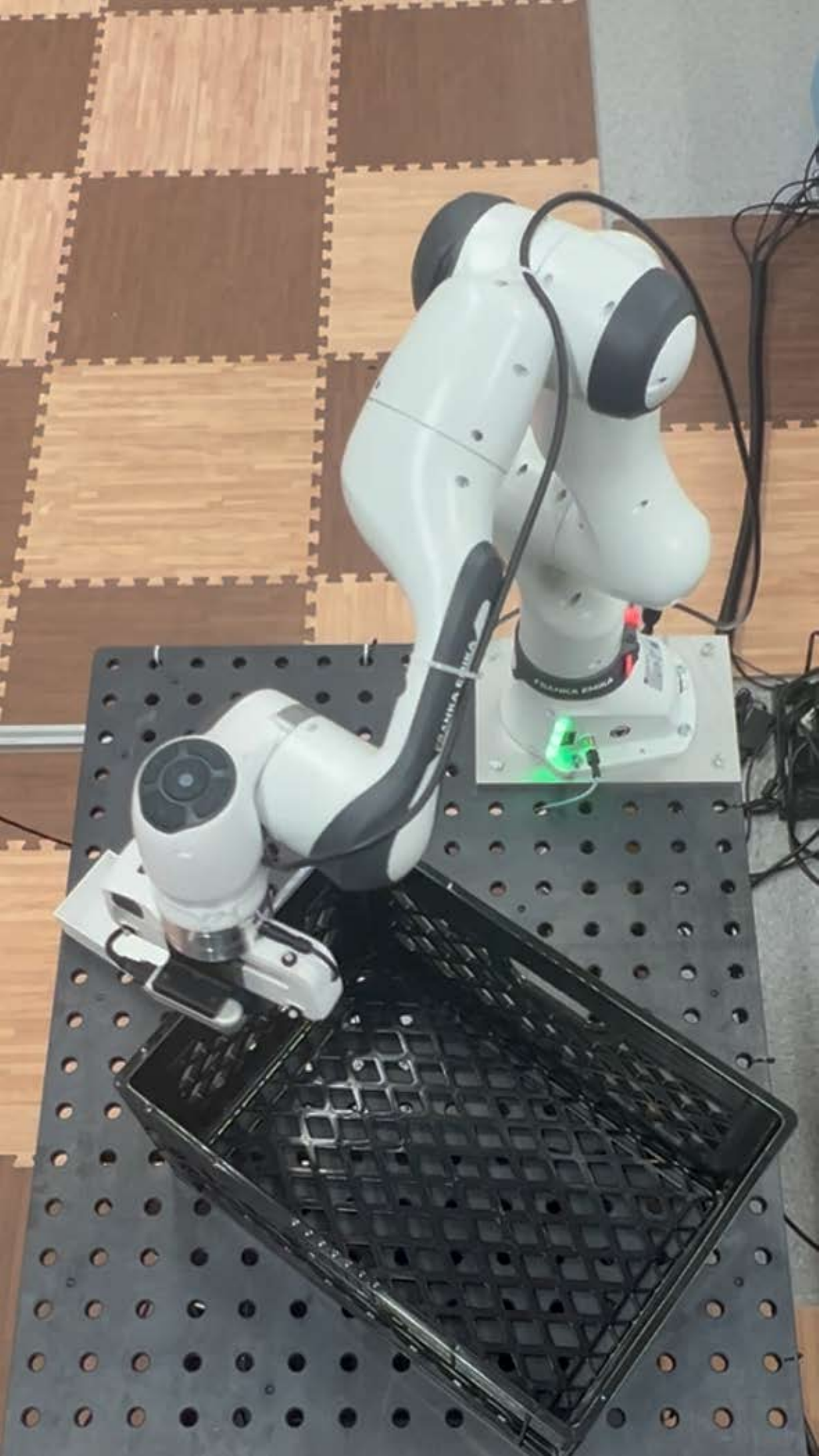}
        \caption{$t=23$~\si{s} (w circ.).}
        \label{fig:exp1_with_circ_t_23}
    \end{subfigure}
    \begin{subfigure}[b]{0.16\textwidth}
        \centering
        \includegraphics[width=\textwidth]{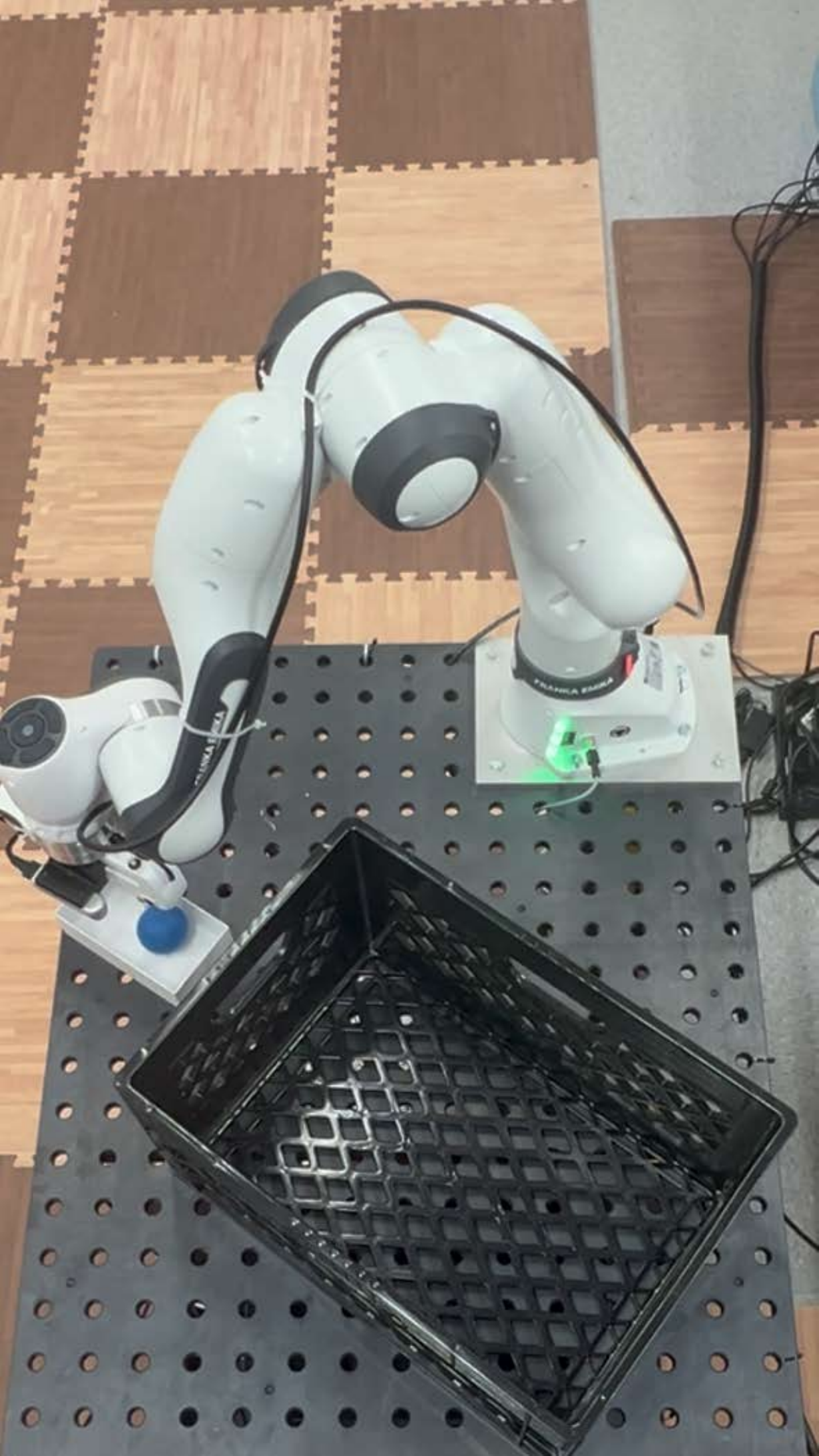}
        \caption{$t=28$~\si{s} (w circ.).}
        \label{fig:exp1_with_circ_t_28}
    \end{subfigure}
\caption{Snapshots of the experiment. \subref{fig:exp1_with_circ_t_0}: Initial pose. \subref{fig:exp1_no_circ_t_12} and \subref{fig:exp1_no_circ_t_24}: Poses without the circulation constraint. The manipulator enters an equilibrium from $t=24$~\si{s} (see \subref{fig:exp1_no_circ_t_24}). \subref{fig:exp1_with_circ_t_12}-\subref{fig:exp1_with_circ_t_28}: Poses with the circulation constraint. The task is successful.}
\label{fig:exp1_clips}
\end{figure*}

\begin{figure}[t]
    \centering
    \begin{subfigure}[b]{0.24\textwidth}
     \centering
     \includegraphics[width=\textwidth]{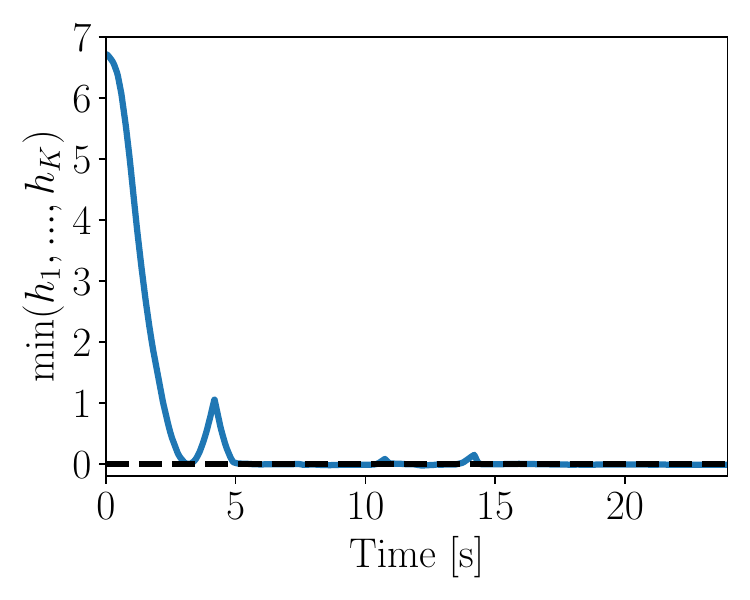}
     \caption{Without circulation.}
     \label{fig:exp1_cbf_no_circ}
 \end{subfigure}
 \begin{subfigure}[b]{0.24\textwidth}
     \centering
     \includegraphics[width=\textwidth]{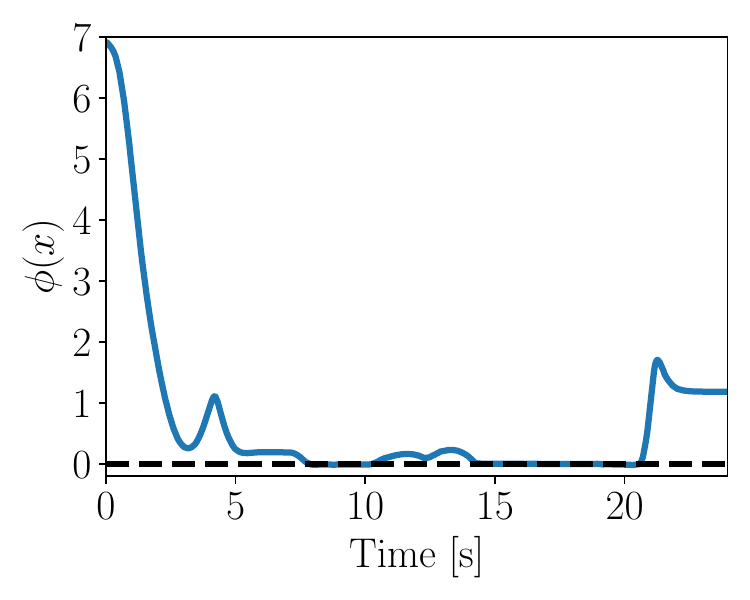}
     \caption{With circulation.}
     \label{fig:exp1_cbf_with_circ}
 \end{subfigure}
 \caption{Evolution of the CBFs.}
 \label{fig:exp1_cbf_values}
\end{figure}

We consider a pick-and-place task using the FR3 robotic manipulator. As shown in Fig.~\ref{fig:exp1_task}, the robotic manipulator needs to pick up the blue ball inside the black utility box and place it inside the white box. The nominal controller commands the gripper to follow line trajectories that sequentially connect the positions $p_0$, $p_1$, $p_2$, $p_3$, and $p_4$ with a constant orientation. The gripper closes at $p_2$ to pick up the ball and releases the ball at $p_4$. To avoid collision with the box and the table surface, we bound the end-effector and the 6th and 7th links of the manipulator using ellipsoids (see Fig.~\ref{fig:exp1_bounding_boxes}). We model the table surface (and the space under it) with a half-space and the four vertical sides of the box with rectangular cuboids (therefore convex polytopes). There are, in total, five obstacles and three moving parts of the robot under consideration, leading to $K = 15$ CBFs as in \eqref{eq:cbf}. More details on this example are given in Appendix~\ref{sec:pick_and_place_details}.

For simplicity, we write the control design in terms of joint accelerations $\Tilde{u}$ and then map the control commands to the actual torque input $u$ using \eqref{eq:lagrange_mechanics}. The system with joint accelerations as control input is
\begin{equation}\label{eq:double_integrator}
    \frac{d}{dt} \underbrace{\begin{bmatrix}
        q \\ \dot{q}
    \end{bmatrix}}_{x} = \underbrace{\begin{bmatrix}
        \dot{q} \\ 0
    \end{bmatrix}}_{\Tilde{f}(x)} + \underbrace{\begin{bmatrix}
        0 \\ I
    \end{bmatrix}}_{\Tilde{g}(x)} \Tilde{u}.
\end{equation}
Let $q_l, q_u \in \R^7$ be the lower and upper bounds of the joint angles. Then, for \eqref{eq:double_integrator}, $\cal{X}_e = \{[q^\top, 0]^\top \in \R^{14} \mid q_l \leq q \leq q_u \}$ and $\zeta (x_e) = 0$.

Let $p_{\text{tcp}} (t) \in \R^3$ and $R_{\text{tcp}} (t) \in \R^{3 \times 3}$ be the position and orientation of the frame centered at the tool center point (TCP), which is the middle point between the two fingertips of the gripper. Given the desired time-varying position $p_d (t)$ and constant orientation $R_d$, we define the tracking control
\begin{equation*}
    \Tilde{u}_{\text{task}} = J_{\text{tcp}}^\dagger \left( \begin{bmatrix}
        \ddot{p}_d - K_{p}^p (p_{\text{tcp}} - p_d) - K_{p}^d (v_{\text{tcp}} - \dot{p}_d)\\
        - K_{r}^p \Log_{\text{SO}(3)} (R_{\text{tcp}} R_d^\top) - K_{r}^d \omega_{\text{tcp}} 
    \end{bmatrix} - \dot{J}_{\text{tcp}} \dot{q} \right)
\end{equation*}
where $J_{\text{tcp}} \in \R^{6 \times 7}$ is the geometric Jacobian of the TCP frame, $v_{\text{tcp}}$ and $\omega_{\text{tcp}}$ are the linear and angular velocities, and $K_{p}^p$, $K_{p}^d$, $K_{r}^p$, $K_{r}^d > 0$  are the 3-by-3 gain matrices. Let and $\bar{q} = (q_l + q_u)/2$, and define 
\begin{equation}\label{eq:joint_centering}
    \Tilde{u}_{\text{joint}} = - K_q^p (q - \bar{q}) - K_q^d \dot{q}
\end{equation}
to encourage the joints to stay in the middle of the joint limits, where $K_q^p, K_q^d > 0$ are the 7-by-7 gain matrices. Then, the nominal control is defined as 
\begin{equation}
    \Tilde{u}_n = \Tilde{u}_{\text{task}} + (I - J_{\text{tcp}}^\dagger J_{\text{tcp}}) \Tilde{u}_{\text{joint}}.
\end{equation}

We take $\Gamma_1 (r) = \gamma_1 r$ and $\Gamma_2 (r) = \gamma_2 r$ as the class $\cal{K}$ functions in \eqref{eq:intermiate_barrier_funcs}. For each CBF $h_i$ ($i = 1, ..., K$), define $\psi_{i,0} (x) = h_i (x)$ and $\psi_{i,1} (x) = \dot{\psi}_{i,0} (x)+ \gamma_1 \psi_{i,0} (x)$. In addition to the CBFs for collision avoidance, we define six extra CBFs to regulate the linear velocity of the TCP. More specifically, let $v_l = [v_{l,1}, v_{l,2}, v_{l,3}]^\top$ and $v_u = [v_{u,1}, v_{u,2}, v_{u,3}]^\top$ be the lower and upper bounds of the TCP's linear velocity. Define $h_{l,i}(x) = v_{\text{tcp},i} - v_{l,i}$ and $h_{u,i}(x) = v_{u,i}-v_{\text{tcp},i}$ with $i = 1,2,3$. Then, the HOCBF-QP without circulation is defined as 
\begin{equation} \label{eq:exp1_hocbf_qp}
    \begin{aligned}
    \Tilde{\pi} (t,x) &= \argmin_{\Tilde{u} \in \Tilde{\cal{U}}} \quad  \lVert \Tilde{u}- \Tilde{u}_n(t,x) \rVert_2^2 \\
    \textrm{s.t.} \quad & \dot{\psi}_{i,1} (x) \geq -\gamma_2 \psi_{i,1} (x), \ i=1,...,K,  \\
    & \dot{h}_{l,i} (x) \geq -\gamma_l h_{l,i} (x), \, i=1,...,3, \\
    & \dot{h}_{u,i} (x) \geq -\gamma_u h_{u,i} (x), \, i=1,...,3,
    \end{aligned}
\end{equation}
where $\gamma_l, \gamma_u > 0$. To enable the circulation constraint, we define a new CBF based on the smooth minimum (see \eqref{eq:smooth_min}) of $h_i$: $\phi(x) = \varphi (h_1 (x), ..., h_K (x)) - \varphi_0$. Define in addition $\psi_{\phi, 0} (x) = \phi(x)$ and $\psi_{\phi,1} (x) = \dot{\psi}_{\phi,0} (x)+ \gamma_1 \psi_{\phi,0} (x)$. The CHOCBF-QP is defined as 
\begin{equation} \label{eq:exp1_chocbf_qp}
    \begin{aligned}
    \Tilde{\pi}' (t,x) & = \argmin_{\Tilde{u} \in \Tilde{\cal{U}}} \quad \lVert \Tilde{u}- \Tilde{u}_n(t,x) \rVert_2^2 \\
    \textrm{s.t.} \quad & \dot{\psi}_{\phi,1} (x) \geq -\gamma_2 \psi_{\phi,1} (x),  \\
    & c(x)^\top [\Tilde{u} - \zeta (P_{\cal{X}_e} (x))] \geq d(\phi(x), \lVert x - P_{\cal{X}_e} (x) \rVert_2), \\
    & \dot{h}_{l,i} (x) \geq -\gamma_l h_{l,i} (x), \, i=1,...,3, \\
    & \dot{h}_{u,i} (x) \geq -\gamma_u h_{u,i} (x), \, i=1,...,3,
    \end{aligned}
\end{equation}
where the term for the circulation inequality $c(x)$ is a vector orthogonal to $a(x) = (L_{\Tilde{g}} L_{\Tilde{f}} \phi (x))^\top$ (the coefficient before $\Tilde{u}$ in $\dot{\psi}_{\phi,1} (x)$), and 
\begin{align*}
    d(\phi(x), \lVert x - P_{\cal{X}_e} (x) \rVert_2) &= d_1 (1-e^{d_2 (\phi(x) - d_3)}) \\
    & + d_4 (e^{-(\lVert \dot{q} \rVert_2 / d_5)^2}-1)
\end{align*}
 with $d_1, ..., d_5 > 0$. In this experiment, we picked $c(x) = [0, -a_3(x), a_2(x), -a_5(x), a_4(x), -a_7(x), a_6(x)]^\top$ where $a_i (x)$ is the $i$-th component of $a(x)$.

Under the controller $\Tilde{\pi}$ from \eqref{eq:exp1_hocbf_qp} without the circulation constraint, the end-effector enters the utility box and picks up the blue ball (see Fig.~\ref{fig:exp1_no_circ_t_12}), but gets stuck at an equilibrium from $t=24$~\si{s} (see Fig.~\ref{fig:exp1_no_circ_t_24}). This is because the nominal controller $\Tilde{u}_n$ tracks a straight line trajectory from $p_3$ to $p_4$ (see Fig.~\ref{fig:exp1_task}), but the CBFs are preventing the robot from colliding with the utility box, creating an equilibrium. This demonstrates that the current levels of noise during the manipulator's normal operation are insufficient to help it escape from the equilibrium. We depict the evolution of $h_{\min} = \min (h_1, ..., h_K)$ in Fig.~\ref{fig:exp1_cbf_no_circ}, and $h_{\min} (x_e)$ is near zero as expected in Proposition~\ref{prop:equilibrium_on_boundary}.

On the other hand, under the controller $\Tilde{\pi}'$ from \eqref{eq:exp1_chocbf_qp} with the circulation constraint, the manipulator successfully completes the task. After picking up the ball (Fig.~\ref{fig:exp1_with_circ_t_12}), the gripper gets out of the box (Fig.~\ref{fig:exp1_with_circ_t_23}), and then places the ball in the white box (Fig.~\ref{fig:exp1_with_circ_t_28}). The values of the parameters used in this experiment are: $\alpha_0 = 1.03$, $\gamma_1 = \gamma_2 = \gamma_l = \gamma_u = 20$, $\eta= 5$, $\varphi_0 = 0.3$, $d_1 = 600$, $d_2 = 2$, $d_3 = 0.02$, $d_4 = 0.1$, and $d_5 = 10$. In this example, 90\% of the control loops are solved in 3.0~\si{ms} or less (the 90th‑percentile (p90) latency).

\subsection{Whiteboard Cleaning}
\label{sec:whiteboard_cleaning}
\begin{figure*}[t]
    \centering
    \begin{subfigure}[b]{0.16\textwidth}
        \centering
        \includegraphics[width=\textwidth]{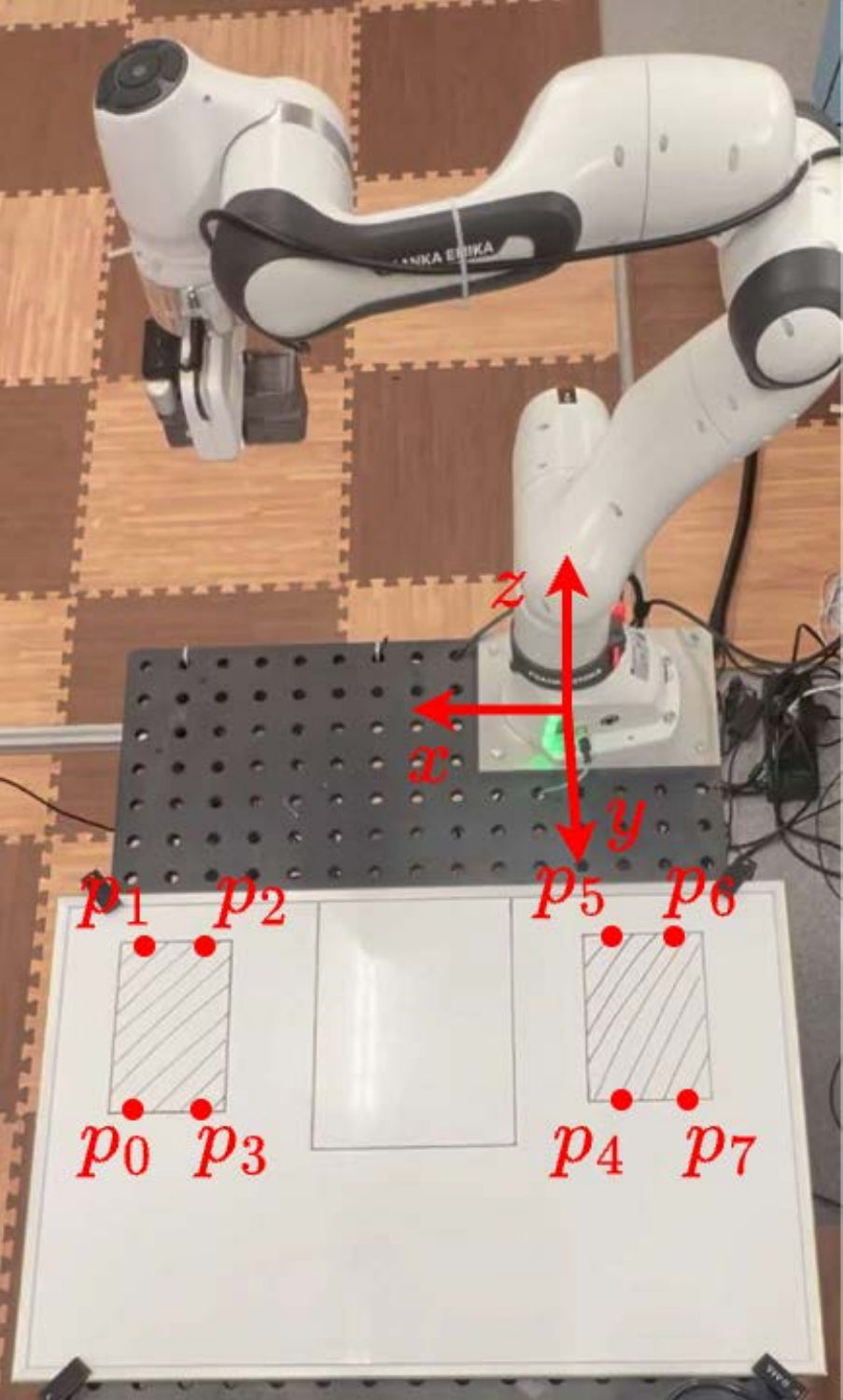}
        \caption{Overall setting.}
        \label{fig:exp2_coordinates}
    \end{subfigure}
    \begin{subfigure}[b]{0.16\textwidth}
        \centering
        \includegraphics[width=\textwidth]{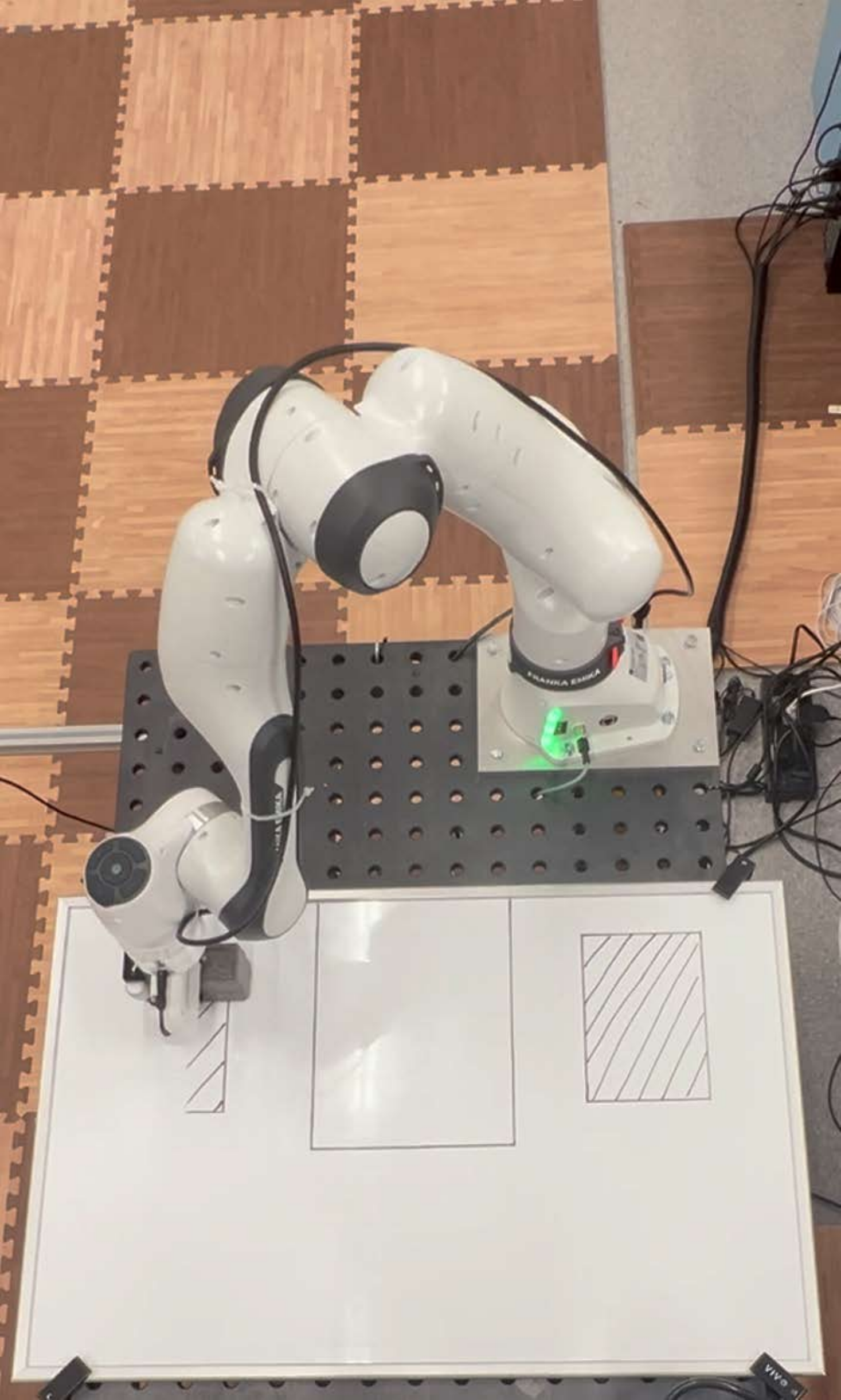}
        \caption{$t=4$~\si{s} (wo circ.).}
        \label{fig:exp2_no_circ_t_4}
    \end{subfigure}
    \begin{subfigure}[b]{0.16\textwidth}
        \centering
        \includegraphics[width=\textwidth]{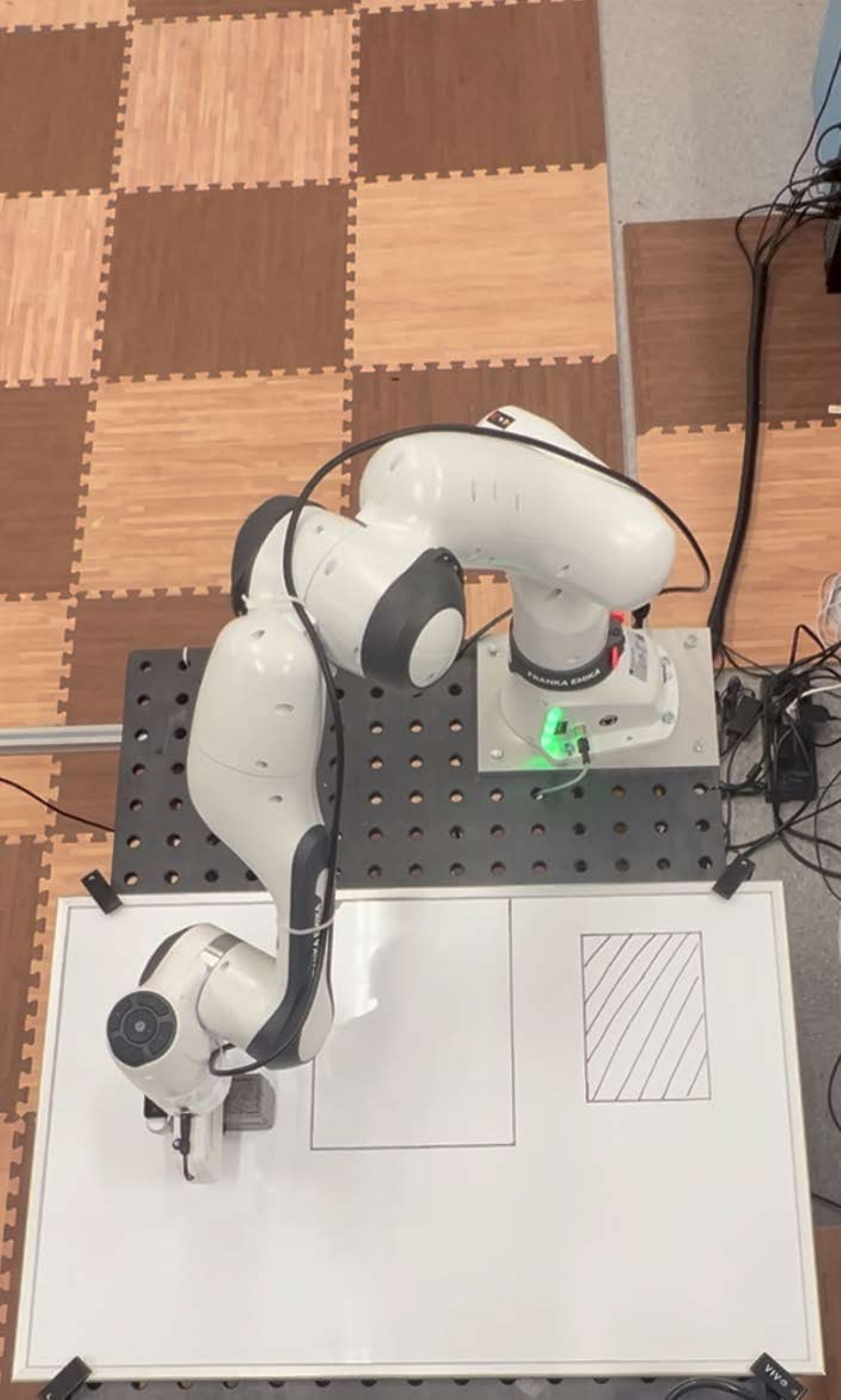}
        \caption{$t=10$~\si{s} (wo circ.).}
        \label{fig:exp2_no_circ_t_10}
    \end{subfigure}
    \begin{subfigure}[b]{0.16\textwidth}
        \centering
        \includegraphics[width=\textwidth]{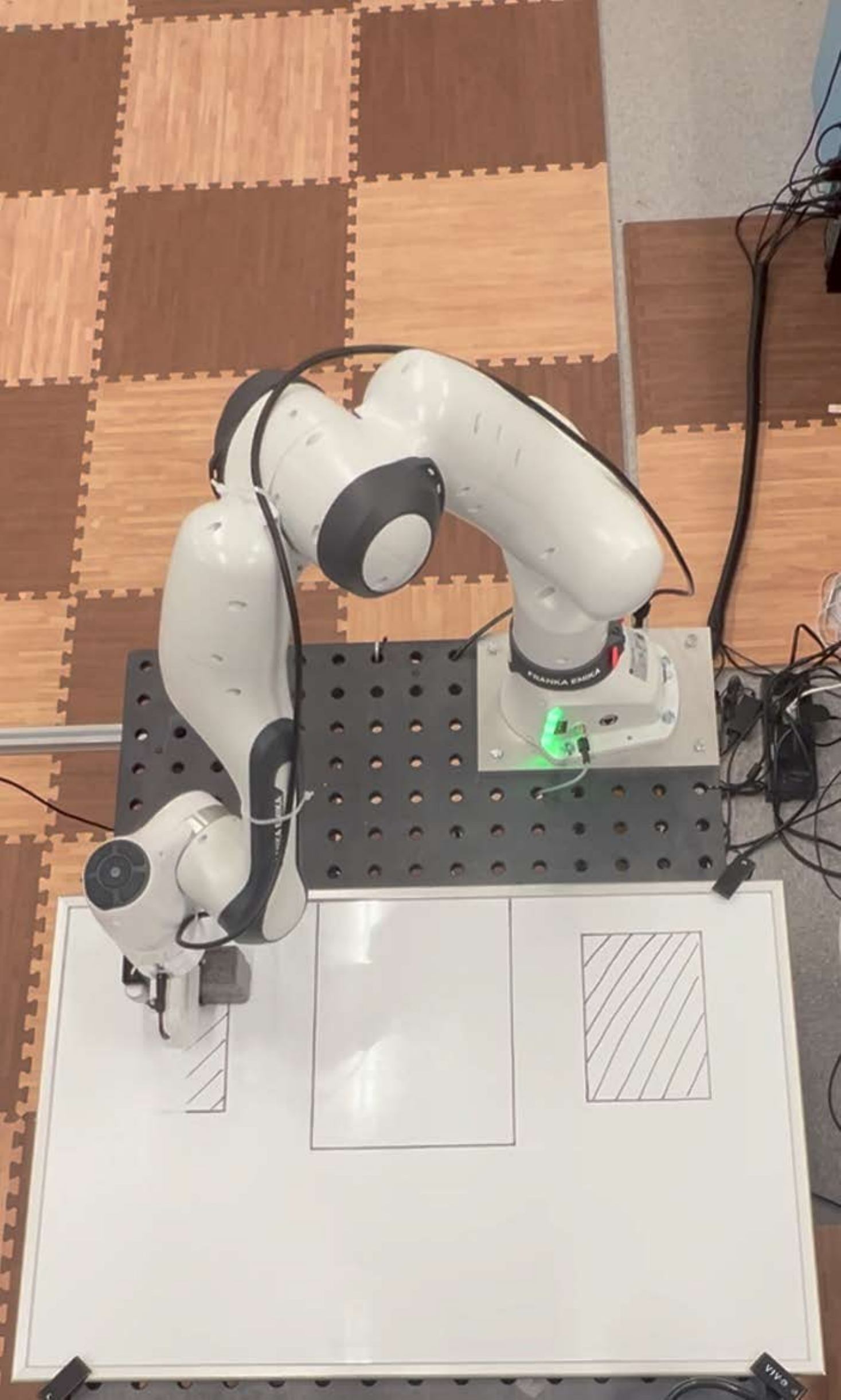}
        \caption{$t=4$~\si{s} (w circ.).}
        \label{fig:exp2_with_circ_t_4}
    \end{subfigure}
    \begin{subfigure}[b]{0.16\textwidth}
        \centering
        \includegraphics[width=\textwidth]{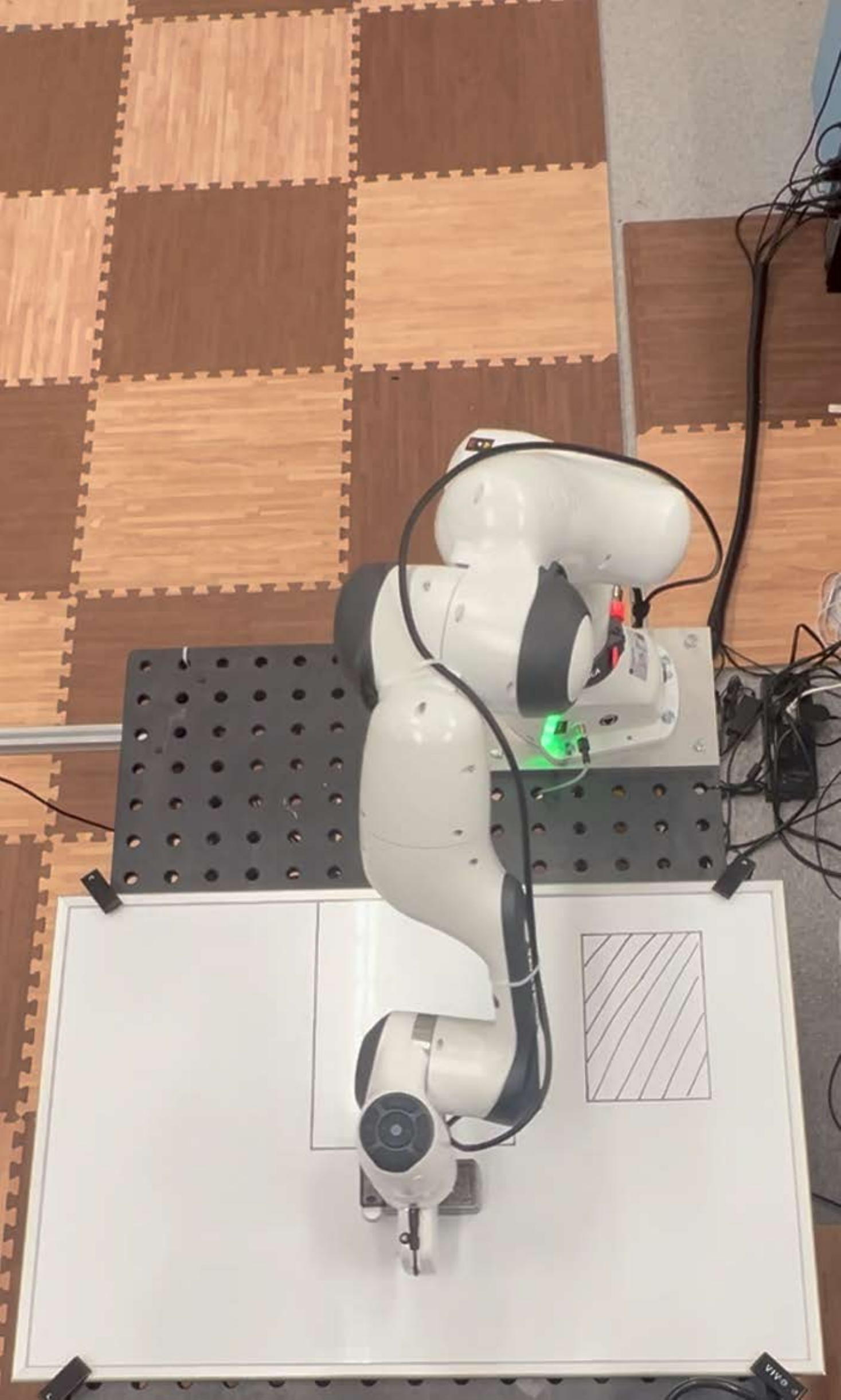}
        \caption{$t=12$~\si{s} (w circ.).}
        \label{fig:exp2_with_circ_t_12}
    \end{subfigure}
    \begin{subfigure}[b]{0.16\textwidth}
        \centering
        \includegraphics[width=\textwidth]{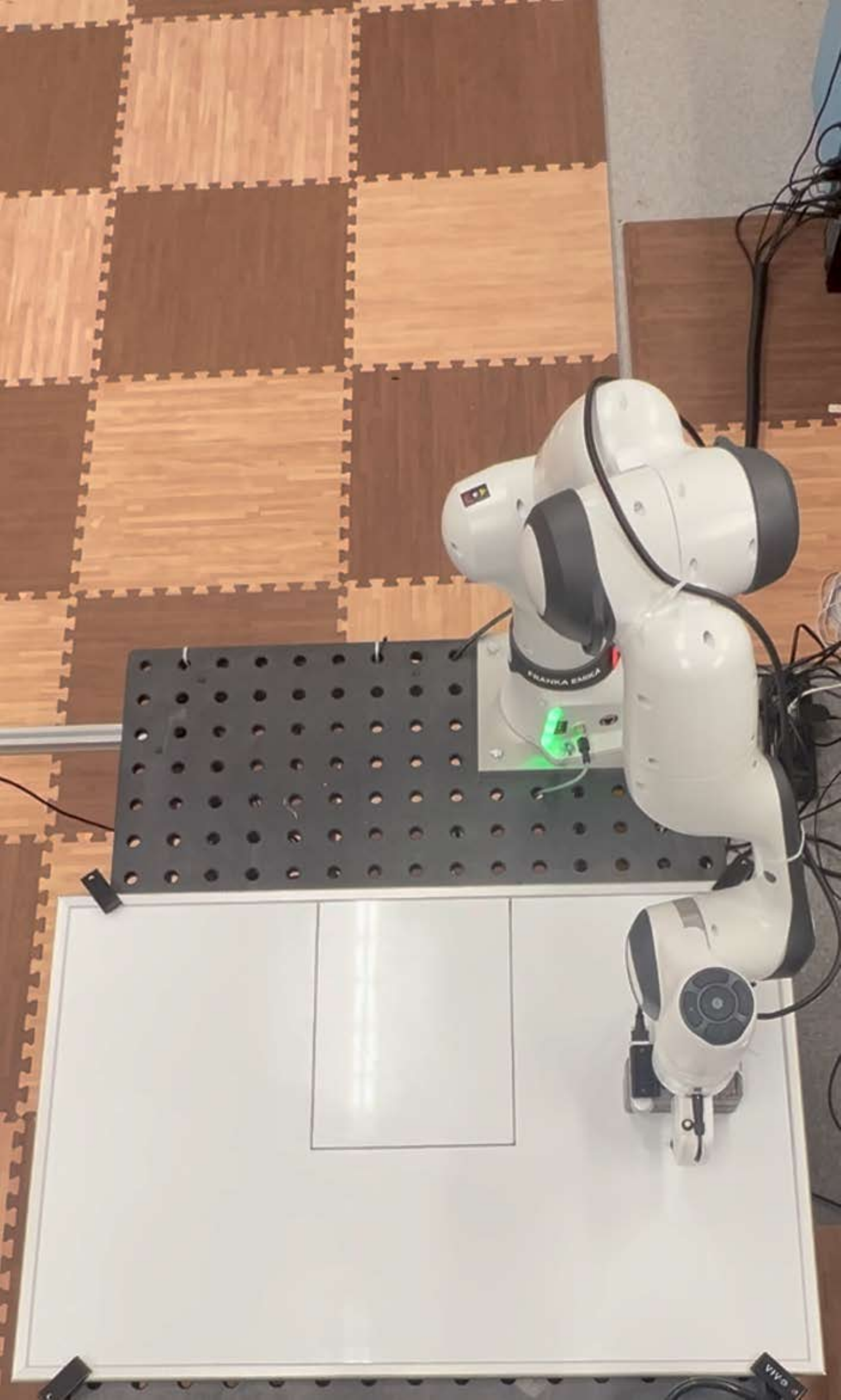}
        \caption{$t=19$~\si{s} (w circ.).}
        \label{fig:exp2_with_circ_t_19}
    \end{subfigure}
\caption{Snapshots of the experiment. \subref{fig:exp2_coordinates}: Overall setting of the whiteboard cleaning task. $p_0=(0.52,0.53,0.02)$, $p_1=(0.52,0.32,0.02)$, $p_2=(0.45,0.32,0.02)$, $p_3=(0.45,0.53,0.02)$, $p_4=(-0.05,0.53,0.02)$, $p_5=(-0.05,0.32,0.02)$, $p_6=(-0.12,0.32,0.02)$, $p_7=(-0.12,0.53,0.02)$. \subref{fig:exp2_no_circ_t_4} and \subref{fig:exp2_no_circ_t_10}: Poses without the circulation constraint. The manipulator enters an equilibrium from $t=10$~\si{s} (see \subref{fig:exp2_no_circ_t_10}). \subref{fig:exp2_with_circ_t_4}-\subref{fig:exp2_with_circ_t_19}: Poses with the circulation constraint. The task is successful.}
\label{fig:exp2_clips}
\end{figure*}

\begin{figure}[t]
    \centering
    \begin{subfigure}[b]{0.24\textwidth}
     \centering
     \includegraphics[width=\textwidth]{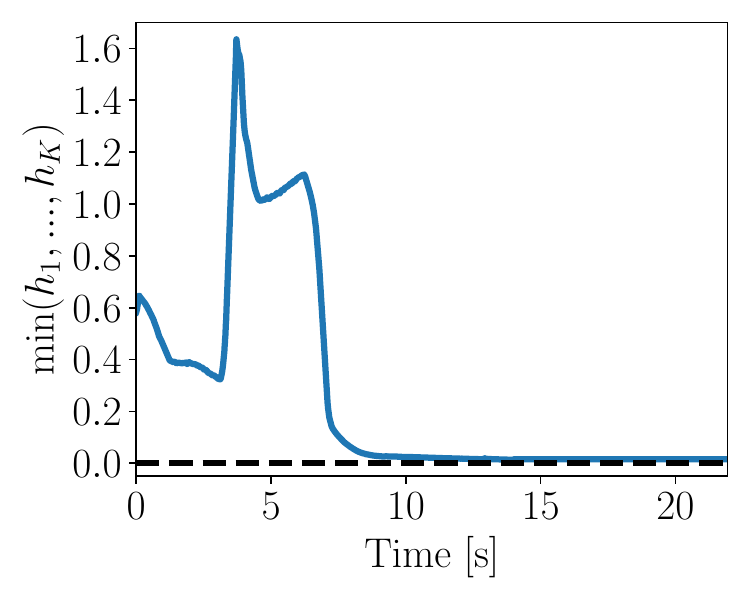}
     \caption{Without circulation.}
     \label{fig:exp2_cbf_no_circ}
 \end{subfigure}
 \begin{subfigure}[b]{0.24\textwidth}
     \centering
     \includegraphics[width=\textwidth]{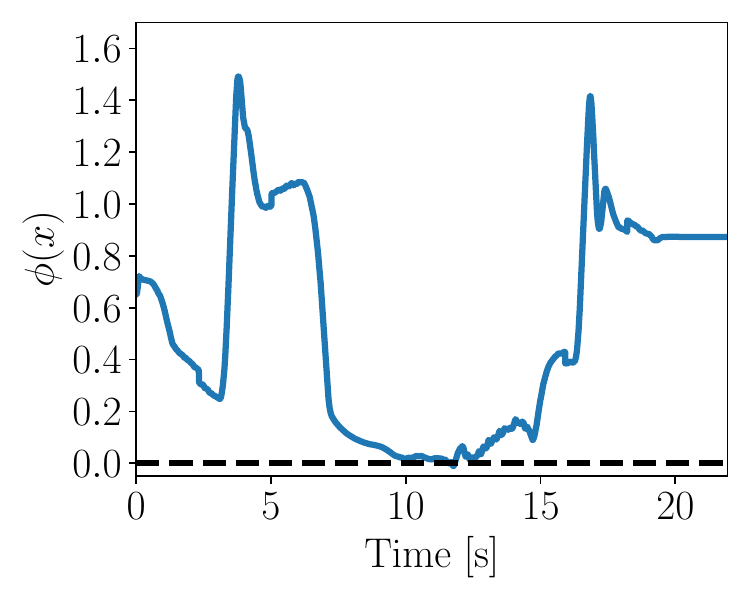}
     \caption{With circulation.}
     \label{fig:exp2_cbf_with_circ}
 \end{subfigure}
 \caption{Evolution of the CBFs.}
 \label{fig:exp2_cbf_values}
\end{figure}

In this section, we consider a whiteboard-cleaning task for the FR3 robotic manipulator. As shown in Fig.~\ref{fig:exp2_coordinates}, the gripper needs to hold the eraser (whose center coincides with the TCP) and clean the two shaded areas on the whiteboard while avoiding the rectangular area between them. To clean the whiteboard, the robot needs to press against the whiteboard and also apply a certain amount of force to overcome friction. To achieve this task, torque control is required for compliant motions. Therefore, the CBF has to be of a higher order. We note that this task cannot be directly achieved through existing first-order formulations in \cite{thirugnanam2022duality, wei2024diffocclusion, dai2023safe, thirugnanam2023nonsmooth}.

As the eraser is tightly pressed against the whiteboard, this task can be seen as collision avoidance in a 2D plane. The nominal controller, as illustrated in Fig.~\ref{fig:exp2_coordinates}, commands the TCP to track line trajectories that sequentially link the positions $p_0$, $p_1$, ..., $p_7$ while maintaining a constant orientation and height $z=0.02$. For this task, we are interested in the dynamics of the TCP in the 2D plane
\begin{equation} \label{eq:double_integrator_2d}
    \frac{d}{dt}\underbrace{\begin{bmatrix}
        p_{\text{tcp},1} \\
        p_{\text{tcp},2} \\
        v_{\text{tcp},1} \\
        v_{\text{tcp},2} 
    \end{bmatrix}}_{\hat{x}} = \underbrace{\begin{bmatrix}
        v_{\text{tcp},1} \\
        v_{\text{tcp},2} \\
        0 \\
        0 
    \end{bmatrix}}_{\hat{f}(\hat{x})} + 
    \underbrace{\begin{bmatrix}
        0 & 0 \\
        0 & 0 \\
        1 & 0 \\
        0 & 1 
    \end{bmatrix}}_{\hat{g}(\hat{x})} \underbrace{\begin{bmatrix}
        a_{\text{tcp},1} \\
        a_{\text{tcp},2} 
    \end{bmatrix}}_{\hat{u}}
\end{equation}
where $p_{\text{tcp},1}$ and $p_{\text{tcp},2}$ are the $x$ and $y$ coordinates the origin of the TCP frame in the 2D plane $z=0.02$, respectively. $v_{\text{tcp},1}$ and $v_{\text{tcp},2}$ are the linear velocities, and $a_{\text{tcp},1}$ and $a_{\text{tcp},2}$ the linear accelerations. Then, for \eqref{eq:double_integrator_2d}, $\hat{\cal{X}}_e = \{[p_{\text{tcp}}^\top, 0]^\top \in \R^{4} \mid p_{\text{tcp},l} \leq  p_{\text{tcp}} \leq p_{\text{tcp},u} \}$ and $\hat{\zeta} (x_e) = 0$, where $p_{\text{tcp},l}, p_{\text{tcp},u} \in \R^2$ are the lower and upper bounds of the position of TCP. Given the time-varying trajectory $p_d (t)$, the nominal control is
\begin{equation}
    \hat{u}_n = \begin{bmatrix}
        \ddot{p}_{d,1} - K_{p,1}^p (p_{\text{tcp},1} - p_{d,1}) - K_{p,1}^d (v_{\text{tcp},1} - \dot{p}_{d,1}) \\
        \ddot{p}_{d,2} - K_{p,2}^p (p_{\text{tcp},2} - p_{d,2}) - K_{p,2}^d (v_{\text{tcp},2} - \dot{p}_{d,2})
    \end{bmatrix}
\end{equation}
where $K_{p,1}^p, K_{p,2}^p, K_{p,1}^d, K_{p,2}^d > 0$. 

The rectangular area that should not be cleaned is treated as a 2D polygon obstacle. To keep the eraser within the whiteboard, the four edges of the whiteboard are represented by four 2D hyperplanes. The eraser is bound with an ellipse in the 2D plane, resulting in a total of $K=5$ CBFs. For each CBF $h_i$ ($i=1, ..., K$), define again $\psi_{i,0} (\hat{x}) = h_i (\hat{x})$ and $\psi_{i,1} (\hat{x}) = \dot{\psi}_{i,0} (\hat{x})+ \gamma_1 \psi_{i,0} (\hat{x})$. Let $v_l = [v_{l,1}, v_{l,2}]^\top$ and $v_u = [v_{u,1}, v_{u,2}]^\top$ be the lower and upper bounds of the TCP's linear velocity in the 2D plane. Define also $h_{l,i}(x) = v_{\text{tcp},i} - v_{l,i}$ and $h_{u,i}(x) = v_{u,i}-v_{\text{tcp},i}$ with $i = 1,2$. Then, the HOCBF-QP is  
\begin{equation} \label{eq:exp2_hocbf_qp}
    \begin{aligned}
    \hat{\pi} (t, \hat{x}) &= \argmin_{\hat{u} \in \hat{\cal{U}}} \quad \lVert \hat{u}- \hat{u}_n(t, \hat{x}) \rVert_2^2 \\
    \textrm{s.t.} \quad & \dot{\psi}_{i,1} (\hat{x}) \geq -\gamma_2 \psi_{i,1} (\hat{x}), \ i=1,...,K,  \\
    & \dot{h}_{l,i} (\hat{x}) \geq -\gamma_l h_{l,i} (\hat{x}), \, i=1, 2, \\
    & \dot{h}_{u,i} (\hat{x}) \geq -\gamma_u h_{u,i} (\hat{x}), \, i=1, 2.
    \end{aligned}
\end{equation}

Similarly, define a new CBF based on the smooth minimum of $h_i$: $\phi(\hat{x}) = \varphi (h_1 (\hat{x}), ..., h_K (\hat{x})) - \varphi_0$. Define in addition $\psi_{\phi, 0} (\hat{x}) = \phi(\hat{x})$ and $\psi_{\phi,1} (\hat{x}) = \dot{\psi}_{\phi,0} (\hat{x})+ \gamma_1 \psi_{\phi,0} (\hat{x})$. The CHOCBF-QP is 
\begin{equation}
\begin{aligned}
    \hat{\pi}' (t, \hat{x}) &= \argmin_{\hat{u} \in \hat{\cal{U}}} \quad \lVert \hat{u}- \hat{u}_n(t, \hat{x}) \rVert_2^2 \label{eq:exp2_chocbf_qp} \\
    \textrm{s.t.} \quad & \dot{\psi}_{\phi,1} (\hat{x}) \geq -\gamma_2 \psi_{\phi,1} (\hat{x}),   \\
    & c(\hat{x})^\top [\hat{u} - \hat{\zeta}(P_{\hat{\cal{X}}_e} (\hat{x}))] \geq
    d(\phi(\hat{x}), \lVert \hat{x} - P_{\hat{\cal{X}}_e} (\hat{x}) \rVert_2)  \\
    & \dot{h}_{l,i} (\hat{x}) \geq -\gamma_l h_{l,i} (\hat{x}), \, i=1,...,3,  \\
    & \dot{h}_{u,i} (\hat{x}) \geq -\gamma_u h_{u,i} (\hat{x}), \, i=1,...,3 
\end{aligned}
\end{equation}
where the terms for the circulation inequality are $c(\hat{x}) = [-a_2 (\hat{x}), a_1 (\hat{x})]$, and $d(\phi(\hat{x}), \lVert \hat{x} - P_{\hat{\cal{X}}_e} (\hat{x}) \rVert_2) = d_1 (1-e^{d_2 (\phi(\hat{x}) - d_3)}) + d_4 (e^{-(v_{\text{tcp},1}^2 + v_{\text{tcp},2}^2 )/ d_5^2}-1)$ with $d_1, ..., d_5 > 0$. Then, the joint accelerations $\Tilde{u}$ are 
\begin{equation*}
    \Tilde{u} = \Tilde{u}_{\text{task}} + (I - J_{\text{tcp}}^\dagger J_{\text{tcp}}) \Tilde{u}_{\text{joint}}
\end{equation*}
where $\Tilde{u}_{\text{joint}}$ is given in \eqref{eq:joint_centering} and
\begin{align*}
    \Tilde{u}_{\text{task}} &= J_{\text{tcp}}^\dagger \Biggl ( - \dot{J}_{\text{tcp}} \dot{q} \\
    &+
    \begin{bmatrix}
        \hat{u} \\
        \ddot{p}_{d,z} - K_{p,z}^p (p_{\text{tcp},z} - p_{d,z}) - K_{p,z}^d (v_{\text{tcp},z} - \dot{p}_{d,z})\\
        - K_{r}^p \Log_{\text{SO}(3)} (R_{\text{tcp}} R_d^\top) - K_{r}^d \omega_{\text{tcp}} 
    \end{bmatrix} \Biggr)
\end{align*}
with $K_{p,z}^p, K_{p,z}^d > 0$ and $\hat{u} = \hat{\pi} (t, \hat{x})$ or $\hat{\pi}' (t, \hat{x})$. The actual torque inputs are 
\begin{equation*}
    u = M(q) \Tilde{u} + \sigma (q, \dot{q}) + J_{\text{tcp}}^\top (F_{\text{press}}+F_{\text{fric}})
\end{equation*}
where $F_{\text{press}}$ is the pressing force applied to the whiteboard and $F_{\text{fric}}$ is the friction between the eraser and whiteboard.

Figures~\ref{fig:exp2_no_circ_t_4} and \ref{fig:exp2_no_circ_t_10} show that the manipulator cannot move around the rectangular area and gets stuck in an equilibrium under the control $\hat{\pi}$ given by \eqref{eq:exp2_hocbf_qp}. As expected in Proposition~\ref{prop:equilibrium_on_boundary}, the minimum of $h_i$ at the equilibrium is near zero (see Fig.~\ref{fig:exp2_cbf_no_circ}). On the contrary, the control $\hat{\pi}'$ with circulation constraint commands the eraser to bypass the rectangular area and accomplishes the task (see Figs.~\ref{fig:exp2_with_circ_t_4}-\ref{fig:exp2_with_circ_t_19}, and \ref{fig:exp2_cbf_with_circ}). The values of the parameters used in this experiment are: $\alpha_0 = 1.03$, $\gamma_1 = \gamma_2 = 10$, $\gamma_l = \gamma_u = 40$, $\eta= 5$, $\varphi_0 = 0.3$, $d_1 = 200$, $d_2 = 5$, $d_3 = 0.07$, $d_4 = 1500$, and $d_5 = 0.75$. The p90 latency of this example is 1.9~\si{ms}.

\subsection{Avoiding a Flying Ball}\label{sec:exp_flying_ball}
\begin{figure}[t]
    \centering
    \begin{subfigure}[b]{0.24\textwidth}
     \centering
     \includegraphics[width=\textwidth]{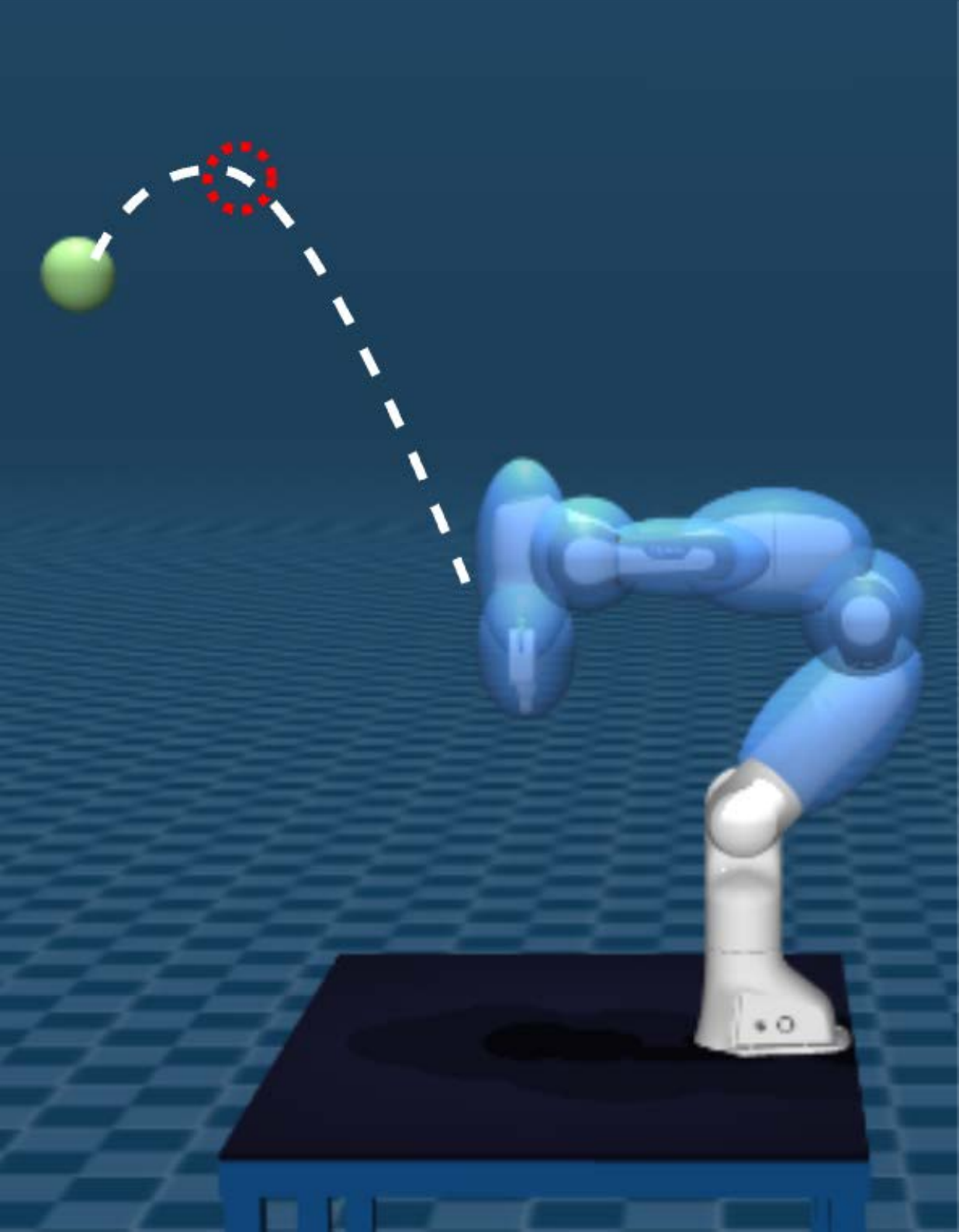}
     \caption{Overall setting.}
     \label{fig:exp3_bounding_shapes}
 \end{subfigure}
 \begin{subfigure}[b]{0.24\textwidth}
     \centering
     \includegraphics[width=\textwidth]{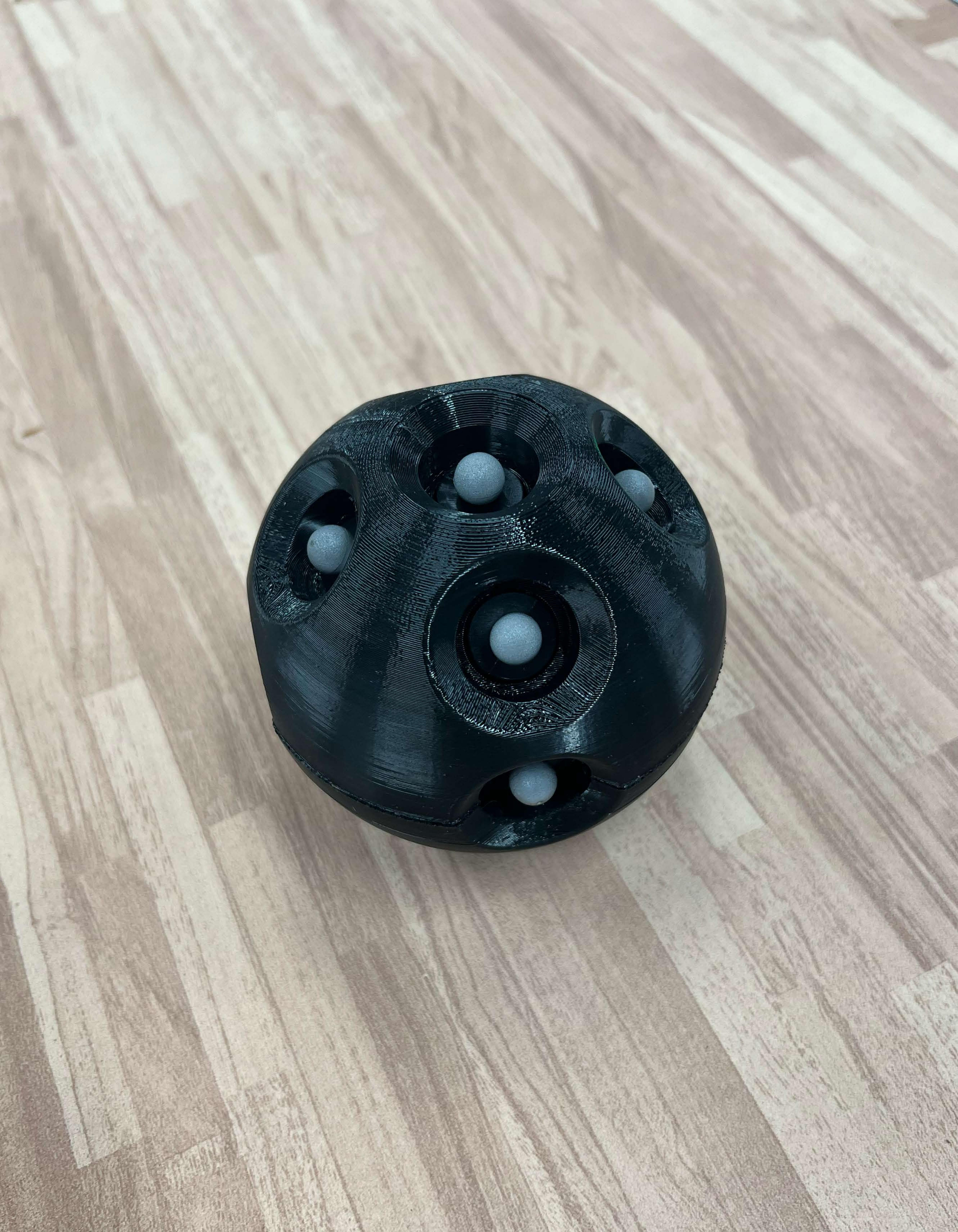}
     \caption{3D printed ball.}
     \label{fig:exp3_ball}
 \end{subfigure}
 \caption{Avoiding a flying ball. \subref{fig:exp3_bounding_shapes}: The FR3 robotic manipulator is bounded by seven ellipsoids. We account for both the actual ball (green), which is treated as a hard constraint, and a virtual ball (red dashed line), which is projected 150~\si{ms} ahead of the actual ball, which is treated as a soft constraint. The simulation figure is generated using MuJoCo. \subref{fig:exp3_ball}: A 3D-printed ball with markers for the Vicon motion capture system.}
 \label{fig:exp3_task}
\end{figure}

\begin{figure*}[t]
    \centering
    \begin{subfigure}[b]{0.24\textwidth}
        \centering
        \includegraphics[width=\textwidth]{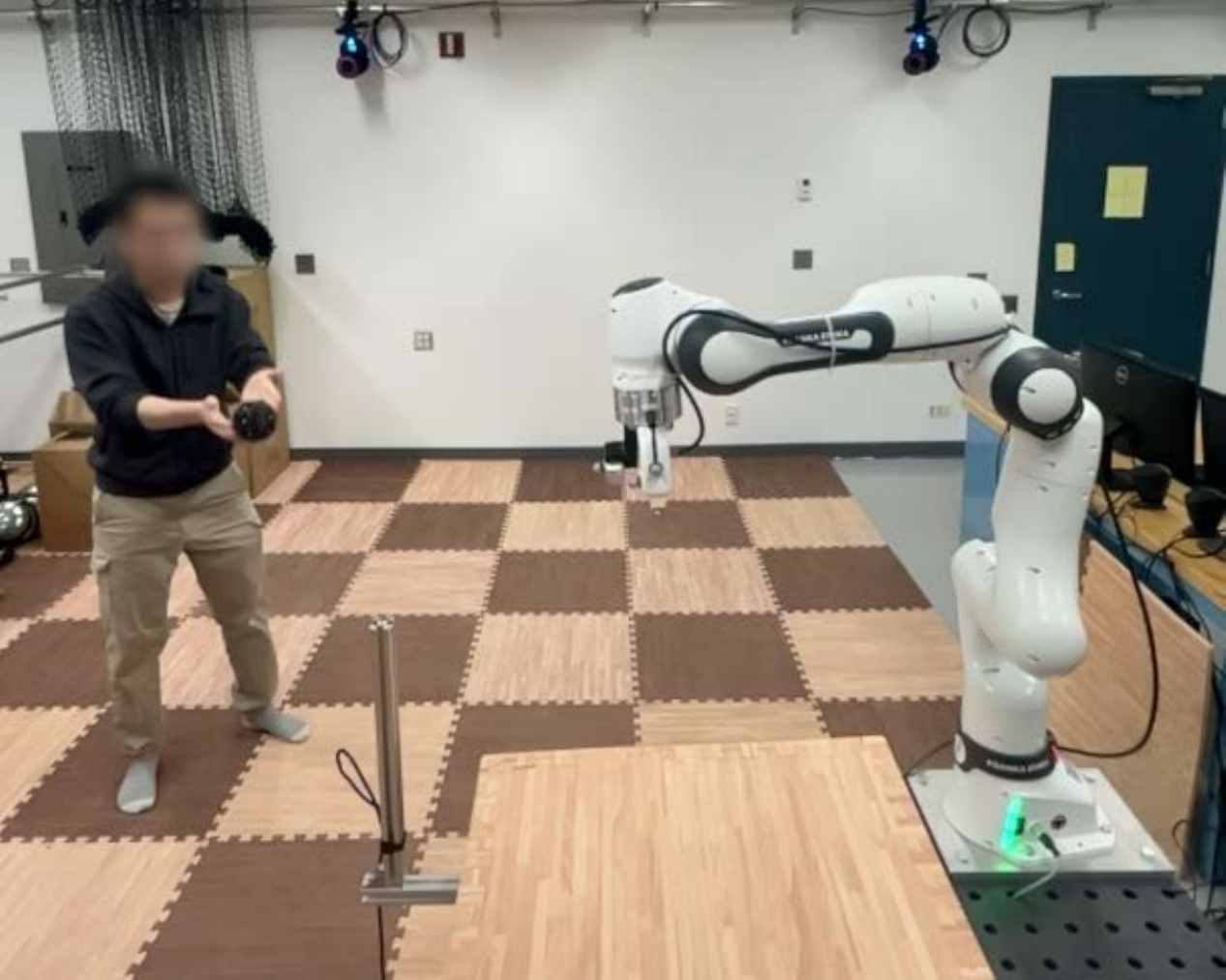}
        \caption{$t = 0$.}
        \label{fig:exp3_front_right_1}
    \end{subfigure}
    \begin{subfigure}[b]{0.24\textwidth}
        \centering
        \includegraphics[width=\textwidth]{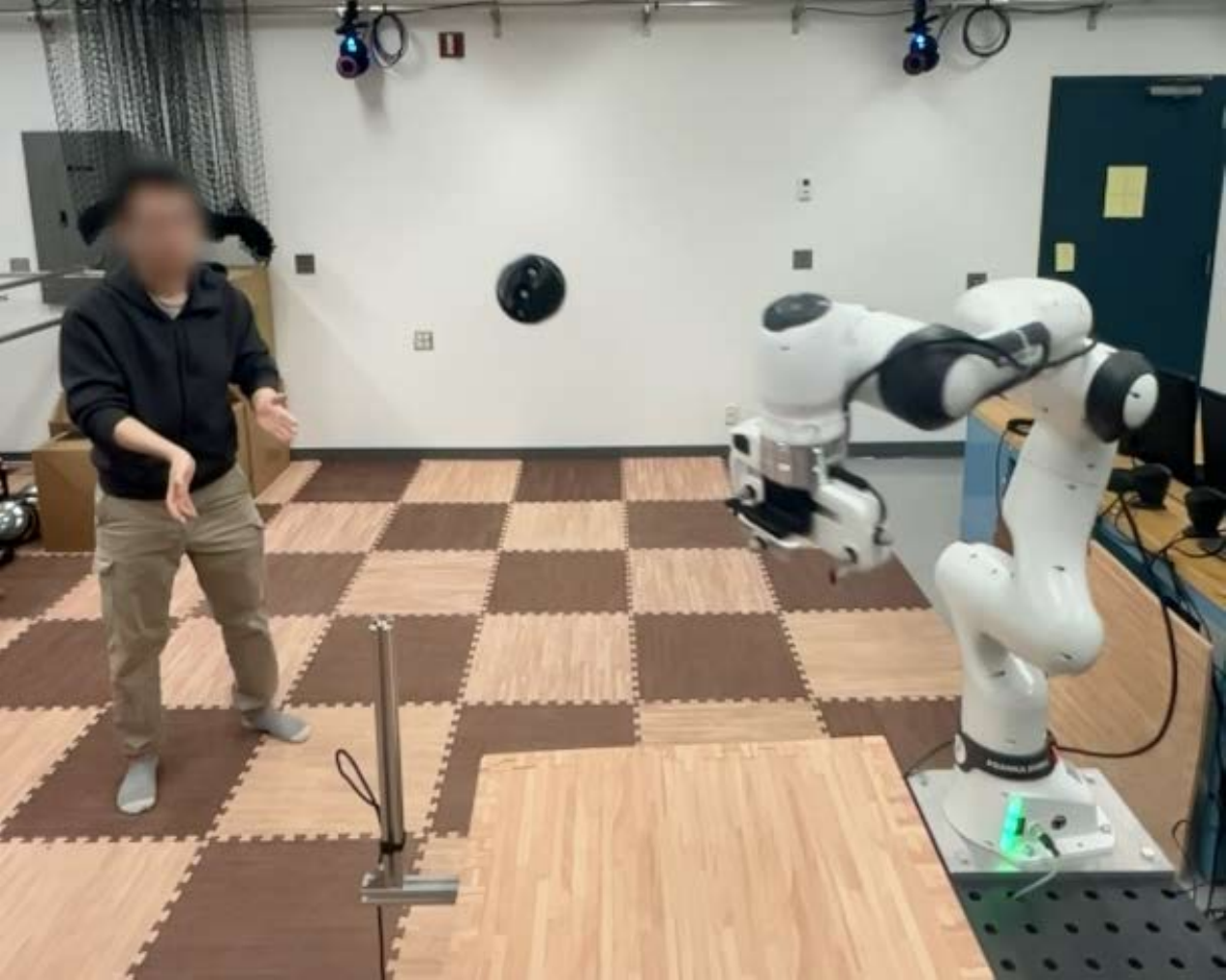}
        \caption{$t = 467$~\si{ms}.}
        \label{fig:exp3_front_right_2}
    \end{subfigure}
    \begin{subfigure}[b]{0.24\textwidth}
        \centering
        \includegraphics[width=\textwidth]{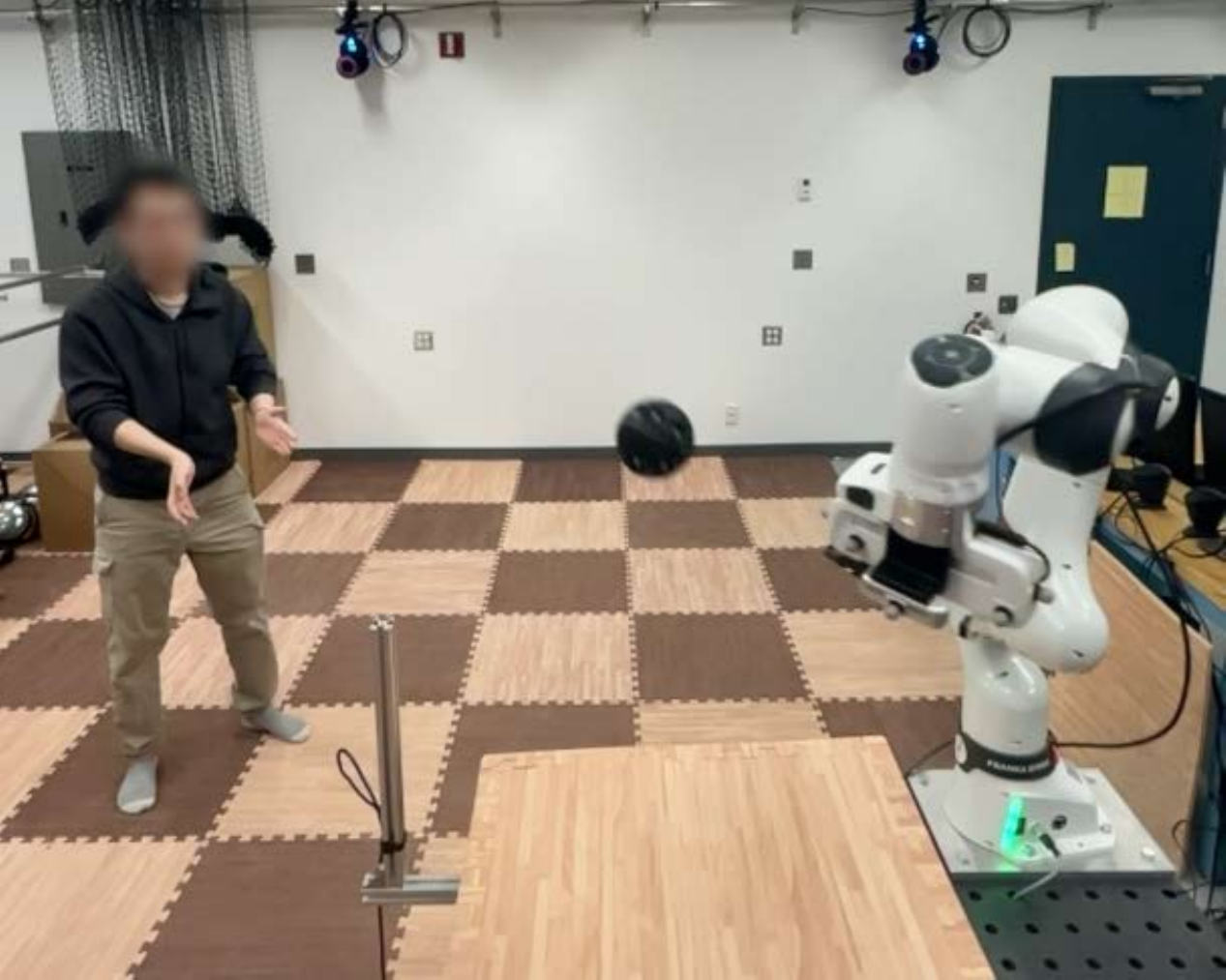}
        \caption{$t = 567$~\si{ms}.}
        \label{fig:exp3_front_right_3}
    \end{subfigure}
    \begin{subfigure}[b]{0.24\textwidth}
        \centering
        \includegraphics[width=\textwidth]{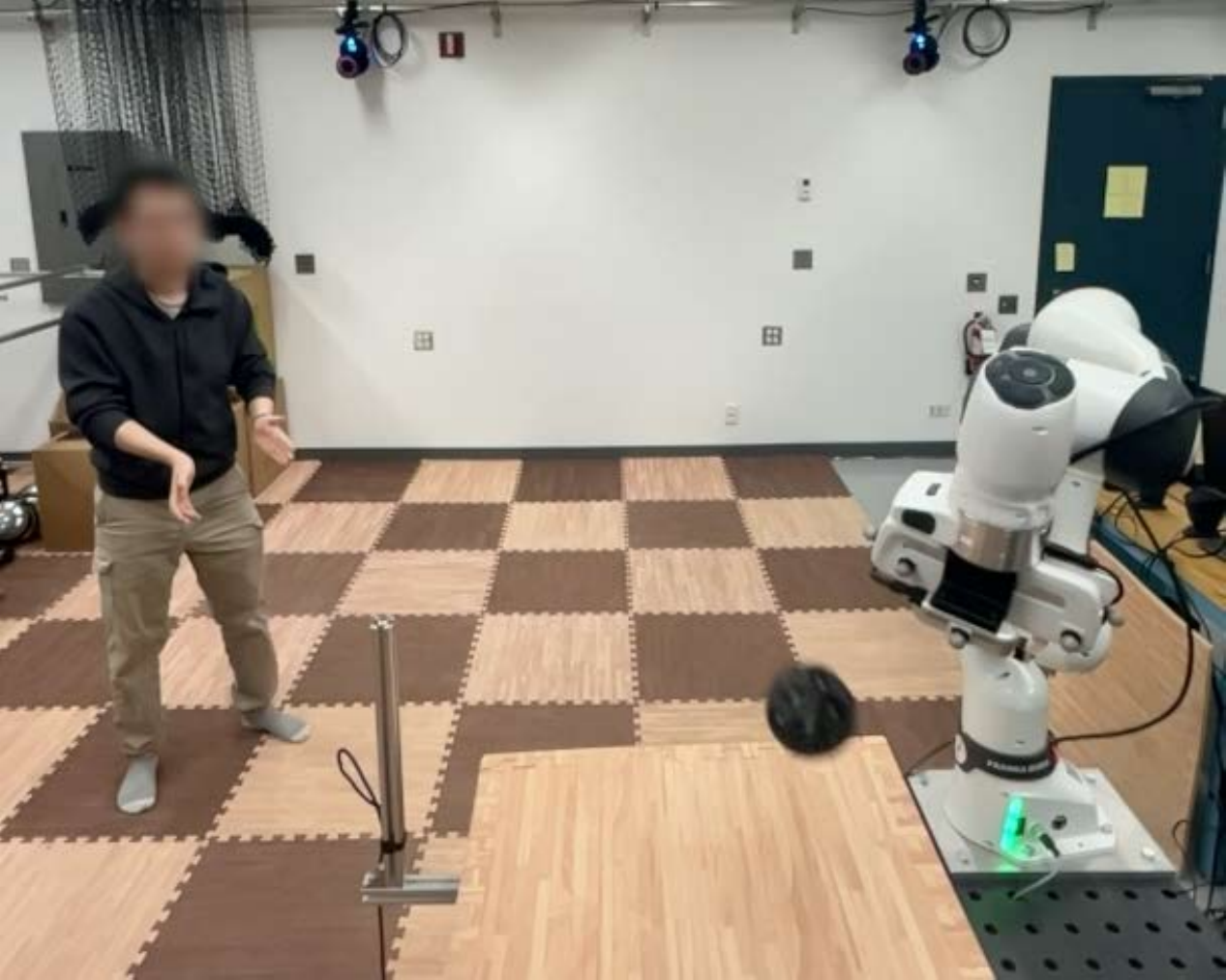}
        \caption{$t = 667$~\si{ms}.}
        \label{fig:exp3_front_right_4}
    \end{subfigure}
\caption{Throwing the ball from the front right of the robot.}
\label{fig:exp3_front_right}
\end{figure*}

\begin{figure*}[t]
    \centering
    \begin{subfigure}[b]{0.24\textwidth}
        \centering
        \includegraphics[width=\textwidth]{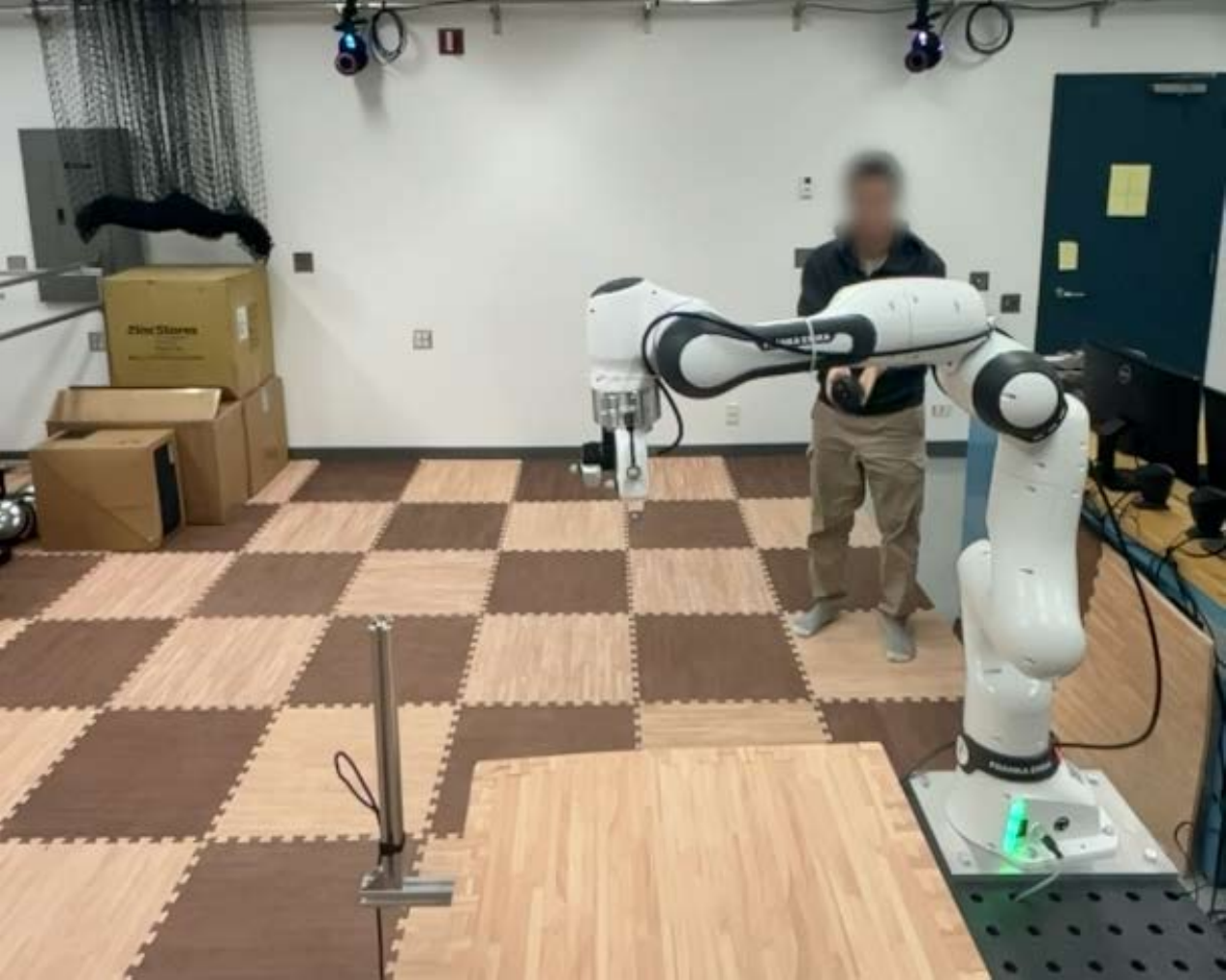}
        \caption{$t = 0$.}
        \label{fig:exp3_back_right_1}
    \end{subfigure}
    \begin{subfigure}[b]{0.24\textwidth}
        \centering
        \includegraphics[width=\textwidth]{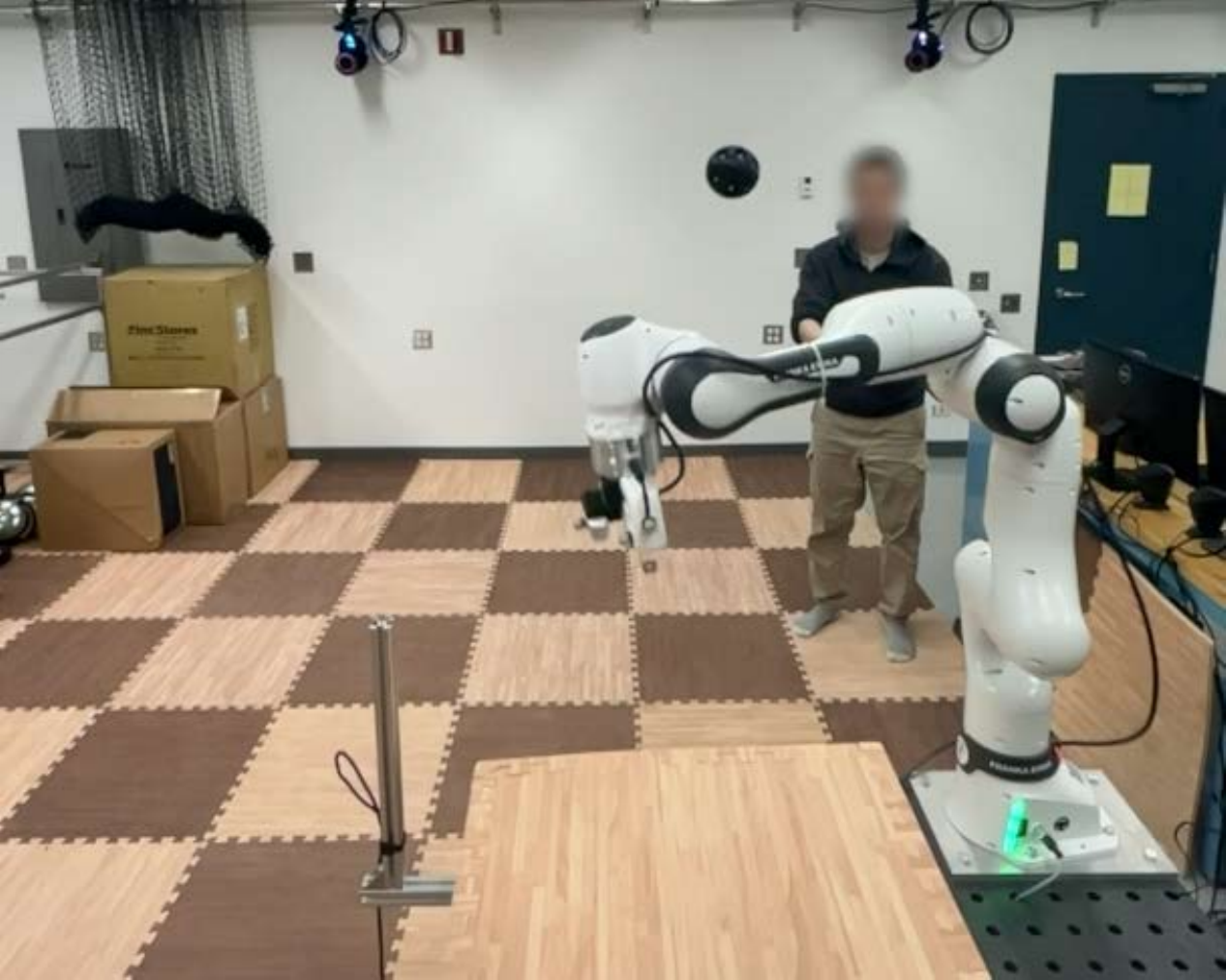}
        \caption{$t = 367$~\si{ms}.}
        \label{fig:exp3_back_right_2}
    \end{subfigure}
    \begin{subfigure}[b]{0.24\textwidth}
        \centering
        \includegraphics[width=\textwidth]{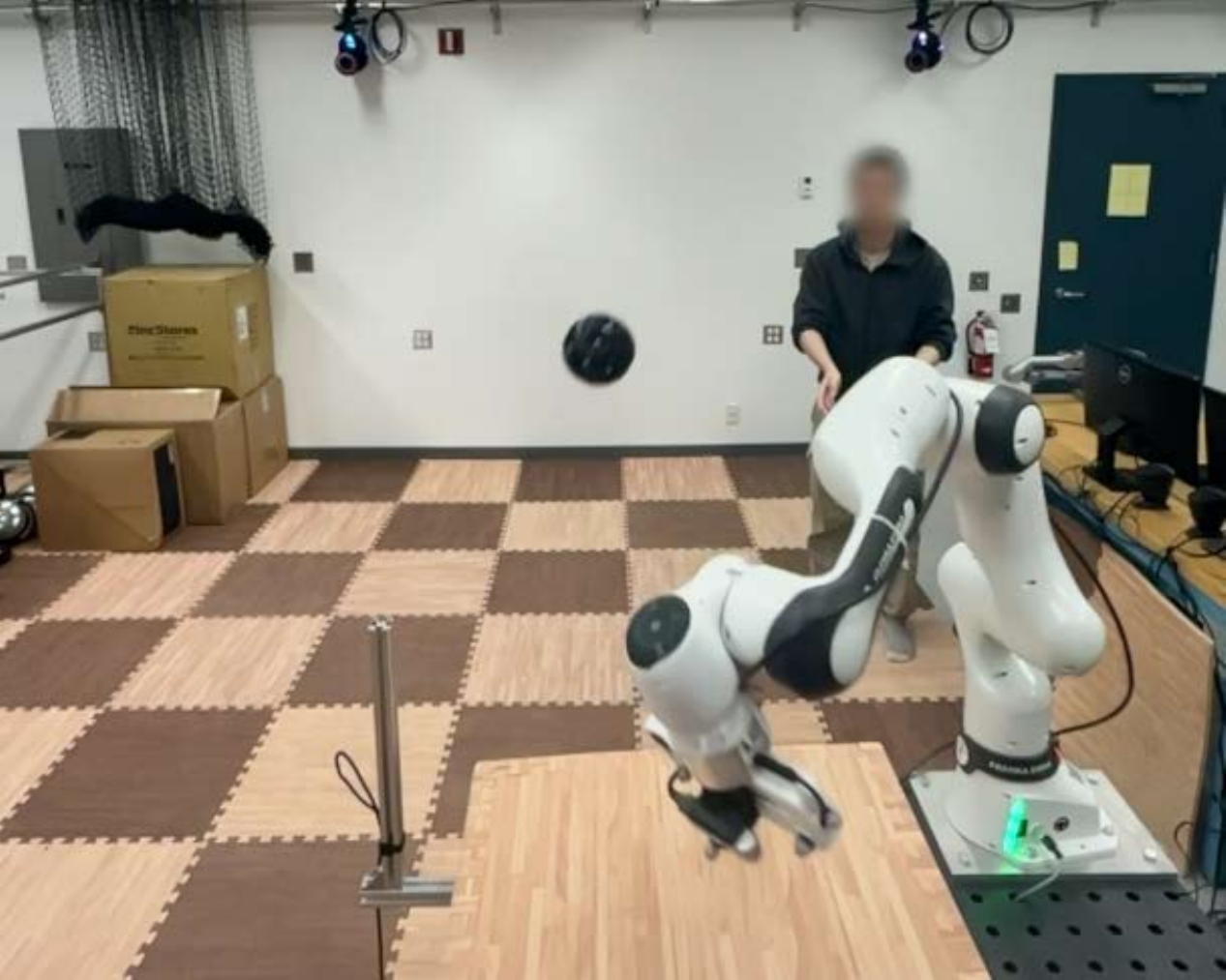}
        \caption{$t = 600$~\si{ms}.}
        \label{fig:exp3_back_right_3}
    \end{subfigure}
    \begin{subfigure}[b]{0.24\textwidth}
        \centering
        \includegraphics[width=\textwidth]{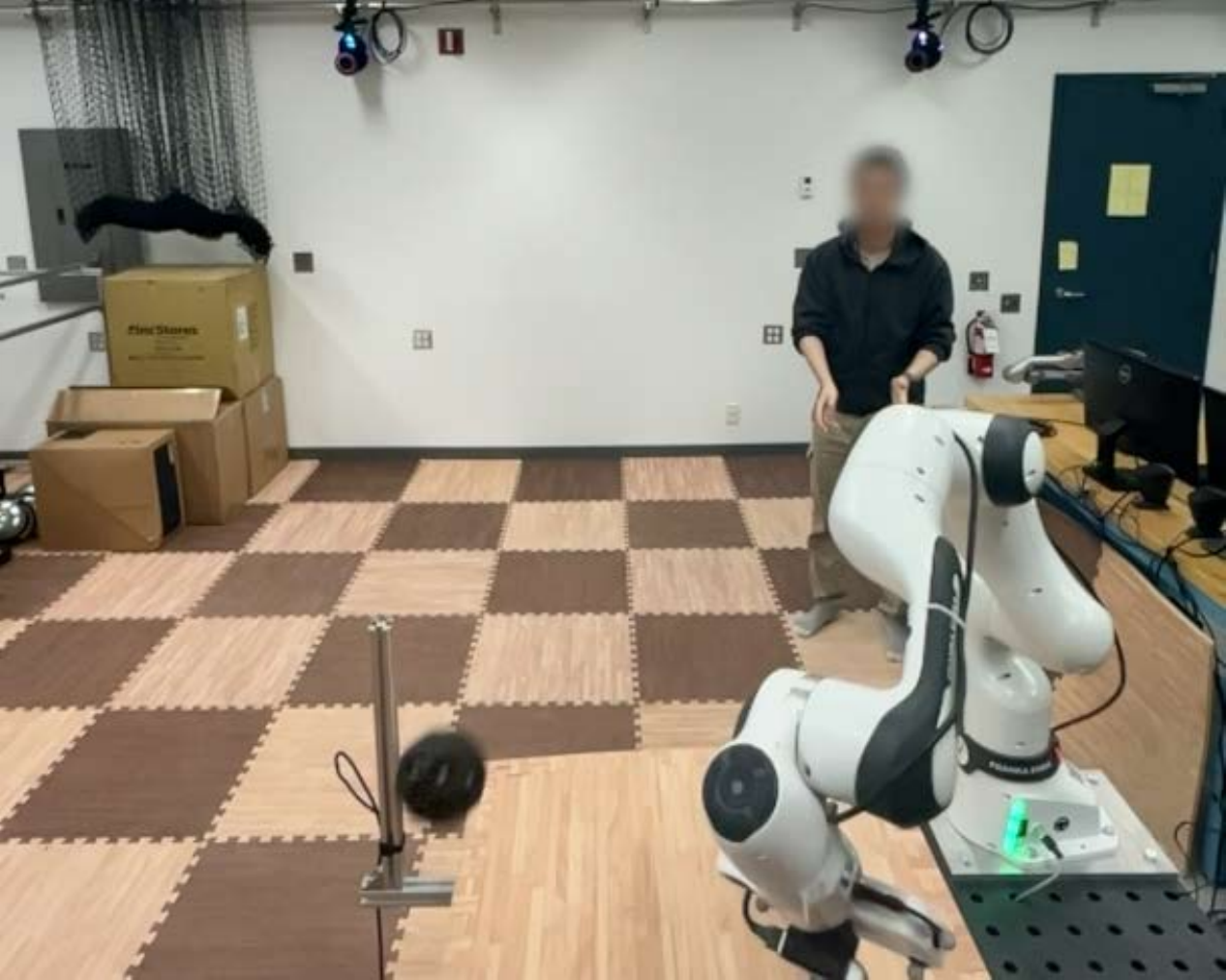}
        \caption{$t = 733$~\si{ms}.}
        \label{fig:exp3_back_right_4}
    \end{subfigure}
\caption{Throwing the ball from the back right of the robot.}
\label{fig:exp3_back_right}
\end{figure*}

\begin{figure*}[t]
    \centering
    \begin{subfigure}[b]{0.24\textwidth}
        \centering
        \includegraphics[width=\textwidth]{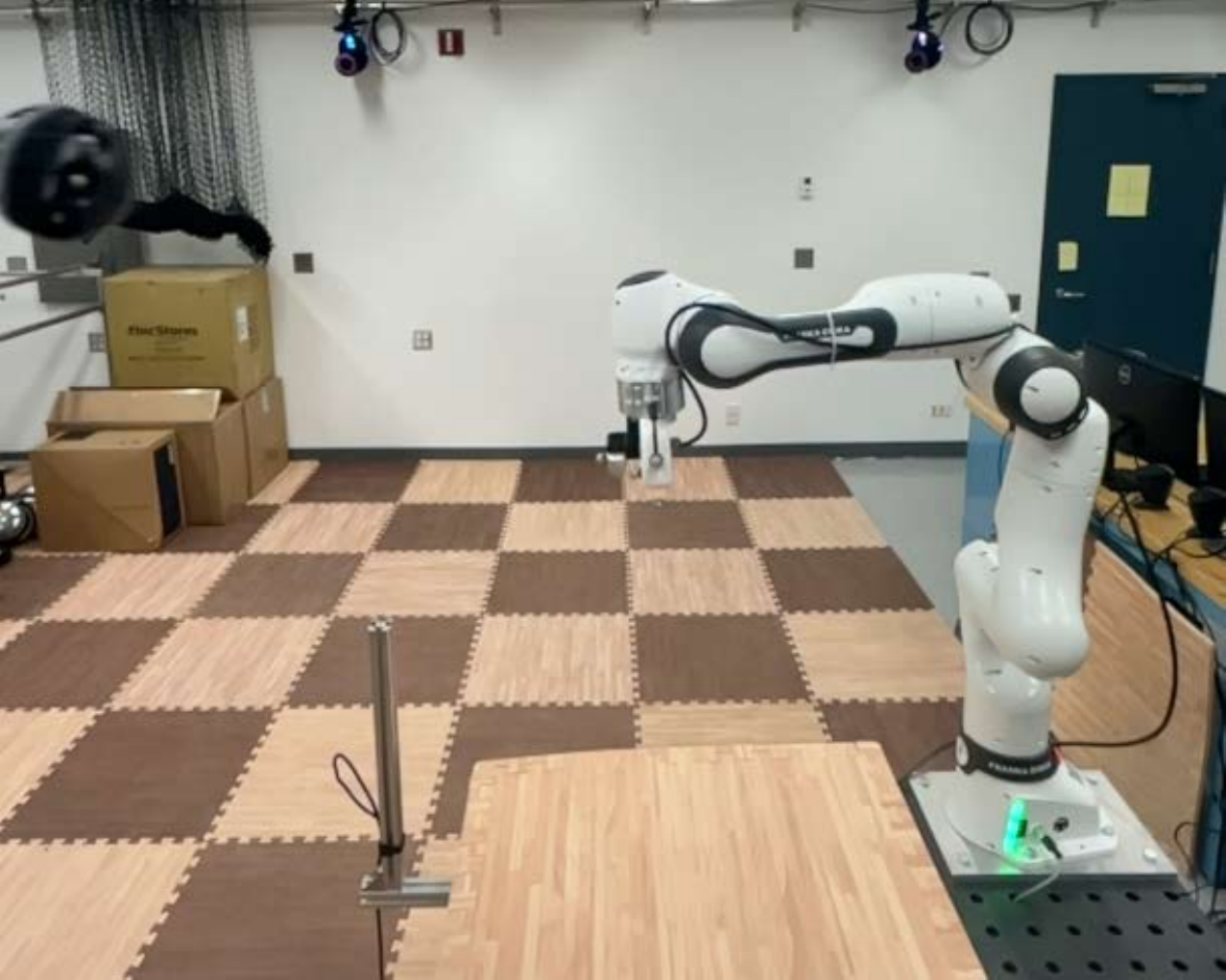}
        \caption{$t = 0$.}
        \label{fig:exp3_front_left_1}
    \end{subfigure}
    \begin{subfigure}[b]{0.24\textwidth}
        \centering
        \includegraphics[width=\textwidth]{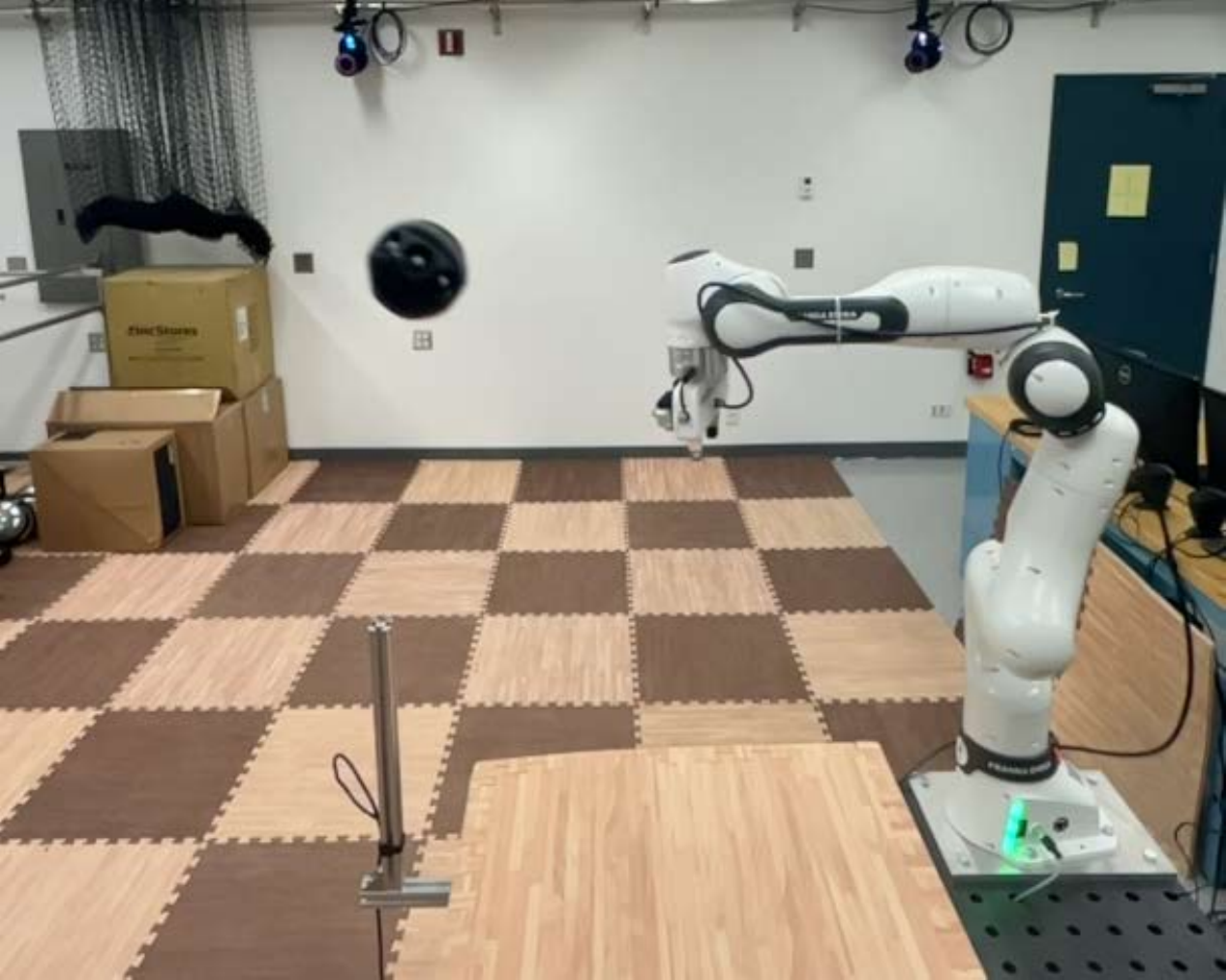}
        \caption{$t = 134$~\si{ms}.}
        \label{fig:exp3_front_left_2}
    \end{subfigure}
    \begin{subfigure}[b]{0.24\textwidth}
        \centering
        \includegraphics[width=\textwidth]{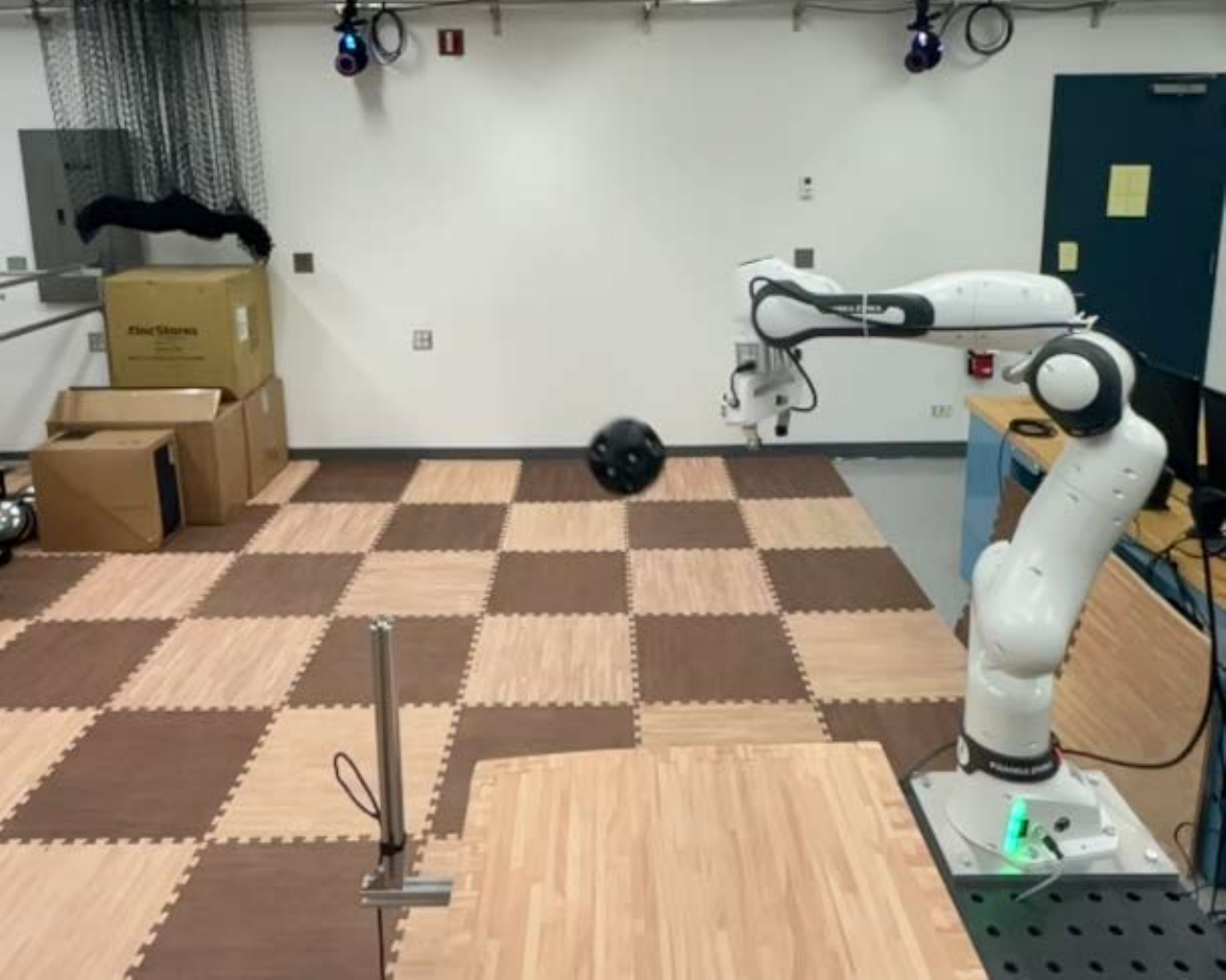}
        \caption{$t = 267$~\si{ms}.}
        \label{fig:exp3_front_left_3}
    \end{subfigure}
    \begin{subfigure}[b]{0.24\textwidth}
        \centering
        \includegraphics[width=\textwidth]{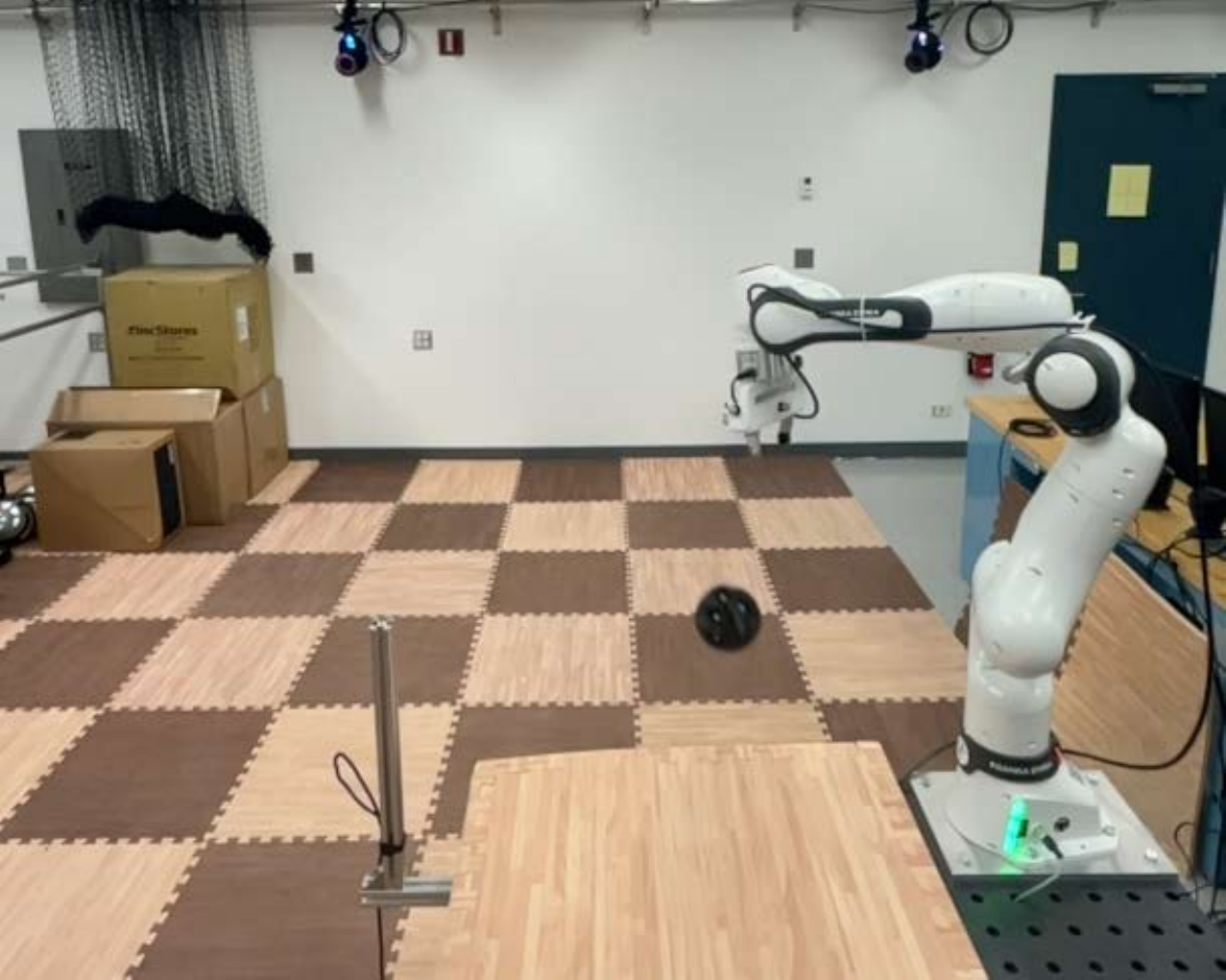}
        \caption{$t = 367$~\si{ms}.}
        \label{fig:exp3_front_left_4}
    \end{subfigure}
\caption{Throwing the ball from the front left of the robot. $t=0$ does not correspond to the moment the ball is released. }
\label{fig:exp3_front_left}
\end{figure*}

To further demonstrate the reactiveness and agility of a second-order CBF, we consider an application of commanding the FR3 manipulator to avoid a flying ball (see Fig.~\ref{fig:exp3_task}). As shown in Fig.~\ref{fig:exp3_bounding_shapes}, we consider the collision between the ball and the bounding ellipsoids of the robot. Including the dynamics of the ball, the overall system dynamics with joint accelerations $\Tilde{u}$ as control input are
\begin{equation}
    \frac{d}{dt} \underbrace{\begin{bmatrix}
        q \\ \dot{q} \\ p_{\text{ball}} \\ v_{\text{ball}} 
    \end{bmatrix}}_{\bar{x}} = \underbrace{\begin{bmatrix}
        \dot{q} \\ 0 \\ v_{\text{ball}}  \\ a_{\text{gravity}}
    \end{bmatrix}}_{\bar{f}(\bar{x})} + \underbrace{\begin{bmatrix}
        0 \\ I \\ 0 \\ 0
    \end{bmatrix}}_{\bar{g}(\bar{x})} \Tilde{u}
\end{equation}
where $p_{\text{ball}}, v_{\text{ball}} \in \R^3$ are the position and linear velocity of the center of the ball, respectively, and $a_{\text{gravity}} \in \R^3$ is the gravity acceleration.

In this task, the flying ball is modeled by a sphere with a radius of $5$~\si{cm}. The robot is bound by seven ellipsoids as shown in Fig.~\ref{fig:exp3_bounding_shapes}, leading to seven CBFs $h_1, h_2, ..., h_7$. The time between when the ball is released and when it reaches the initial position of the end-effector is approximately $550$ to $650$~\si{ms}. We run a Kalman filter based on the observed positions of the actual ball from the motion capture system, and fit a parabolic trajectory after the ball is released. To allow for anticipation, we define a virtual ball with a position and velocity $150$~\si{ms} ahead of the actual ball (see Fig.~\ref{fig:exp3_bounding_shapes}) based on the parabolic trajectory. We also define another seven CBFs $h_8, h_9, ..., h_{14}$ to account for the collision between the robot and this virtual ball, but the CBF constraints will be taken as soft constraints. For each CBF $h_i$ ($i = 1, ..., 14$), define $\psi_{i,0} (\bar{x}) = h_i (\bar{x})$ and $\psi_{i,1} (\bar{x}) = \dot{\psi}_{i,0} (\bar{x})+ \gamma_1 \psi_{i,0} (\bar{x})$. Let $q_l, q_u \in \R^7$ be the lower and upper bounds of the joint angles. To constrain the joint angles, define $\psi_{q_l,i,0} (\bar{x}) = q_i - q_{l,i}$, $\psi_{q_l, i,1} (\bar{x}) = \dot{\psi}_{q_l, i,0} (\bar{x})+ \gamma_{q_l,1} \psi_{q_l,i,0} (\bar{x})$, $\psi_{q_u,i,0} (\bar{x}) = q_{u,i} - q_i $, and $\psi_{q_u, i,1} (\bar{x}) = \dot{\psi}_{q_u, i,0} (\bar{x})+ \gamma_{q_u,1} \psi_{q_u,i,0} (\bar{x})$. In addition, let $\dot{q}_l, \dot{q}_u \in \R^7$ be the lower and upper bounds of the joint velocity. Define $h_{\dot{q}_l,i}(\bar{x}) = \dot{q}_i - \dot{q}_{l,i}$ and $h_{\dot{q}_u,i}(\bar{x}) = \dot{q}_{u,i}-\dot{q}_{i}$ with $i = 1,2,...,7$ to constrain the joint velocity. Let $\tau_l, \tau_u \in \R^7$ be the lower and upper bounds on the input torque. Then, the HOCBF-QP is defined as
\begingroup
\allowdisplaybreaks
\begin{align*}
    \Tilde{\pi} (\bar{x}) &= \argmin_{\Tilde{u} \in \Tilde{\cal{U}}} \quad  \lVert \Tilde{u}- \Tilde{u}_n(\bar{x}) \rVert_2^2 + 100 \delta^2 \\
    \textrm{s.t.} \quad & \dot{\psi}_{i,1} (\bar{x}) \geq -\gamma_2 \psi_{i,1} (\bar{x}), \ i=1,...,7  \\
    & \dot{\psi}_{i,1} (\bar{x}) \geq -\gamma_2 \psi_{i,1} (\bar{x}) + \delta, \ i=8,...,14\\
    & \dot{\psi}_{q_l,i,1} \geq -\gamma_{q_l,2} \psi_{q_l,i,1}, i=1,...,7,\\
    & \dot{\psi}_{q_u,i,1} \geq -\gamma_{q_u,2} \psi_{q_u,i,1}, i=1,...,7,\\
    & \dot{h}_{\dot{q}_l,i} (\bar{x}) \geq -\gamma_{\dot{q}_l} h_{\dot{q}_l,i} (\bar{x}), \, i=1,...,7, \\
    & \dot{h}_{\dot{q}_u,i} (\bar{x}) \geq -\gamma_{\dot{q}_u} h_{\dot{q}_u,i} (\bar{x}), \, i=1,...,7, \\
    & \tau_l \leq M(q) \Tilde{u} + \sigma (q, \dot{q})  \leq \tau_u
\end{align*}
\endgroup
where $\delta \in \R$ is a relaxation variable, and $\gamma_{q_l,2}, \allowbreak \gamma_{q_u,2}, \allowbreak \gamma_{\dot{q}_l}, \allowbreak \gamma_{\dot{q}_u} > 0$.  The nominal control is designed to track a constant configuration $q_d \in \R^7$, and we have
\begin{equation*}
    \Tilde{u}_n(\bar{x}) = - K_q^p (q - q_d) - K_q^d \dot{q}
\end{equation*}
where $K_q^p, K_q^d > 0$ are the 7-by-7 gain matrices. 

As shown in Figs.~\ref{fig:exp3_front_right}, \ref{fig:exp3_back_right}, and \ref{fig:exp3_front_left}, we throw the ball at the end-effector of the robot from three different directions, and the robot successfully avoids the flying ball in time, demonstrating the effectiveness and agility of our HOCBF. The values of the parameters used in this experiment are: $\alpha_0 = 1.03$, $\gamma_1 = \gamma_2 = 15$, $\gamma_{q_l,1} = \gamma_{q_u,1} = \gamma_{q_l,2} = \gamma_{q_u,2} = \gamma_{\dot{q}_l} = \gamma_{\dot{q}_u} = 10$. The p90 latency of this example is 7.6~\si{ms}.

\section{Discussion and Conclusion}\label{sec:conclusion}
In conclusion, this work presents a novel framework for addressing the fundamental challenge of collision avoidance between general convex primitives in torque-controlled systems by leveraging differentiable optimization and HOCBFs. Our approach comes with a systematic way of representing convex primitives as scaling functions and a circulation mechanism to prevent spurious equilibria on the boundary of the safe set, extending the applicability of CBFs to a broader class of user cases. We have provided the proof of continuous differentiability for the minimal scaling factor and the construction of smooth scaling functions. Additionally, the practical performance of our framework is demonstrated through real-world experiments with the Franka Research 3 robotic manipulator, showcasing its effectiveness in achieving collision-free motion. The insights and tools provided in this work will significantly contribute to the safety and reliability of dynamical systems in robotics, autonomous driving, and other safety-critical applications.

\appendix
\subsection{Proof of Lemma~\ref{lemma:uniqueness}}
\label{sec:proof_uniqueness}
If $\mathcal{F}_A$ is strictly convex in $p$ for each $\theta$, then \eqref{eq:opt} has a unique primal variable $p^\star$, as \eqref{eq:opt} is feasible and has a strictly convex objective function. If $\mathcal{F}_B$ is strictly convex in $p$ for each $\theta$, we prove by contradiction that the optimal primal variable $p^\star$ is unique. Assume that there exist two minimizers $p_1$ and $p_2$ (with $p_1 \neq p_2$) such that they are feasible and achieve the same optimal value $\alpha^\star$, i.e., $\mathcal{F}_A (p_i, \theta) = \alpha^\star$ and $\mathcal{F}_B (p_i, \theta) \leq 1$ for $i=1,2$. Take $0<\eta<1$ and $p_3 = \eta p_1 + (1-\eta) p_2$. As $\mathcal{F}_A$ is convex in $p$, we have $\mathcal{F}_A (p_3, \theta) \leq \eta \mathcal{F}_A (p_1, \theta) + (1-\eta) \mathcal{F}_A (p_2, \theta) = \alpha^\star$. As $\mathcal{F}_B$ is strictly convex in $p$, we have $\mathcal{F}_B (p_3, \theta) < \eta \mathcal{F}_B (p_1, \theta) + (1-\eta) \mathcal{F}_B (p_2, \theta) = 1$. Then, there exists $\varepsilon > 0$ such that $\mathcal{B}(p_3, \varepsilon) \subset B$ where $\mathcal{B}(p_3, \varepsilon)$ is the $\ell_2$ ball centered at $p_3$ with a radius of $\varepsilon$. In addition, as $p_3 \in B$ and $A \intersect B = \varnothing$, we have $\mathcal{F}_A(p_3, \theta) > 1$ and $\frac{\partial \mathcal{F}_A}{\partial p}(p_3, \theta) \neq 0$ by Lemma~\ref{lemma:non_zero_grad}. Then, there must exist some $p' \in \mathcal{B}(p_3, \epsilon) \subset B$ such that $\mathcal{F}_A(p', \theta) < \mathcal{F}_A(p_3, \theta) \leq \alpha^\star$. This contradicts the assumption that $\alpha^\star$ is the optimal value. Therefore, the optimal primal variable $p^\star$ is unique, and the optimal dual variable $\lambda^\star$ is uniquely determined by \eqref{eq:kkt_p}.

\subsection{Proof of Theorem~\ref{thm:high_order_cd_alpha}}
\label{sec:proof_high_order_cd_alpha}
For a function $\psi$, we use $\psi^{(k)}$ as its $k$th-order derivative and $\psi^{[k]}$ as a short-hand notation for the collection of $\psi^{(0)}, ..., \psi^{(k)}$. Let $X = [p^{\star \top}, \lambda^\star]^\top$ where $p^\star$ and $\lambda^\star$ are the optimal primal and dual variable to \eqref{eq:opt}. We prove by mathematical induction the following claim: if $\mathcal{F}_A$ and $\mathcal{F}_B$ are $\mathcal{C}^{k+1}$ in $p$ and $\theta$, then: (a)
$X^{(k)} = F^k (N^{-1}, N^{[k-1]}, \Omega^{[k-1]})$ and $\alpha^{\star (k)} = G^k(p^{\star [k]}, \mathcal{F}_A^{[k]}(p^\star, \theta))$, where $F^k$ and $G^k$ are $\mathcal{C}^{\infty}$ functions in their arguments; (b) $X$ is $\mathcal{C}^k$ in $\theta$. The matrices $N$ and $\Omega$ are defined in \eqref{eq:implicit_eq}.

For $k = 1$, we already proved the claim in Theorem~\ref{thm:cd_alpha}. We had $X^{(1)} = N^{-1} \Omega = F^1 (N^{-1}, N^{[0]}, \Omega^{[0]})$ and $\alpha^{\star (1)} = \frac{\partial \mathcal{F}_A}{\partial p}(p^\star, \theta) \frac{\partial p^\star}{\partial \theta}(\theta) + \frac{\partial \mathcal{F}_A}{\partial \theta}(p^\star, \theta) = G^1(p^{\star [1]}, \mathcal{F}_A^{[1]}(p^\star, \theta))$. $X$ is $\mathcal{C}^1$ in $\theta$ by the implicit function theorem.

Assume that the claim holds for $k = k_0$ with $k_0 \geq 1$, i.e., $X^{(k_0)} = F^{k_0} (N^{-1}, N^{[k_0-1]}, \Omega^{[k_0-1]})$ and $\alpha^{\star (k_0)} = G^{k_0}(p^{\star [k_0]}, \mathcal{F}_A^{[k_0]}(p^\star, \theta))$, and $X$ is $\mathcal{C}^{k_0}$ in $\theta$.

For $k = k_0 +1$, assume that $\mathcal{F}_A$ and $\mathcal{F}_B$ are $\mathcal{C}^{k_0+2}$ (by the condition of Theorem~\ref{thm:high_order_cd_alpha}). Noting that $\frac{\partial N^{-1}}{\partial \theta_i} = - N^{-1} \frac{\partial N}{\partial \theta_i} N^{-1}$ where $\theta_i$ is the $i$th component of $\theta$, we have again $X^{(k_0 +1)} = F^{k_0 + 1} (N^{-1}, N^{[k_0]}, \Omega^{[k_0]})$. Differentiating $\alpha^{\star (k_0)}$ w.r.t. $\theta$, we have $\alpha^{\star (k_0 +1)} = G^{k_0+1}(p^{\star [k_0+1]}, \mathcal{F}_A^{[k_0+1]}(p^\star, \theta))$. $N^{[k_0]}$ and $\Omega^{[k_0]}$ are continuous in $\theta$, because $X$ is $\mathcal{C}^{k_0}$ in $\theta$ and $\mathcal{F}_A$ and $\mathcal{F}_B$ are $\mathcal{C}^{k_0+1}$ in $p$ and $\theta$. Therefore, $X^{(k_0 +1)}$ is continuous in $\theta$, i.e., $X$ is $\mathcal{C}^{k_0 + 1}$ in $\theta$. 

Finally, as $\alpha^{\star (k)} = G^k(p^{\star [k]}, \mathcal{F}_A^{[k]}(p^\star, \theta))$, $p^\star$ is $\cal{C}^k$ in $\theta$, and $\mathcal{F}_A$ is $\cal{C}^{k+1}$ in $p$ and $\theta$, we have the desired result, i.e., $\alpha^\star$ is $\cal{C}^k$ in $\theta$.

\subsection{Details on the Scaling Functions of Convex Polygons}
\label{sec:log_sum_exp_suggestions}
Let $G = [a_1', ..., a_N']^\top \in \mathbb{R}^{N \times n_p}$. Given $p$, take $c = \max_{i=1}^N \left(a_i'^\top p + b_i'\right)$ and we have $\mathcal{F}_{P_r} (p,\theta) = \frac{1}{\kappa}\ln \left( \frac{1}{N} \sum_{i=1}^N e^ { \kappa \left(a_i'^\top p + b_i'- c \right) } \right)+c+1$ which avoids the overflow of the exponentials as $a_i'^\top p + b_i'  - c \leq 0$ for all $i = 1,..., N$ (known as the log-sum-exp trick). Then, let $z = [e^ { \kappa \left(a_1'^\top p + b_1' - c\right) }, ..., e^ { \kappa \left(a_N'^\top p + b_N'- c \right) }]^\top \in \mathbb{R}^{N}$, and we have 
\begin{align}
    \mathcal{F}_{P_r} (p,\theta) &= \frac{1}{\kappa}\ln (1^\top z) + c - \frac{1}{\kappa}\ln N + 1, \\
    \frac{\partial \mathcal{F}_{P_r}}{\partial p}(p, \theta) &= \frac{z^\top G}{ 1^\top z }, \\
    \frac{\partial^2 \mathcal{F}_{P_r}}{\partial p^2}(p, \theta) &= \frac{\kappa G^\top \operatorname{diag}(z) G}{1^\top z} - \frac{\kappa G^\top z z^\top G}{(1^\top z)^2},
\end{align}
which shows the Hessian of \eqref{eq:log_sum_exp} is on the order of $O(\kappa)$. Therefore, a reasonable choice of $\kappa$ along with the log-sum-exp trick can avoid numerical instabilities. Theorems~\ref{thm:cd_alpha} and \ref{thm:high_order_cd_alpha} require the Hessian of either $\mathcal{F}_A$ or $\mathcal{F}_B$ to be positive definite, but \eqref{eq:log_sum_exp} is merely convex, whose Hessian is only positive semidefinite. Therefore, our framework does not deal with polytope-polytope collision avoidance.

\subsection{Details of the Calculation of $\frac{\partial^2 \alpha^\star}{\partial \theta^2}$}
\label{sec:cal_of_hessian}
From \eqref{eq:alpha_gradient}, we get $
    \frac{\partial^2 \alpha^\star}{\partial \theta^2}(\theta) = \frac{\partial^2 \mathcal{F}_A}{\partial \theta^2}(p^\star, \theta) + \frac{\partial p^\star}{\partial \theta}^\top  \frac{\partial^2 \mathcal{F}_A}{\partial p^2}(p^\star, \theta) \frac{\partial p^\star}{\partial \theta}  \allowdisplaybreaks
    + \frac{\partial^2 \mathcal{F}_A}{\partial \theta \partial p} (p^\star, \theta) \frac{\partial p^\star}{\partial \theta} + \frac{\partial p^\star}{\partial \theta}^\top \frac{\partial^2 \mathcal{F}_A}{\partial p \partial \theta} (p^\star, \theta) \allowdisplaybreaks 
    +  \sum_{i=1}^{n_p} \frac{\partial \mathcal{F}_A}{\partial p_i}(p^\star, \theta) \frac{\partial^2 p_i^\star}{\partial \theta^2}$
where $\frac{\partial p^\star}{\partial \theta}$ can be obtained from \eqref{eq:gradient_implicit}. To obtain $\frac{\partial^2 p_i^\star}{\partial \theta^2}$, we propose the following approach. As $N$ in \eqref{eq:gradient_implicit} is nonsingular and $\frac{\partial N^{-1}}{\partial \theta_i} = - N^{-1} \frac{\partial N}{\partial \theta_i} N^{-1}$, we have 
\begin{equation}\label{eq:hessian_implicit}
    \begin{bmatrix}
        \frac{\partial^2 p^\star}{\partial \theta_j \partial \theta} \\ \frac{\partial^2 \lambda^\star}{\partial \theta_j \partial \theta}
    \end{bmatrix} = - N^{-1} \frac{\partial N}{\partial \theta_j} N^{-1} \Omega + N^{-1} \frac{\partial \Omega}{\partial \theta_j}
\end{equation}
for $j = 1, ..., n_{\theta}$. Then, we only need to stack $\frac{\partial^2 p^\star}{\partial \theta_j \partial \theta}$ and extract $\frac{\partial^2 p_i^\star}{\partial \theta^2}$.

\subsection{Details on the Pick-and-Place Example}
\label{sec:pick_and_place_details}
Following Table~\ref{tab:scaling_functions}, we use a plane for the table surface (and the space under it) and four convex polygons (with $\kappa=80$ in \eqref{eq:log_sum_exp}) for the four vertical sides of the box (measuring 33.2~\si{cm} $\times$ 28.7~\si{cm} $\times$ 48.5~\si{cm}). The three links of interest of the robotic manipulator in Fig.~\ref{fig:exp1_bounding_boxes} are modeled by ellipsoids. For each ellipsoid, it needs to avoid the potential collision with five different obstacles, which leads to 15 CBFs. Each ellipsoid-obstacle pair satisfies the conditions of Theorems~\ref{thm:cd_alpha} and \ref{thm:high_order_cd_alpha}, and thus generates one CBF. More specifically, the ellipsoid provides the scaling function $\mathcal{F}_A$ in \eqref{eq:opt} and the obstacle provides $\mathcal{F}_B$. To solve \eqref{eq:opt}, we apply Theorem~\ref{thm:rimon_solution} for the ellipsoid-ellipsoid case. The ellipsoid-plane case reduces to a QP with one linear inequality constraint and has a closed-form solution. The ellipsoid-polytope case is an exponential cone problem, and we solve it with SCS \cite{odonoghue2016conic}.

\bibliographystyle{IEEEtran}
\bibliography{master}

\begin{IEEEbiography}[{\includegraphics[width=1in,height=1.25in,clip,keepaspectratio]{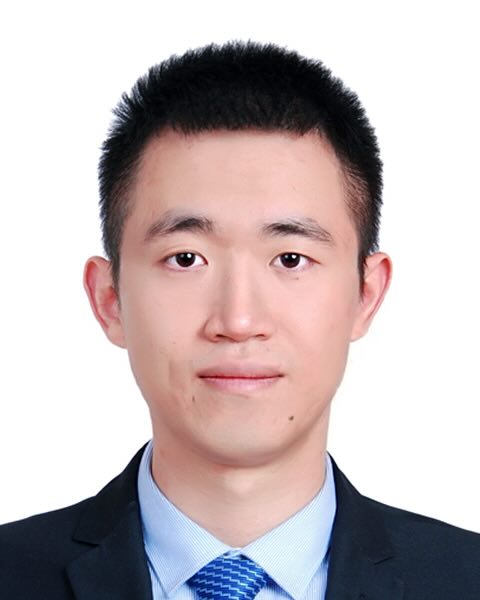}}]{Shiqing Wei} (Graduate Student Member, IEEE) received his BEng and MEng degrees in Mechanical Engineering from Shanghai Jiao Tong University, China, in 2018 and 2021, respectively. He also obtained a Dipl. Ing. with a minor in Applied Mathematics from École des Mines de Paris (now MINES Paris -- PSL), France, in 2019 through a double-degree program. Currently, he is pursuing a Ph.D. in Electrical Engineering at New York University Tandon School of Engineering in Brooklyn, NY, USA. His research interests include safe control synthesis, learning-based control methods, and verification of neural networks.
\end{IEEEbiography}

\begin{IEEEbiography}[{\includegraphics[width=1in,height=1.25in,clip,keepaspectratio]{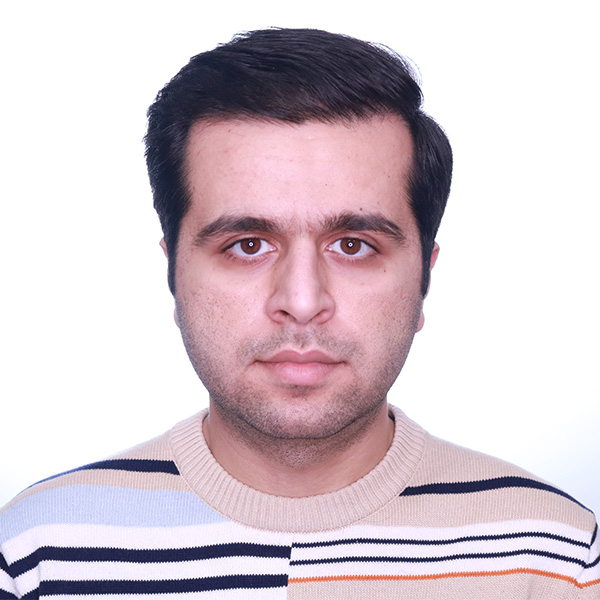}}]{Rooholla Khorrambakht} (Student Member, IEEE) received his Master's degree in Mechatronics Engineering from K.N.Toosi University of Technology, Iran, in 2020. Currently, he is pursuing a Ph.D. in Electrical Engineering at New York University Tandon School of Engineering in Brooklyn, NY, USA. His research interests include planning/control using learned world and reward models and safety filters for reliable deployment and online exploration. 
\end{IEEEbiography}

\begin{IEEEbiography}[{\includegraphics[width=1in,height=1.25in,clip,keepaspectratio]{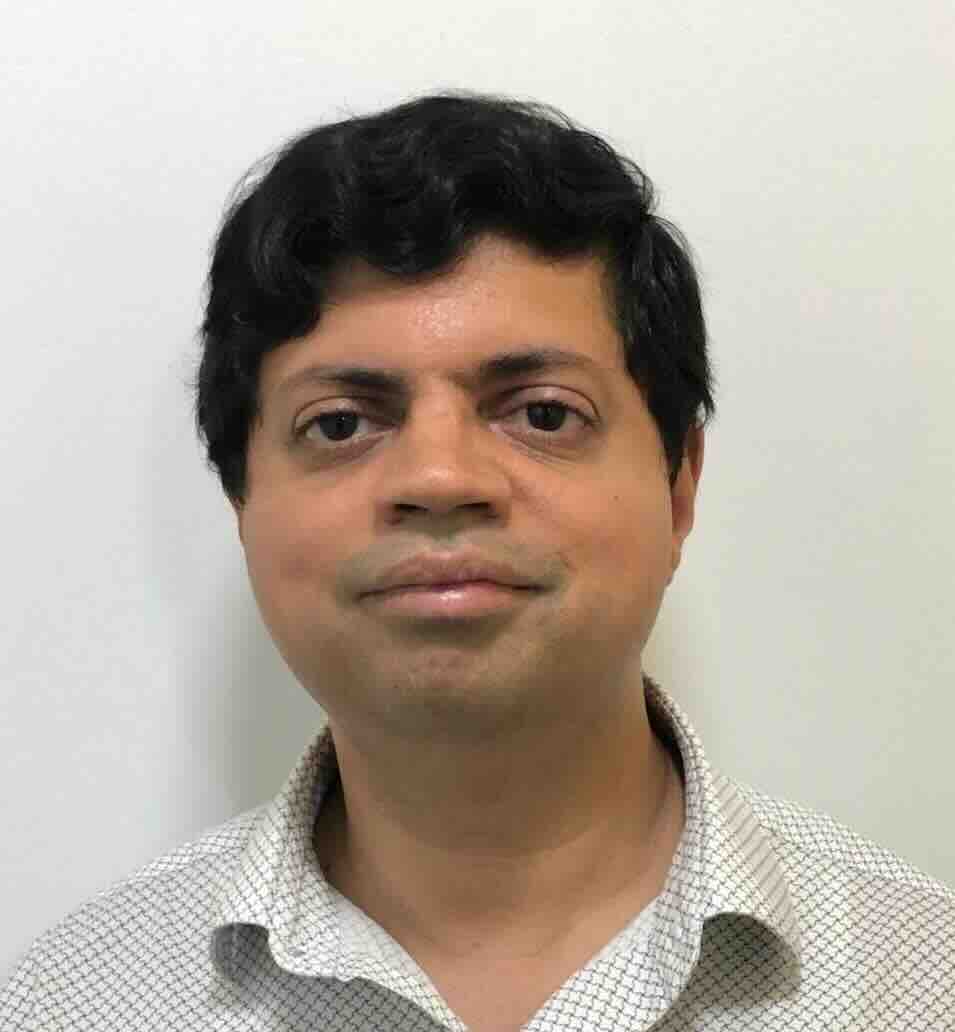}}]{Prashanth Krishnamurthy} (Member, IEEE) received B.Tech. degree in electrical engineering from Indian Institute of Technology, Chennai, India in 1999, and M.S. and Ph.D. degrees in electrical engineering from Polytechnic University (now NYU) in 2002 and 2006, respectively. He is a Research Scientist and Adjunct Faculty with Dept. of ECE at NYU Tandon School of Engineering. He has co-authored over 175 journal and conference papers. He has also co-authored the book ``Modeling and Adaptive Nonlinear Control of Electric Motors'' published by Springer Verlag in 2003. His research interests include autonomous vehicles and robotic systems, multi-agent systems, nonlinear control, resilient control,   machine learning, embedded systems, cyber-physical systems and cyber-security, and decentralized and large-scale systems.
\end{IEEEbiography}

\begin{IEEEbiography}[{\includegraphics[width=1in,height=1.25in,clip,keepaspectratio]{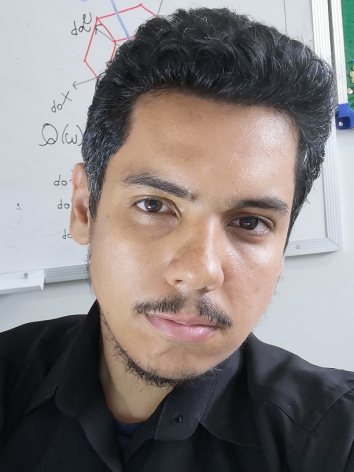}}]{Vinicius Mariano Gon\c{c}alves} received his B.S. degree in Control and Automation Engineering and his Ph.D. degree in Electrical Engineering from the Universidade Federal de Minas Gerais (UFMG), Belo Horizonte, Brazil, in 2011 and 2014, respectively. Since 2017, he has been a Tenured Assistant Professor in the Department of Electrical Engineering at UFMG. From 2022 to 2024, he served as a Research Associate at the Center for Artificial Intelligence and Robotics (CAIR) at New York University Abu Dhabi. His current research interests include control, robotics, optimization, and discrete-event systems.
\end{IEEEbiography}

\begin{IEEEbiography}[{\includegraphics[width=1in,height=1.25in,clip,keepaspectratio]{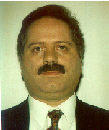}}]{Farshad Khorrami} (Fellow, IEEE) received the Bachelors degrees in mathematics and electrical engineering in 1982 and 1984 respectively from The Ohio State University. He also received the Master’s degree in mathematics and Ph.D. in electrical engineering in 1984 and 1988 from The Ohio State University, Columbus, Ohio, USA. 
He is currently a professor of Electrical and Computer Engineering Department at NYU, Brooklyn, NY where he joined as an assistant professor in Sept. 1988. His research interests include adaptive and nonlinear controls, robotics and automation, autonomous vehicles, cyber security for CPS, embedded systems security, machine learning, and large-scale systems and decentralized control. He has published over 370 refereed journal and conference papers in these areas. His book ``Modeling and Adaptive Nonlinear Control of Electric Motors'' was published by Springer  Verlag in 2003. 
He also has fifteen U.S. patents on novel smart micro-positioners, control systems, cyber security, and wireless sensors and actuators. 
He has developed and directed the Control/Robotics  Research  Laboratory at Polytechnic University (Now NYU) and the Co-Director of the Center in AI and Robotics (CAIR) at NYU Abu Dhabi.
Dr. Khorrami has also commercialized UAVs as well as development of auto-pilots for various autonomous vehicles. His research has been supported by the DOE, ARO, NSF, ONR, DARPA, ARL,  AFRL and several corporations. He has served as conference organizing committee member of several international conferences. 
\end{IEEEbiography}

\end{document}